\definecolor{DarkGreen}{rgb}{0.1,0.5,0.1}
\definecolor{DarkRed}{rgb}{0.5,0.1,0.1}
\definecolor{DarkBlue}{rgb}{0.1,0.1,0.5}
\definecolor{Gray}{rgb}{0.2,0.2,0.2}
\newcommand{\DUser}{\mathcal{D}}
\newcommand{\EE}{\mathbb{E}}
\newcommand{\PP}{\mathbb{P}}
\theoremstyle{plain}
\newtheorem{theorem}{Theorem}
\newtheorem{proposition}[theorem]{Proposition}
\newtheorem{lemma}[theorem]{Lemma}
\newtheorem{corollary}[theorem]{Corollary}
\title{\Large{Incentivizing High-Quality Content in Online Recommender Systems}}
\author{Xinyan Hu$^*$}
\author{Meena Jagadeesan$^*$} 
\author{Michael I. Jordan}
\author{Jacob Steinhardt}
\affil{University of California, Berkeley}
\date{}
\begin{document}

\makeatletter
\newcommand\footnoteref[1]{\protected@xdef\@thefnmark{\ref{#1}}\@footnotemark}
\makeatother

\maketitle 
\begin{abstract}
In content recommender systems such as TikTok and YouTube, the platform's recommendation algorithm shapes content producer incentives. Many platforms employ online learning, which generates intertemporal incentives, since content produced today affects recommendations of future content. We study the game between producers and analyze the content created at equilibrium. We show that standard online learning algorithms, such as Hedge and EXP3, unfortunately incentivize producers to create low-quality content, where producers' effort approaches zero in the long run for typical learning rate schedules. Motivated by this negative result, we design learning algorithms that incentivize producers to invest high effort and achieve high user welfare. At a conceptual level, our work illustrates the unintended impact that a platform's learning algorithm can have on content quality and introduces algorithmic approaches to mitigating these effects. 
\end{abstract}

\section{Introduction}
Digital content recommendation platforms have given rise to a creator economy \citep{creator-economy-article}, where content producers strategically adapt the quality of the content they create in response to the platform’s algorithm. Although incentives for content creation have been studied primarily in a static setting \citep[see, e.g.,][]{BTK17, BT18, JGS22, concurrent}, \textit{learning} introduces an intertemporal aspect to producer incentives. This is because the content produced today affects the recommendations of future content via learning updates of the recommendation algorithm. 

\def\thefootnote{*}\footnotetext{Equal contribution}\def\thefootnote{\arabic{footnote}}

The resulting intertemporal incentives can impact how much effort producers put into creating high-quality content. If learning is too slow, then good content will not be rewarded until after a long delay; as a result, myopic or near-myopic producers may never put in significant effort, and their content will be low-quality. The learning dynamics may even incentivize producers to vary their effort over time. For instance, producers who frequently upload new content (e.g., influencers on YouTube and TikTok, or artists who release many albums) can strategically modulate their effort in real-time.

Motivated by the above phenomena, we study a game-theoretic formulation of the producers' time-varying incentives arising from the recommendation platform's learning algorithm. We model the platform's learning algorithm as an online learning or bandit algorithm \citep{OCObook}, which recommends a single producer at each time step. To model producer incentives, we consider a game between producers who compete for recommendations and strategically choose what content to create over time, based on the platform's algorithm. Our focus is on how the platform's algorithm affects the equilibrium content quality (where quality is measured in terms of both producer effort and user welfare), as well as how the equilibrium quality changes over time. Our results, summarized in Table \ref{tab:bounds}, are as follows.

First, we show that standard online learning algorithms (specifically, \texttt{LinHedge} and \texttt{LinEXP3} \citep{NO20}) incentivize producers to create low-quality content. We upper bound the producer effort as well as the user welfare at equilibrium for both \texttt{LinHedge} (Theorem \ref{thm:upperboundfullinfo}) and \texttt{LinEXP3} (Theorem \ref{thm:upperboundbanditgeneralP}). This upper bound shows that under typical learning rate schedules the quality approaches zero as time goes to infinity. The intuition is that if the platform's learning update is too slow, producers will put in low effort due to temporal discounting.

Motivated by these negative results, we design alternative algorithms that incentivize high-quality content. These algorithms draw inspiration from content moderation \citep{G18}, and demonstrate the benefits of ``punishing'' creators for producing low-quality content. First, we show that a simple modification to \texttt{LinHedge} achieves  $\Theta(1)$ producer effort at equilibrium even as time goes to infinity (Theorem \ref{thm:punishment}). Next, we turn to user welfare, which is a more challenging objective that additionally requires content to align with the preferences of a heterogeneous population of users. Our algorithm \texttt{PunishUserUtility} (Algorithm \ref{alg:punish-user-utility}) takes advantage of producer specialization to achieve a higher user welfare (Theorem \ref{thm:equilibriumcharacterizationpunishuserutility}) and even matches the optimal welfare in limiting cases (Corollary \ref{cor:optimality}). 

En route to proving these results, we develop and leverage technical tools for learning in strategic environments which could be of broader interest. First, we develop tools that upper bound the change in the behavior of an online learning algorithm as the reward functions change. This enables us to compare the probability of an arm being pulled across two settings of rewards (Lemma \ref{lemma:main}). Second, we show an algorithm can leverage ``punishment'' at each time step to shape how agents behave, which enables the algorithm to achieve its own objective (Section \ref{sec:result-part2} and Section \ref{sec:welfare}). The punishment criteria need to be carefully set, especially due to the delayed effect of punishment on non-myopic agents.

\begin{table*}[t]
\caption{Bounds on content quality at equilibrium at each time step $t < T$. We measure quality by producer effort $\|p^t\|$ (top row) and user welfare $\EE[\langle u^t, p_{A^t}^t\rangle]$ (bottom row). For the standard algorithms \texttt{LinEXP3} or \texttt{LinHedge}, the equilibrium quality is small for large $t$. \texttt{PunishLinDirectionHedge} (our first incentive-aware algorithm) leads to high producer effort. \texttt{PunishUserUtility} (our more sophisticated incentive-aware algorithm) generates high user welfare.}
\label{tab:bounds}
\begin{center}
\begin{small}
\begin{sc}
\setlength{\tabcolsep}{3pt} 
\begin{tabular}{lcccc}
\toprule
Algorithm & \texttt{LinHedge}  & \texttt{LinEXP3} & \texttt{PunishLinDirectionHedge} & \texttt{PunishUserUtility} \\
& ($\eta_t = \tilde{O}(\frac{1}{\sqrt{t}})$) & ($\eta_t = \tilde{O}(\frac{1}{\sqrt{t}})$) & (\cref{alg:punishlindirectionhedge}) & (\Cref{alg:punish-user-utility})\\\hline
Producer & $O\left( \left(\frac{1}{\sqrt{t}} \right)^{\frac{1}{c-1}} \right)$ & $O\left( \left( \frac{P}{\sqrt{t}} \right)^{\frac{1}{c}} \right)$ & $\Theta \left( \left(\frac{1}{P}\right)^{\frac{1}{c}} \right)$ & N/A \\
Effort 
& (\Cref{thm:upperboundfullinfo}) & (Theorem \ref{thm:upperboundbanditgeneralP}) &  (Theorem \ref{thm: punishlindirectionhedge}) & 
\\\hline
User & $O\left( \left( \frac{1}{\sqrt{t}} \right)^{\frac{1}{c-1}} \right)$ & $O\left( \left( \frac{P}{\sqrt{t}} \right)^{\frac{1}{c}} \right)$ & $\Theta \left(\left( \frac{1}{P}\right) ^{\frac{1}{c}} \|\EE[u]\|\right)$ & $\Omega\left(\|\EE[u]\|\right)$ ($1$ as $c\rightarrow \infty$) \\
Welfare& (\Cref{thm:upperboundfullinfo}) & (Theorem \ref{thm:upperboundbanditgeneralP}) & (\Cref{cor: punishlindirectionhedge}) & (\Cref{sec:welfare})\\\hline
\end{tabular}
\end{sc}
\end{small}
\end{center}
\end{table*}

\subsection{Related Work}

Our work connects to research threads on societal effects of recommender systems, learning in Stackelberg gamee, and online learning algorithms. 

\paragraph{Societal effects of recommender systems.} 
In an emerging line of work, \citet{BTK17, BT18, JGS22, concurrent} have proposed game-theoretic models for producer incentives for content creation in recommender systems. We build on the $D$-dimensional model of \citet{JGS22} where producers select the quality level (magnitude) of their content in addition to the genre (direction). Several recent works introduce learning into the framework in different ways. For example, \citet{BRT20, YLNWX23} incorporate that producers running learning algorithms to converge to an equilibrium. \citet{ER23} incorporate that the platform runs retraining, but assumes that both the platform and producers are myopic; in contrast, our work captures that producers non-myopically react to the platform's learning algorithm. 

Our work closely relates to a line of work \citep{DBLP:conf/www/GhoshM11, DBLP:conf/innovations/GhoshH13, DBLP:conf/aaai/LiuH18} that aims to incentivize the creation of high-quality content. 
These works assume each producer creates a single digital good with fixed quality across time; thus, the platform's learning problem is cast within the stochastic bandit framework. In contrast, we allow producers to create new content at each time step and thus can vary their quality over time, which leads our problem to an \textit{adversarial} bandit framework. Another difference is we allow for $D$-dimensional content vectors which enables us to capture producer specialization, whereas they focus only on $1$-dimensional content quality.

Other aspects studied in this research thread include \textit{participation decisions by creators} \\ \citep{mladenov2020optimizing, BT23}, \textit{incentivizing exploration of bandit algorithms}  \citep{KMP13}, \textit{strategic user behavior in recommender systems} \citep{HMP23}, and  \textit{preference shaping} \citep{CDHR22, DM22}. 

\paragraph{Learning in Stackelberg games.}  Our model can be viewed as a Stackelberg game where the leader (platform) wants to learn the best responses of competing, non-myopic content followers (producers) from repeated interactions. 
There has been a rich literature on learning in Stackelberg games, including security games \citep{balcan2015commitment, HLNW22}, strategic classification \citep{hardt16strat, dong2018strategic, chen2020learning, ZMSJ21, HLNW22}, and contract theory \citep{ZBYWJJ22}. However, most of these works focus on a single follower who (myopically) best-responds to the leader at each round or makes a gradient update, while our work allows for non-myopic followers who compete with each other (see Section \ref{subsec:modeldiscussion} for additional discussion).

One notable exception is \citet{HLNW22}, who consider non-myopic agents. Interestingly, their approach---delaying the use of feedback from the followers in learning---fails to incentivize high-quality content in our setting. The fundamental difference is that~\citet{HLNW22} aim to disincentivize gaming whereas we aim to incentivize investment. Moreover, while \citet{HLNW22} focus on a single follower, our work studies competition between followers. \citet{CAK23} also studies non-myopic agents, but focuses on the case of a (single) \textit{fully} non-myopic follower. 

\paragraph{Online learning algorithms.} 
We build on the rich literature on \textit{online learning algorithms} \citep[see, e.g.,][]{OCObook}. Most relevant to our work is the \textit{adversarial linear contextual bandits framework} \citep{NO20}, which extends the stochastic linear contextual bandit setup of \citet{DBLP:conf/icml/AbeL99} and \citet{DBLP:conf/nips/Abbasi-YadkoriPS11} to adversarial losses. In comparison to classical adversarial contextual bandits \citep{DBLP:conf/icml/RakhlinS16, DBLP:conf/nips/SyrgkanisLKS16}, this framework places a linear structure on the losses. Our work builds on this framework and employs the linear variants of \texttt{EXP3} proposed by \citeauthor{NO20} that achieve $O(\sqrt{T})$ regret. We also consider a full-information version of this framework.

\section{Model}\label{sec:model}
To study dynamic interactions between the platform, producers, and users, we extend the model of \citet{JGS22} to an online learning framework. Users and content are embedded in the $D$-dimensional space $\mathbb{R}^D_{\ge 0}$, and a user $u$'s utility for consuming content $p$ is given by the inner product $\langle u, p \rangle$. There is a user distribution $\DUser$ over $\mathbb{R}^D_{\ge 0}$, where we assume for simplicity that the vectors in the support are all normalized unit vectors (i.e. $\|u\|_2 =1$ for all $u\in \text{supp}(\mathcal{D})$). There are $P \ge 1$ producers. 
In our dynamic setup, there are $T$ time steps, and each producer $j \in [P]$ chooses a sequence of content vectors, $\mathbf{p}_j = (p_j^1, \ldots, p_j^T) \in \left(\mathbb{R}^D_{\ge 0}\right)^T$, where $p_j^t \in \mathbb{R}^D_{\ge 0}$ denotes the content created at time step $t$. We describe the details of our model below, deferring further discussion of the model to Section \ref{subsec:modeldiscussion}.

\paragraph{Online learning framework.} We model the platform's learning task using adversarial linear contextual bandits \citep{NO20}, where arms are producers and contexts are users. In particular, at each time step $t$, a single user $u^t \sim \DUser$ arrives on the platform and the platform sees the vector $u^t$. Before observing any information about the producer content vectors $p_j^t$ for $j \in [P]$, the platform assigns the user $u^t$ to the producer $A^t \in [P] \cup \left\{\bot\right\}$ according to some online learning algorithm $\texttt{ALG}$. The platform receives a reward equal to the user utility $\langle p_{A^t}^t, u^t\rangle$ and gets some feedback from the user about the producers' content that can be used for future recommendations.

The platform's recommendations are based on the history of feedback in the online learning framework. We study both the $\textit{full-information}$ setting, where the platform observes all of the vectors $p_1^{t-1}, \ldots, p_P^{t-1}$ at the end of time step $t-1$, and the $\textit{bandit}$ setting, where the platform only observes feedback of the form $\langle p_{A^t}^{t-1}, u^{t-1}\rangle$. For full-information, the platform's algorithm has access to the history $\mathcal{H}_{\text{full}}^{t-1} = \left\{(j, \tau, u^{\tau}, A^{\tau}, p_j^{\tau}) \mid 1 \le \tau \le t-1, j \in [P]\right\}$\cref{footnote-history} at time step $t$. 
For bandits, the platform's algorithm has access to the history  $\mathcal{H}_{\text{bandit}}^{t-1} = \left\{(\tau, u^{\tau}, A^{\tau}, \langle p_{A^{\tau}}^{\tau}, u^{\tau}\rangle) \mid 1 \le \tau \le t-1\right\}$.
\footnote{\label{footnote-history}We initialize the history $\mathcal{H}^{0}$ so that $u^0 \sim \DUser$ is an arbitrary draw, $A^{\tau} = 1$, and $p_j^{0} = \vec{0}$ for all $j$.} 

\paragraph{Platform algorithms.} Some of the algorithms that we analyze are based on variants of \texttt{Hedge} and \texttt{EXP3} \citep{OCObook}, adapted to the adversarial contextual linear setting \citep{NO20}. We describe these algorithms (\texttt{LinHedge} and \texttt{LinEXP3}) below for completeness. The parameters of these algorithms are the learning rate schedule $\{\eta_t\}_{t=1}^T$ and the exploration parameter $\gamma \in(0,1)$. 
\begin{itemize}[leftmargin=*]
    \item \texttt{LinHedge} is the full-information analogue of \texttt{LinEXP3} algorithm introduced in \citet{NO20}. 
    Given learning rate parameters $\eta_1\ge \ldots\ge \eta_T > 0$ and $\gamma\in(0, 1)$, it selects the arm $A^t$ from the following probability distribution:
    \[\mathbb{P}[A^t = j \mid \mathcal{H}_{\text{full}}^{t-1}, u^t] = (1-\gamma) \frac{e^{\eta_t \sum_{\tau=0}^{t-1} \langle u^t, {p}^{\tau}_{j} \rangle}}{\sum_{j' \in [P]} e^{\eta_t \sum_{\tau=0}^{t-1} \langle u^t, {p}^{\tau}_{j'} \rangle}} + \frac{\gamma}{P}.\] \normalsize{} 
    The probability $\mathbb{P}[A^t = \bot \mid \mathcal{H}_{\text{full}}^{t-1}, u^t]$ is always equal to zero. 
\end{itemize}
\begin{itemize}[leftmargin=*]
\item \texttt{LinEXP3} \citep{NO20} 
operates in the bandit setting. It uses the same selection rule as \texttt{LinHedge}, except that it replaces $p_j^{\tau}$ with a non-negative, unbiased estimate $\hat{p}_j^{\tau}$.
\end{itemize}

\paragraph{Producer incentives.} After the platform commits to an online learning algorithm $\texttt{ALG}$, each producer $j \in [P]$ strategically selects $\mathbf{p}_j$ 
to maximize its own utility for the whole future time horizon, which means they are \textit{non-myopic} players in the game. In particular, the utility of a producer is a discounted cumulative utility across $T$ time steps. At each time step $t$, the producer $j$ earns utility $1$ if they are recommended, and pays a cost of production regardless. We consider production costs of the form $c(p) = \|p\|_2^c$, where $c \ge 1$ is an exponent, and we will use $\|\cdot\|$ to denote $\|\cdot\|_2$ for ease of exposition.
If a producer has the discount factor $\beta \in (0,1)$, then its expected utility is: 
\begin{equation}
    \label{eq:expectedutility}
    u_j(\mathbf{p}_j \mid \mathbf{p}_{-j}) = \mathbb{E}\Big[\sum_{t=1}^T \beta^t \left(\mathbbm{1}[A^t=j] - \|p_j^t\|^{c} \right)\Big],
\end{equation}
where $\mathbf{p}_{-j}$ denotes the sequences of content vectors chosen by other producers $j' \neq j$ and the expectation is over the randomness of the algorithm and the user vectors. 

For simplicity of analysis, we  assume that producers choose their content vectors $p_j^t$ for all $t$ at the start of the game. We thus study the mixed-strategy Nash equilibria of the resulting game between the $P$ producers, where producer $j \in [P]$ has action space is $(\mathbb{R}_{\ge 0}^D)^T$ and the utility function $u_j$.  

\paragraph{Performance metrics.} We evaluate the platform's algorithm according to two metrics: \textit{producer effort} $\|p_j^t\|$ (Sections \ref{sec:result-part1} and \ref{sec:result-part2}), and \textit{user welfare} $\mathbb{E}\left[\langle u^t, p_{A^t}^t\rangle\right]$ (Section \ref{sec:result-part1} and \ref{sec:welfare}).

\subsection{Model Discussion}\label{subsec:modeldiscussion}

Now that we have formalized our model, we discuss and justify our design choices in greater detail. 
Our model captures a stylized online content recommendation marketplace and is also an instance of a generalized Stackelberg game, as we describe below.

\paragraph{A stylized online content recommendation marketplace.} Our model incorporates key aspects of online recommender systems and the resulting creator economy. For example, our model of users and content as $D$-dimensional embeddings captures the multi-dimensionality of user preferences and content attributes. In fact, high-dimensional user and content embeddings are implicitly learned by many real world recommender systems, which use \textit{two-tower embedding models} based on matrix factorization \citep{HKV08}) or deep learning variants \citep{YYHCHKZWC19, YYCHLWXC20}. Moreover, producers choosing new content embeddings every time step captures that real-world creators frequently upload new content (e.g.~to their content channel on YouTube\footnote{See \url{https://www.youtube.com/creators/how-things-work/getting-started/}.}). The cost function $c(p) = ||p||^{\beta}$ captures that real-world production costs scale with effort and are also independent of consumption because goods are digital \citep{DBLP:conf/www/GhoshM11, JGS22}. Finally, the assumption on platform information---i.e., that the platform make recommendations before observing content in the current round---captures that the platform faces a cold-start problem for new content due to its reliance on user feedback to learn about content \citep{WHCZT17}. 

However, our model does make several simplifying assumptions for mathematical tractability. For example, in our model, producers select all content at the beginning of the game, which from the online learning perspective means that the producers are \textit{oblivious} adversaries, who cannot adapt to the randomness of the algorithm.
Moreover, production costs are independent across rounds, which ignores the switching cost from creators changing from one type of content to another. Additionally, our model assumes that a single user arrives at every round, but this assumption is made for ease of exposition: our results directly generalize to the case where all users arriving at every round are drawn from the population. Finally, our negative results (Section \ref{sec:result-part1}) focus on standard no-regret learning algorithms such as LinHedge and LinEXP3 (see Section \ref{subsec:negativeresultsdiscussion} for additional discussion). Despite these simplifying assumptions, our model is rich enough to provide insight into how a platform's learning process impacts creator incentives over time.

\paragraph{A generalized Stackelberg game.} Our model is a (generalized) Stackelberg game, where the platform is the {leader} and each producer is a {follower}. Stackelberg games are a general paradigm with many applications beyond content recommender systems, ranging from security games \citep{KNFBSTJ18} to power grids \citep{YH16} to marketing \citep{LS17}. Our generalized Stackelberg game has several unique features: the leader (platform) is \textit{learning} over time, the followers (producers) are \textit{non-myopic} and take into account future time steps when choosing their actions, and \textit{multiple followers} compete with each other leading to a more complicated equilibrium analysis. Since our work is motivated by online recommender systems, our model places several additional structural assumptions on the game: the follower's action space is $D$-dimensional, the leader's action is to pick one of the followers, the leader's utility function is linear in the followers' actions, and increasing ``effort'' is costly to the follower but beneficial to the leader. An interesting direction for future work would be to extend our model and findings to more general Stackelberg games, relaxing the structural assumptions on the utility functions and action spaces.

\section{LinHedge/LinEXP3 Induces Low-Quality Content}\label{sec:result-part1}

In this section, we show that, in the presence of producer incentives, standard no-regret learning algorithms lead to poor performance along producer effort and user welfare. We prove that \texttt{LinHedge} and \texttt{LinEXP3} with a typical learning rate schedule incentivize producers to invest \textit{diminishing effort} $\|p\|$ in the long run. Our results imply the same upper bounds on the \textit{user welfare} $\EE\left[u^t, p_{A^t}^t\right]$, since the user utility is always upper bounded by the effort via the Cauchy-Schwarz inequality, so we focus on effort in this section for ease of exposition. 

Our bounds apply to \textit{any} mixed-strategy Nash equilibrium in the game between producers (see Appendix \ref{appsub:proofsmodel} for a proof of equilibrium existence). We first analyze \texttt{LinHedge} (Theorem \ref{thm:upperboundfullinfo}), and then extend our analysis to \texttt{LinEXP3} (Theorem \ref{thm:upperboundbanditgeneralP}).

\subsection{Upper bounds on producer effort}\label{subsec:idealized}

We show the following upper bounds on producer effort in the full-information setting and bandit setting.

\paragraph{Full-information setting.}
We establish an upper bound on producer effort at Nash equilibrium if the platform runs \texttt{LinHedge}. The bound depends on the time step $t$, learning rate schedule $\left\{\eta_t\right\}_{t \in T}$, discount factor $\beta$, exploration parameter $\gamma$, and cost function exponent $c$. 
\begin{restatable}{theorem}{thmupperboundfullinfo}
\label{thm:upperboundfullinfo}
Let $\beta \in (0,1)$ be the discount factor of producers, $c \ge 1$ be the cost function exponent, and $\DUser$ be any distribution over users. Suppose that the platform runs \texttt{LinHedge} with learning rate schedule $\eta_1\ge \eta_2\ge \cdots\ge \eta_T>0$ and $\gamma\in(0,1)$. At any mixed-strategy Nash equilibrium $(\mu_1, \ldots, \mu_P)$, for any producer $j \in [P]$, any strategy $\mathbf{p}_j \in \text{supp}(\mu_j)$ and any time $t$, the quality $\|p_j^t\|$ is upper bounded by \[\|p_j^t\| \le \left(\eta_{t+1}(1-\gamma)\frac{\beta(1-\beta^{T-t})}{1-\beta} \right)^{\frac{1}{c-1}}. \footnote{The bound for $c = 1$ (which is undefined as written) is zero if  $\eta_t (1-\gamma)  \frac{\beta(1-\beta^{T-t})}{1-\beta}  < 1$ and $\infty$ otherwise.} \]
\end{restatable}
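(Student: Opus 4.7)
The plan is a one-step deviation argument. Fix any mixed-strategy Nash equilibrium $(\mu_1,\ldots,\mu_P)$, any producer $j$, any $\mathbf{p}_j\in\text{supp}(\mu_j)$, and any round $t$. Consider the alternative strategy $\tilde{\mathbf{p}}_j$ that agrees with $\mathbf{p}_j$ at every round $s\neq t$ and sets $\tilde{p}_j^t=\vec{0}$. Since every pure strategy in the support of $\mu_j$ must be a best response against $\mu_{-j}$, we have $u_j(\tilde{\mathbf{p}}_j\mid\mu_{-j})-u_j(\mathbf{p}_j\mid\mu_{-j})\le 0$, and my goal is to extract the claimed bound from this single inequality.

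Two structural observations reduce the utility difference to a manageable form. First, the \texttt{LinHedge} sampling probability at round $s$ is a deterministic function of $u^s$ and the history $\{p_{j'}^{\tau}\}_{\tau\le s-1,j'}$, so the distribution of $A^s$ is unchanged by the deviation whenever $s\le t$; only the rounds $s>t$ and the cost at round $t$ contribute to the utility difference. Second, for $s>t$ the only change is that the exponent $\eta_s\sum_{\tau\le s-1}\langle u^s,p_j^{\tau}\rangle$ drops by $\eta_s\langle u^s,p_j^t\rangle\ge 0$ (nonnegative because all vectors lie in $\mathbb{R}^D_{\ge 0}$). Writing $\pi_j^s:=\PP[A^s=j\mid\mathcal{H}_{\text{full}}^{s-1},u^s]=(1-\gamma)\sigma_j^s+\gamma/P$ for the softmax summand $\sigma_j^s$, the direct computation
\[
\sigma_j^s(p_j^t)-\sigma_j^s(\vec{0})=\frac{Z(a-b)}{(a+Z)(b+Z)},\qquad a-b\le \eta_s\langle u^s,p_j^t\rangle\cdot a,
\]
where $a,b$ denote the two values of $e^{\eta_sW_j^s}$ and $Z$ collects the other arms, together with $\|u^s\|=1$, yields the pointwise Lipschitz bound $\pi_j^s(p_j^t)-\pi_j^s(\vec{0})\le(1-\gamma)\eta_s\|p_j^t\|$. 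Being pointwise in both the other producers' contents and the user draws, this bound survives taking expectation under $\mu_{-j}$ and $\mathcal{D}$.

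Putting the pieces together,
\[
0\;\ge\;u_j(\tilde{\mathbf{p}}_j\mid\mu_{-j})-u_j(\mathbf{p}_j\mid\mu_{-j})\;\ge\;\beta^t\|p_j^t\|^c-(1-\gamma)\|p_j^t\|\sum_{s=t+1}^T\beta^s\eta_s.
\]
Using monotonicity of the learning rate ($\eta_s\le\eta_{t+1}$ for $s\ge t+1$) and summing the geometric series gives $\sum_{s=t+1}^T\beta^s\eta_s\le\eta_{t+1}\beta^{t+1}(1-\beta^{T-t})/(1-\beta)$. When $\|p_j^t\|>0$, dividing through by $\beta^t\|p_j^t\|$ produces $\|p_j^t\|^{c-1}\le(1-\gamma)\eta_{t+1}\beta(1-\beta^{T-t})/(1-\beta)$, and taking the $(c-1)$-th root recovers the bound; when $\|p_j^t\|=0$ the conclusion is trivial. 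For $c=1$, the same inequality reads $1\le(1-\gamma)\eta_{t+1}\beta(1-\beta^{T-t})/(1-\beta)$ whenever $\|p_j^t\|>0$, which yields the dichotomy in the footnote.

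The only real subtlety is quantifying the softmax's sensitivity to a single producer's single-round action cleanly enough that the final bound depends on $\eta_{t+1}$ rather than on something like $\max_{s>t}\eta_s\cdot(T-t)$; the Lipschitz estimate above does exactly this, and decreasing learning rates then convert the sum into the claimed geometric factor. Mixed strategies and the full-information history cause no extra trouble, since the inequality $\pi_j^s(p_j^t)-\pi_j^s(\vec{0})\le(1-\gamma)\eta_s\|p_j^t\|$ holds pointwise and all other randomness is absorbed by the outer expectation in $u_j$.
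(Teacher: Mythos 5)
Your proposal is correct and follows essentially the same route as the paper: a one-step deviation that zeroes out $p_j^t$, a pointwise Lipschitz bound of the form $\pi_j^s(p_j^t)-\pi_j^s(\vec 0)\le(1-\gamma)\eta_s\|p_j^t\|$ on the \texttt{LinHedge} selection probability, and then monotonicity of $\eta_s$ plus the geometric series to get the $\eta_{t+1}\beta(1-\beta^{T-t})/(1-\beta)$ factor. Your inline softmax computation is just a specialized, self-contained version of the paper's Lemma \ref{lemma:main} (via Lemma \ref{lemma:main-ingredient}), so the argument matches the paper's proof in structure and in the final inequality.
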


To interpret the bound, we apply typical learning rate schedules for no-regret algorithms. First, for the standard decaying learning rate schedule $\eta_t = {O}(1/\sqrt{t})$, Theorem \ref{thm:upperboundfullinfo} implies $\|p_j^t\| = t^{-\Omega(1)}$. This reveals that the equilibrium quality will be low in the long run and in fact approach zero as $t\rightarrow \infty$. For the standard fixed learning rate schedule, $\eta_t = \eta = {O}(1/\sqrt{T})$,Theorem \ref{thm:upperboundfullinfo} implies $\|p_j^t\| = T^{-\Omega(1)}$ for each $t\le T$. This similarly vanishes as $T \to \infty$, and is actually even more pessimistic than the bound for the decaying learning rate schedule.

The intuition for why producers create low-quality content is that a small learning rate reduces the impact of a producer's action on their future probability of being recommended. Thus, when the producer utility is discounted, the cost of putting in effort at a given time step outweighs the benefit of being recommended a bit more frequently in the future. 

Our bound also illustrates how the the discount factor $\beta$ and the cost function exponent $c$impact content quality. First, the bound decreases in $\beta$, approaching zero as $\beta \rightarrow 0$. The intuition is that since content created today affects recommendations only in the future, more myopic producers have less incentive to exert effort at the current round. 
Second, the bound increases in $c$, approaching $0$ as $c \rightarrow 1$ and approaching $1$ as $c \rightarrow \infty$. The intuition is a larger $c$ makes it cheaper to create content with norm $\|p\| < 1$. 

\paragraph{Extension to bandit setting.}
We establish an analogue of Theorem \ref{thm:upperboundfullinfo} for \texttt{LinEXP3}, which similarly implies that typical learning rate schedules cause low-quality content at equilibrium. In particular, the quality decreases at a rate $t^{-\Omega(1)}$ for $\eta_t = {O}(1/\sqrt{t})$, and at rate $T^{-\Omega(1)}$ for $\eta_t = {O}(1/\sqrt{T})$. 

\begin{restatable}{theorem}{thmupperboundbanditgeneralP}
\label{thm:upperboundbanditgeneralP}
Let $\beta \in (0,1)$ be the discount factor of producers, $c \ge 1$ be the cost function exponent, and $\DUser$ be any distribution over users. Suppose that the platform runs \texttt{LinEXP3} with learning rate schedule $\eta_1\ge \eta_2\ge \cdots\ge \eta_T>0$ and $\gamma\in(0,1)$. At any mixed-strategy Nash equilibrium $(\mu_1, \ldots, \mu_P)$, for any producer $j \in [P]$, any strategy $\mathbf{p}_j \in \text{supp}(\mu_j)$ and any time $t$, the quality $\|p_j^t\|$ is at most
\[ \left(\eta_{t+1}P(1-\gamma)\frac{\beta^{1+\frac{1}{c}}(1-\beta^{T-t}(T-t+1)+\beta^{T-t+1}(T-t))}{(1-\beta)^{2+\frac{1}{c}}}\right)^{\frac{1}{c}}.
\]
\end{restatable}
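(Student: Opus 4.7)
The plan is to adapt the template of \Cref{thm:upperboundfullinfo} to the bandit setting. The chief difficulty is that \texttt{LinEXP3}'s softmax score depends on the importance-weighted unbiased estimator $\hat{p}_j^\tau = p_j^\tau\mathbbm{1}[A^\tau=j]/q_j^\tau$ rather than the directly observed $p_j^\tau$, so a perturbation of $p_j^t$ propagates through the entire downstream history via the probabilities $q_j^\tau$ for $\tau>t$. Throughout, write $q_j^s := \PP[A^s=j\mid u^s,\mathcal{H}_{\text{bandit}}^{s-1}]\ge\gamma/P$.

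First I would fix an equilibrium $(\mu_1,\ldots,\mu_P)$, a producer $j$, a strategy $\mathbf{p}_j\in\text{supp}(\mu_j)$, and a time step $t$. Since $\mathbf{p}_j$ is almost surely a best response to $\mathbf{p}_{-j}$, and $q_j^t$ is determined by the history strictly before round $t$ (so independent of $p_j^t$), comparing $\mathbf{p}_j$ against the deviation $\tilde{\mathbf{p}}_j$ that replaces $p_j^t$ by $\vec{0}$ yields the no-deviation inequality
\begin{equation*}
\beta^t\|p_j^t\|^c \;\le\; \sum_{s=t+1}^T\beta^s\bigl(\EE_{\mathbf{p}_j}[q_j^s]-\EE_{\tilde{\mathbf{p}}_j}[q_j^s]\bigr),
\end{equation*}
which trades the immediate cost of effort at time $t$ against the discounted future gain in recommendation probability.

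Next I would bound each difference $\EE_{\mathbf{p}_j}[q_j^s]-\EE_{\tilde{\mathbf{p}}_j}[q_j^s]$ via a stability analysis of \texttt{LinEXP3} in the spirit of \Cref{lemma:main}. Two effects must be combined: (i) the direct effect of the estimator spike, since $\hat{p}_j^t$ equals $p_j^t/q_j^t$ on the event $\{A^t=j\}$ (with probability $q_j^t$) and hence is inflated by the importance-sampling factor $1/q_j^t\le P/\gamma$; and (ii) the cascading effect through the downstream estimators $\hat{p}_j^\tau$ for $t<\tau<s$, which depend on $q_j^\tau$ and therefore on the earlier spike. Using the $(1-\gamma)\eta_s$-Lipschitzness of the softmax in the score, combining these effects carefully, and applying $\eta_s\le\eta_{t+1}$ for $s\ge t+1$ collapses the discounted sum into the closed form $\sum_{k=1}^{T-t}k\,\beta^k = \beta\bigl(1-(T-t+1)\beta^{T-t}+(T-t)\beta^{T-t+1}\bigr)/(1-\beta)^2$, with an overall multiplier of $\eta_{t+1}P(1-\gamma)$. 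Rearranging to isolate $\|p_j^t\|^c$ on the left and taking the $1/c$-th root yields the claim.

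The hardest part will be the stability step. Unlike in \Cref{thm:upperboundfullinfo}, where the softmax depends on $p_j^t$ directly and a clean first-order perturbation argument suffices, here the unbiased estimator injects fluctuations of magnitude $\|p_j^t\|/q_j^t$ that cascade through $q_j^\tau$ for every $\tau>t$. Unwinding this recursive coupling without losing factors beyond $P$ (from importance sampling) and beyond linear in $s-t$ (which is what produces the $\sum k\beta^k$ structure) requires careful bookkeeping, and is precisely what changes the final exponent from $1/(c-1)$ in the full-information case to $1/c$ in the bandit case.
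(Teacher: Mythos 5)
Your overall skeleton matches the paper's: fix an equilibrium strategy, compare against the deviation that zeroes out the content at the single round $t$, turn the no-deviation inequality into (cost saved at $t$) $\le$ (discounted gain in future selection probability), and sum a series of the form $\sum_k k\beta^k$. However, the step you yourself flag as the hardest---bounding $\EE_{\mathbf{p}_j}[q_j^s]-\EE_{\tilde{\mathbf{p}}_j}[q_j^s]$---is where your sketched route diverges from what actually works, and as written it has a genuine gap. You propose to track the importance-weighted spike $\hat p_j^t = p_j^t/q_j^t$ (of size up to $P/\gamma$ times $\|p_j^t\|$) and its cascade through the downstream probabilities, controlling everything by Lipschitzness of the softmax in the scores. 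This does not obviously yield the stated bound: (i) the $1/q_j^\tau \le P/\gamma$ inflation would naturally leave a $1/\gamma$ in the constant, whereas the theorem has only $(1-\gamma)$; (ii) the spike is not small, so a first-order/Lipschitz perturbation argument does not control the recursive coupling without further ideas; and (iii) your plan never explains where the factor $\bigl(\beta/(1-\beta)\bigr)^{1/c}$ (i.e.\ the exponents $1+\tfrac1c$ and $2+\tfrac1c$) comes from.

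The paper sidesteps the cascade entirely. In \cref{lemma:main-ingredient} the softmax difference is bounded not only by the difference of scores but also by the \emph{sum of the magnitudes} of the (nonnegative) scores, $x_{j,1}+\sum_{j'\neq j}x_{j',2}$; because this bound is \emph{linear} in the importance-weighted estimators, taking expectations and using unbiasedness replaces each $\hat p_{j'}^\tau$ by the true $p_{j'}^\tau$ with no $1/q$ blow-up---this is how part (2) of \cref{lemma:main} is obtained, and it is also why no downstream propagation analysis is needed. The factor $P$ then comes from summing over all producers' content norms, each bounded a priori by the naive best-response bound $\|p_{j'}^\tau\|\le(\beta/(1-\beta))^{1/c}$ (\cref{lemma:naive-upper-bound}), an ingredient absent from your plan; the inner sum over $\tau$ from $t$ to $s-1$ supplies the linear-in-$(s-t)$ weight that produces $\sum_k k\beta^k$. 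Finally, the exponent changes from $1/(c-1)$ in \cref{thm:upperboundfullinfo} to $1/c$ here not because of careful bookkeeping of the cascade, but because the bandit bound on the probability difference no longer scales with $\|p_j^t\|$ at all (it depends on absolute content magnitudes, not the perturbation size), so the right-hand side is independent of $\|p_j^t\|$ and one takes a $c$-th root rather than dividing by $\|p_j^t\|$ first. To repair your proposal you would need to replace the Lipschitz/cascade analysis with this linear-in-the-estimator bound plus unbiasedness, and add the equilibrium norm bound of \cref{lemma:naive-upper-bound}.
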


The extra factor of $P$ in the bound of Theorem \ref{thm:upperboundbanditgeneralP}, compared to Theorem \ref{thm:upperboundfullinfo}, arises from analyzing the additional randomness in bandits versus the full-information setting. The arm pulled at any time step influences the probability distribution of future arms being pulled, complicating producers' choices. However, this extra $P$ may be an artifact of the analysis rather than a fundamental difference between the two settings.

\subsection{Proof techniques} \label{subsec:prooftechnique}

The key technical ingredient in the proofs of Theorems \ref{thm:upperboundfullinfo} and \ref{thm:upperboundbanditgeneralP} is a bound on how much a platform's probability of choosing a producer is affected by the producer's choices of content vectors. In particular, we compare the difference in the expected probability that producer $j$ wins the user at time step $t$ if the producer chooses $\mathbf{p}_{j,1}$ versus $\mathbf{p}_{j,2}$, which we denote as $M_t$. To formalize this, let $A^t(\mathbf{p}_j; \mathbf{p}_{-j})$ be the arm pulled by the algorithm at time $t$ if producer $j$ chooses the vector $\mathbf{p}_j$ and other producers choose the vectors $\mathbf{p}_{-j}$. We show the following bound (proof in Appendix \ref{appendix:mainlemma}): 
\begin{restatable}{lemma}{mainlemma}
\label{lemma:main} 
Suppose that the platform runs $\texttt{LinHedge}$ or $\texttt{LinEXP3}$ with learning rate schedule  $\eta_1\ge \eta_2\ge\cdots\ge \eta_T\ge 0$ and exploration parameter $\gamma>0$. For any choice of $\mathbf{p}_{-j}$, $\mathbf{p}_{j, 1}$, and $\mathbf{p}_{j, 2}$, the difference 
\begin{align*}
M_t := & \mathbb{E}\left[\mathbb{P}[A^t(\mathbf{p}_{j, 1}; \mathbf{p}_{-j}) = j \mid \mathcal{H}^{t-1}, u^t] \right] - \mathbb{E}\left[\mathbb{P}[A^t(\mathbf{p}_{j, 2}; \mathbf{p}_{-j})=j \mid \mathcal{H}^{t-1}, u^t]  \right]    
\end{align*}
can be upper bounded as follows for any time $t$:
\begin{itemize}
    \item[(1)] For $\texttt{LinHedge}$, it holds that 
    \[ M_t \le (1-\gamma) \eta_t \sum_{\tau=0}^{t-1} \|p^{\tau} _{j, 1} -   p^{\tau} _{j, 2}  \|.\] 
    \item[(2)] For $\texttt{LinEXP3}$,  
    if $1 \le s \le T$ is the minimum value such that $p_{j,1}^s \neq p_{j,2}^s$,\footnote{For $t \le s$, note that $M_t = 0$ since the platform behavior prior to seeing the producer's content vector at time step $s$ is unaffected by the producer's choice of content vector.} then it holds that 
    \[M_t \le (1-\gamma)\eta_t \sum_{\tau=s}^{t-1}  \left(\|p_{j,1}^{\tau}\| + \sum_{j'\neq j} \|p_{j'}^{\tau}\| \right).\] 
\end{itemize}
\end{restatable}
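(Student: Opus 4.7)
My plan is to treat both parts through a common softmax-Lipschitz template: rewrite $\mathbb{P}[A^t = j \mid \mathcal{H}^{t-1}, u^t]$ as $(1-\gamma)\sigma_j(z^t) + \gamma/P$ for the appropriate score vector $z^t$, bound the pointwise perturbation of $\sigma_j$ caused by switching producer $j$'s strategy, convert that into a bound on content norms, and finally take expectation. Part (1) is clean because the full-information score $z_{j'}^t = \eta_t \sum_{\tau=0}^{t-1}\langle u^t, p_{j'}^\tau\rangle$ only shifts in its $j$-th coordinate when $\mathbf{p}_j$ changes, since the algorithm sees actual content vectors rather than estimates. Part (2) is delicate because the bandit score uses $\hat{p}$, and changing $\mathbf{p}_j$ distributionally perturbs past arm pulls and hence every coordinate of the score.

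For Part (1), since only $z_j^t$ changes between the two runs and $\partial_j \sigma_j = \sigma_j(1-\sigma_j) \le 1$, the mean value theorem along the line segment in the $j$-th coordinate gives $\sigma_j(z^{(1)}) - \sigma_j(z^{(2)}) \le |\eta_t \sum_{\tau=0}^{t-1}\langle u^t, p_{j,1}^\tau - p_{j,2}^\tau\rangle|$. Applying Cauchy--Schwarz with $\|u^t\|=1$, multiplying by $1-\gamma$, and taking expectation over $(\mathcal{H}^{t-1}, u^t)$ yields the claimed bound.

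For Part (2), I would couple the two runs on identical user draws and internal algorithmic randomness so that their trajectories coincide for $\tau<s$; the scores $S^{(1)}, S^{(2)}$ can only diverge through $\hat{p}^\tau$ for $\tau \ge s$. Using the directional identity $\nabla\sigma_j \cdot v = \sigma_j(1-\sigma_j)v_j - \sigma_j\sum_{j'\ne j}\sigma_{j'} v_{j'}$ integrated along the segment, together with $\sigma_j(1-\sigma_j), \sigma_j\sigma_{j'}\le 1$, I obtain the one-sided estimate
\[
\sigma_j(S^{(1)}) - \sigma_j(S^{(2)}) \le v_j^+ + \sum_{j'\ne j} v_{j'}^-,
\]
where $v := S^{(1)} - S^{(2)}$. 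The main obstacle is extracting the asymmetric bound claimed in the lemma: a symmetric Lipschitz bound of the form $|\sigma_j(S^{(1)}) - \sigma_j(S^{(2)})| \le \sum_{j'}|S^{(1)}_{j'} - S^{(2)}_{j'}|$ would leave a spurious $\|p_{j,2}^\tau\|$ term and a factor of two in the final answer. To avoid this, I exploit that $M_t$ is a \emph{signed} upper bound together with the non-negativity of $\hat{p}$ and $u^t$: via $(a-b)^+ \le a$ for $a,b\ge 0$, I keep only $\hat{p}_j^{\tau,(1)}$ inside $v_j^+$ and only $\hat{p}_{j'}^{\tau,(2)}$ inside each $v_{j'}^-$ for $j' \ne j$. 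Taking expectation, invoking unbiasedness $\mathbb{E}[\hat{p}_j^{\tau,(1)}] = p_{j,1}^\tau$ and $\mathbb{E}[\hat{p}_{j'}^{\tau,(2)}] = p_{j'}^\tau$ (the latter being shared across the two runs since $j' \ne j$), and using $\|u^t\|=1$ collapses the estimate to the stated bound.
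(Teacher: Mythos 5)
Your proposal is correct, and it reaches the same final bounds by the same overall skeleton as the paper (pointwise softmax perturbation bound, then cancellation of the unchanged coordinates in the full-information case, nonnegativity plus unbiasedness and Cauchy--Schwarz in the bandit case), but your key technical step is derived differently. The paper proves a standalone algebraic sublemma (Lemma \ref{lemma:main-ingredient}): writing the softmax difference via the ratios $Q_1,Q_2$ and using $e^x \ge 1+x$, it bounds the difference by a minimum of two expressions, then invokes the first expression for \texttt{LinHedge} and the cruder second expression $x_{j,1}+\sum_{j'\ne j}x_{j',2}$ for \texttt{LinEXP3}. You instead use the gradient of the softmax ($\partial_j\sigma_j=\sigma_j(1-\sigma_j)$, $\partial_{j'}\sigma_j=-\sigma_j\sigma_{j'}$) integrated along the segment, which in part (1) gives the mean-value bound $|z_j^{(1)}-z_j^{(2)}|$ directly (the paper gets the same thing after the off-diagonal terms cancel because $\hat p_{j'}^\tau = p_{j'}^\tau$ for $j'\ne j$ in full information), and in part (2) gives the one-sided estimate $v_j^+ + \sum_{j'\ne j}v_{j'}^-$, which collapses to the paper's $x_{j,1}+\sum_{j'\ne j}x_{j',2}$ after $(a-b)^+\le a$. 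A point in your favor: by explicitly coupling the two bandit runs on user draws and internal randomness so that the trajectories and estimators coincide for $\tau<s$, you obtain the truncated sum $\sum_{\tau=s}^{t-1}$ claimed in the lemma statement, whereas the paper's appendix computation only mentions that some coupling may be chosen and then writes the sums from $\tau=0$; your argument makes explicit the step that justifies the stated range. The only things to spell out in a full write-up are the ones the paper also uses: independence of the fresh user $u^t$ from the past estimators (so that $\mathbb{E}[\langle u^t,\hat p\rangle]=\langle\mathbb{E}[u],\mathbb{E}[\hat p]\rangle$) before applying Cauchy--Schwarz and $\|u\|=1$, and the fact that the $\gamma/P$ exploration terms cancel in $M_t$.
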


Both parts of Lemma \ref{lemma:main} bound $M_t$ in terms of the learning rate $\eta_t$ and the difference between the two content vectors of producer $j$. For the full-information setting, our bound depends on the norm of the difference summed across time steps. For the bandit setting, the bound depends on the first time step when the vectors differ and sums the norms of all producers' content vectors from that point. We defer the proof of Lemma \ref{lemma:main} to Appendix \ref{appendix:mainlemma}.

We next provide proof sketches of Theorems \ref{thm:upperboundfullinfo} and \ref{thm:upperboundbanditgeneralP} from Lemma \ref{lemma:main}.

\paragraph{Proof sketch for Theorems \ref{thm:upperboundfullinfo} and \ref{thm:upperboundbanditgeneralP}.}
We apply Lemma \ref{lemma:main} to prove Theorems \ref{thm:upperboundfullinfo}-\ref{thm:upperboundbanditgeneralP} as follows. If producer $j$ chooses a content vector $\mathbf{p}_j$ at equilibrium, it must perform as well as any another content vector, including a content vector $\mathbf{p}_{j, s, 0}$ that changes the coordinate at any single given time step $s$ to $\vec{0}$. That is:
\begin{equation}
 u_j(\mathbf{p}_j|\mathbf{p}_{-j}) - u_j(\mathbf{p}_{j, s, 0}|\mathbf{p}_{-j}) \ge 0.  
\end{equation}
The difference $u_j(\mathbf{p}_j|\mathbf{p}_{-j}) - u_j(\mathbf{p}_{j, s, 0}|\mathbf{p}_{-j})$ can be broken down into two parts. The first part is the difference of expected (discounted, cumulative) probability of being chosen by the platform, which is upper bounded by the difference between the content vectors in our main lemma (\cref{lemma:main}). The second part is the (discounted) difference in the cost functions, which is $\|p_j^{s}\|^c \cdot \beta^{s-1}$. Combining these two parts and using the fact that any deviation at equilibrium must be nonpositive, we are able to obtain the claimed results. We defer the full proofs to \cref{appsub:negative-proofs}.

\subsection{Discussion: Algorithm Choices and Platform Design Insights }\label{subsec:negativeresultsdiscussion}

Due to the adversarial linear contextual bandit framework underlying the platform's learning task, we assumed that the platform uses a standard no-regret adversarial linear contextual bandit algorithm (\texttt{LinHedge} or \texttt{LinEXP3}) in this section. The reason that we considered \textit{adversarial} algorithms is that the platform operates in a non-stochastic environment: producers create new content at each time step, so the ``arm'' rewards can vary over time. The \textit{contextual} aspect captures that different users arrive at each time step, and the \textit{linear} aspect captures that user utility is linear in the user's context and content vector. We focused on \textit{no-regret algorithms}, because regret minimization is a standard objective in bandit problems.

While real-world recommendation platforms are unlikely to directly deploy these specific algorithms, we expect the conceptual insights of our results apply more broadly to platform design. In particular, \texttt{LinHedge} and \texttt{LinEXP3} capture a stylized learning process of a \textit{non-incentive-aware platform} that does not take into account how its learning algorithm impacts content creation.
The key conceptual insight from our analysis is \textit{if the learning algorithm reacts too slowly to producer actions}, then producers are disincentivized from investing effort in content quality. This suggests that the platform needs to react more quickly and directly to producer actions than is required in typical, non-incentive-aware learning environments. The algorithms that we design in Sections \ref{sec:result-part2} and \ref{sec:welfare} below build on this incentive-aware design principle.

\section{Simple Approaches to Incentivizing Producer Effort}\label{sec:result-part2}

Motivated by the negative results in Section \ref{sec:result-part1} for \texttt{LinHedge} and \texttt{LinEXP3}, we design algorithms that incentivize producers to create higher-quality content.\footnote{Given that the small learning rate was the driver of low-quality content in Section \ref{sec:result-part1}, increasing the learning rate of \texttt{LinHedge} might seem like a straightforward fix. However, this naive approach introduces challenges involving even the existence of a (symmetric pure-strategy) equilibrium (see \Cref{subsec:part2learningrate}).} 

In this section, we focus on incentivizing high producer effort (we defer incentivizing high user welfare to Section \ref{sec:welfare}).
We design a simple modification to \texttt{LinHedge}---based on \textit{punishing} producers who invest too little effort---which guarantees non-vanishing effort from producers at equilibrium. Our punishment-based approach is inspired by \textit{content moderation} \citep{G18}. However, an interesting difference is that while content moderation typically targets producers who post offensive or dangerous content,  our algorithm removes producers who post low-quality content. 

First, we instantiate this idea for $D=1$ (\Cref{alg:punishment}; \Cref{thm:punishment}). Then, we extend it to $D> 1$ (\Cref{alg:punishlindirectionhedge}; \Cref{thm: punishlindirectionhedge}) and show the algorithm achieves a partial recovery of user welfare (Corollary \ref{cor: punishlindirectionhedge}). Our results focus on the full-information setting. 

\subsection{Simple case: Dimension $D=1$} \label{subsec:punishhedge}

We start with the case of $D = 1$, where the producer picks a single scalar $p \geq 0$ equal to their effort. We design \texttt{PunishHedge} (\Cref{alg:punishment}) building on \texttt{Hedge} (the one-dimensional version of \texttt{LinHedge}) with the following punishment principle: stop recommending producers if they ever create content with quality below a certain threshold $q$. More specifically, at each time $t$, we maintain an ``active'' producer set $\mathcal{P}^t$ and only consider producers from this set. If a producer $j$ chooses content with effort $p_j^t$ below $q$, then the producer is removed from $\mathcal{P}^s$ for all remaining time $s > t$.

\begin{algorithm}
\caption{$\texttt{PunishHedge}$ (for dimension $D = 1$)} 
\begin{algorithmic}[1] 
\REQUIRE punishment threshold $q$, learning rate schedule $\left\{\eta_t\right\}_{1 \le t \le T}$, exploration rate $\gamma$, time horizon $T$, number of producers $P$ 
\STATE Initialize $\mathcal{P}^1 = [P]$ (active producers). 
\STATE Initialize  $A^0 = 1$, $u^0=1$, and $p_j^0 = \vec{0}$ for all $j \in \mathcal{P}^1$, and let $\mathcal{H}_{\text{full}}^{0} = \left\{(j, \tau, u^{\tau}, A^{\tau}, p_j^{\tau}) \mid \tau = 0, j \in [P]\right\}$.
\FOR{$1 \le t \le T$} 
\STATE If $\mathcal{P}^t = \emptyset$, let $A^t = \bot$. 
\STATE Otherwise, let $A^t$ be defined so that: 
\[
\mathbb{P}[A^t = j\mid \mathcal{H}_{\text{full}}^{t-1}, u^t] =
\begin{cases}
    0 & \text{ if } j \not\in \mathcal{P}^t \\
    (1-\gamma) \frac{e^{\eta_t \sum_{\tau=0}^{t-1} p^{\tau}_{j}}}{\sum_{j' \in \mathcal{P}^t} e^{\eta_t \sum_{\tau=0}^{t-1} p^{\tau}_{j'}}} + \frac{\gamma}{|\mathcal{P}^t|} & \text{ if } j \in \mathcal{P}^t. 
\end{cases}
\]
\STATE Let $\mathcal{H}_{\text{full}}^{t} = \left\{(j, \tau, u^{\tau}, A^{\tau}, p_j^{\tau}) \mid 1 \le \tau \le t, j \in \mathcal{P}^t \right\}$
\STATE Let $\mathcal{P}^{t+1} = \mathcal{P}^t \setminus \left\{ j \in \mathcal{P}^t \mid p_j^t < q\right\}$
\ENDFOR
\end{algorithmic}
\label{alg:punishment}
\end{algorithm}

The following theorem shows that \texttt{PunishHedge} with the learning rate schedule $\eta=O(\frac{1}{\sqrt{T}})$ induces constant equilibrium effort over time, as long as the punishment threshold $q$ is carefully chosen:

\begin{theorem}
\label{thm:punishment}
Let $\beta \in (0,1)$ be the discount factor of producers, $D = 1$ be the dimension, and $c \ge 1$ be the cost function exponent. Suppose that the platform runs \texttt{PunishHedge} with learning rate schedule $\eta=O(\frac{1}{\sqrt{T}})$ and quality threshold $q =  \left(\frac{\beta}{P}\right)^{1/c} (1 - \epsilon)$ for any $\epsilon \in (0,1)$. For $T$ sufficiently large, there is a unique mixed-strategy Nash equilibrium, comprised of symmetric pure strategies that we denote by $\mathbf{p} = \mathbf{p}_1 = \ldots = \mathbf{p}_P$. The content quality is $|p^{t}| = 0$ for $t = T$ and
\[\smash{|p^{t}| = \left(\frac{\beta}{P}\right)^{1/c} (1-\epsilon)}\]
for $1 \le t \le T -1$.
\end{theorem}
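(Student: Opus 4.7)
The plan is to pin down the equilibrium via backward induction on the round index, exploiting the structural fact that LinHedge's selection probability at round $t$ depends on content from rounds $0, \ldots, t-1$ but not on $p_j^t$ itself. Thus $p_j^t$ affects only (a) future selection probabilities at rounds $s > t$ via cumulative scores, and (b) whether $j$ survives into $\mathcal{P}^{t+1}$.

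At round $T$, since round-$T$ recommendation is independent of $p_j^T$ and removal from $\mathcal{P}^{T+1}$ has no consequence, any $p_j^T > 0$ strictly increases cost without benefit, so $p_j^T = 0$ is the unique best response, independent of opponents' play. At every round $t$, I would then rule out $p_j^t \in (0, q)$ as strictly dominated by $p_j^t = 0$ (both trigger removal but $0$ strictly saves cost). Hence each producer's strategy at every round concentrates on $\{0\} \cup [q, \infty)$.

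For rounds $t = T-1, T-2, \ldots, 1$, I would show by downward induction that conditional on $j \in \mathcal{P}^t$, $p_j^t = q$ is the unique best response, via two comparisons. (i) \emph{$p_j^t = q$ versus $p_j^t = 0$:} both yield the same round-$t$ win probability, but $q$ keeps $j$ active. Using the inductive hypothesis (all active producers play $q$ subsequently), symmetry yields $\mathbb{P}[A^s = j] = 1/|\mathcal{P}^s|$ at each subsequent round, and a direct computation shows the net gain from staying active simplifies to $(\beta/P - q^c) \cdot \beta^t (1-\beta^{T-t})/(1-\beta)$, which is strictly positive because $q^c = (\beta/P)(1-\epsilon)^c < \beta/P$. (ii) \emph{$p_j^t = q$ versus $p_j^t = q + \delta$:} convexity of $x \mapsto x^c$ gives an extra cost of at least $c q^{c-1} \delta \beta^t$, while Lemma~\ref{lemma:main}(1) bounds the extra future selection probability at each $s > t$ by $(1-\gamma) \eta \delta$, yielding total benefit at most $(1-\gamma) \eta \delta \beta^{t+1}/(1-\beta)$. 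For $\eta = O(1/\sqrt{T})$ below the positive constant $c q^{c-1}(1-\beta)/((1-\gamma)\beta)$---which holds for $T$ sufficiently large---the cost strictly exceeds the benefit uniformly in $t$. Together, (i) and (ii) establish existence of the described symmetric pure NE.

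The main obstacle I anticipate is extending this best-response analysis to prove \textbf{uniqueness} across all (possibly mixed, possibly asymmetric) Nash equilibria. The symmetry used in (i) breaks once opponents' cumulative scores are unequal, so one must instead lower-bound $j$'s continuation probability (e.g.\ via the exploration term $\gamma/|\mathcal{P}^s|$, or by iteratively forcing symmetric best responses down from round $T$) and carefully rule out mixing over $\{0\} \cup [q,\infty)$. Managing the uniform-in-$t$ inequality in (ii), which pits the learning rate against the cost-convexity constant $c q^{c-1}$, is the reason the theorem requires $T$ (equivalently, $1/\eta$) to be sufficiently large.
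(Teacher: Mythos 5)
Your best-response analysis mirrors the paper's argument: the paper likewise handles $t=T$ by pure cost, rules out norms in $(0,q)$ and above $q$ via single-coordinate deviations (convexity giving a cost change of at least $cq^{c-1}\delta$ against a Lemma~\ref{lemma:main}-type bound of $(1-\gamma)\eta\delta$ per future round, so $\eta=O(1/\sqrt{T})$ small enough kills any $p^t_j>q$), and then compares ``stay active at $q$'' with ``drop to $0$'' using the gain $(\beta/P-q^c)\sum_{\tau\ge t}\beta^{\tau}>0$. One technical note: Lemma~\ref{lemma:main} as stated applies to \texttt{LinHedge}, not \texttt{PunishHedge}; the paper proves a separate extension (Lemma~\ref{lemma:main-general}) that handles the active set $\mathcal{P}^t$, valid precisely when $j$ remains active under both compared strategies --- which is the regime of your comparison (ii), so this is a routine but necessary patch.

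The genuine gap is the one you flag yourself: uniqueness over all mixed, possibly asymmetric equilibria, and your proposed fixes do not close it. The exploration-term idea fails numerically: lower-bounding the continuation probability by $\gamma/|\mathcal{P}^s|\ge\gamma/P$ only yields a gain proportional to $\gamma\beta/P-q^c=\frac{\beta}{P}\bigl(\gamma-(1-\epsilon)^c\bigr)$, which is negative whenever $(1-\epsilon)^c>\gamma$, so it cannot certify that staying active at $q$ beats dropping out. Backward induction with a symmetry hypothesis is also not available, since producers commit to all rounds upfront (the solution concept is Nash, not subgame-perfect), and at an arbitrary mixed equilibrium opponents' realized histories need not be symmetric. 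The paper's resolution is to prove the best-response structure lemma (your (ii) plus the dominance of $0$ over $(0,q)$) against an \emph{arbitrary} mixed opponent profile (its Lemma~\ref{lemma:optpunishment}), so every strategy in the support of any equilibrium has per-round entries in $\{0,q\}$; the punishment rule then forces every producer still in $\mathcal{P}^t$ to have played exactly $q$ in every prior round, so all active producers have identical cumulative scores and the Hedge probability is exactly $1/|\mathcal{P}^t|\ge 1/P$ --- no symmetry assumption needed. This single observation both replaces the inductive hypothesis in your comparison (i) and immediately yields uniqueness: $\mathbf{p}=[q,\dots,q,0]$ is the unique best response to every profile in the support of the other players' mixtures, hence each $\mu_j$ must be a point mass at $\mathbf{p}$.
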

\noindent As a direct corollary, the user welfare $\EE\left[\langle u^t, p_{A^t}^t \rangle\right]$ is also high since $\langle u^t, p_{j}^t\rangle= u^t \cdot p_{j}^t=p_{j}^t$ when $D=1$.

Setting the punishment threshold $q$ is a key ingredient of the algorithm design. This is because if $q$ is set too high, then discounted producers might opt for punishment rather than bearing higher production costs for future recommendations. Since removal from the active set occurs one step after producing low-quality content, this delay further incentivizes producers to accept punishment. In light of this, Theorem \ref{thm:punishment} sets $q$ to be at the break-even point where the benefit of staying in the active set juts outweighs the cost of efforts of producing high-quality content. We defer the proof to \Cref{appendix:resultspart2}. 

\subsection{Extension to dimension $D > 1$} 

We next extend the algorithm and analysis to any $D \ge 1$ case. Here, the norm $\|p\|$ captures the producer's effort for creating content $p \in \mathbb{R}_{\ge 0}^D$. At a high level, we convert this $D$-dimensional problem into a one-dimensional problem by specifying a \textit{direction criterion $g$}, and then punishing producers if their effort is less than our threshold \textit{or} if their direction differs from $g$. We formalize this algorithm in \texttt{PunishLinDirectionHedge} (Algorithm \ref{alg:punishlindirectionhedge}). 

\begin{algorithm}
\caption{$\texttt{PunishLinDirectionHedge}$ (for $D\ge 1$)} 
\begin{algorithmic}[1] 
\REQUIRE punishment threshold $q$, direction criterion $g$, learning rate schedule $\left\{\eta_t\right\}_{1 \le t \le T}$, exploration rate $\gamma$, time horizon $T$, number of producers $P$ 
\STATE Initialize $\mathcal{P}^1 = [P]$
\STATE Initialize $u^0 \sim \mathcal{D}$, $A^0 = 1$, and $p_j^0 = \vec{0}$ for all $j \in \mathcal{P}^1$, and let $\mathcal{H}_{\text{full}}^{0} = \left\{(j, \tau, u^{\tau}, A^{\tau}, p_j^{\tau}) \mid \tau = 0, j \in [P]\right\}$.
\FOR{$1 \le t \le T$}
\STATE Let $u^t\sim \mathcal{D}$ be the arriving user at time $t$. 
\STATE If $\mathcal{P}^t = \emptyset$, let $A^t = \bot$. 
\STATE Otherwise, let $A^t$ be defined so that $\mathbb{P}[A^t = j\mid \mathcal{H}_{\text{full}}^{t-1}, u^t]$ is equal to: 
\[\mathbb{P}[A^t = j\mid \mathcal{H}_{\text{full}}^{t-1}, u^t] = 
\begin{cases}
    0 & \text{ if } j \not\in \mathcal{P}^t \\
    (1-\gamma) \frac{e^{\eta_t \sum_{\tau=0}^{t-1} \langle u^t, p^{\tau}_{j} \rangle}}{\sum_{j' \in \mathcal{P}^t} e^{\eta_t \sum_{\tau=0}^{t-1} \langle u^t, p^{\tau}_{j'} \rangle}} + \frac{\gamma}{|\mathcal{P}^t|} & \text{ if } j \in \mathcal{P}^t. 
\end{cases}
\]
\STATE Let $\mathcal{H}_{\text{full}}^{t} = \left\{(j, \tau, u^{\tau}, A^{\tau}, p_j^{\tau}) \mid 1 \le \tau \le t, j \in \mathcal{P}^t \right\}$
\STATE Let $\mathcal{P}^{t+1} = \mathcal{P}^t \setminus \left\{ j \in \mathcal{P}^t \mid \|p_j^t\| < q \text{ or } \frac{p_j^t}{\|p_j^t\|}\neq g \right\}$
\ENDFOR
\end{algorithmic}
\label{alg:punishlindirectionhedge}
\end{algorithm}

Regardless of the direction criterion $g$, \texttt{PunishLinDirectionHedge} matches the bound on producer effort from \Cref{thm:punishment} (proof deferred to \Cref{appendix:resultspart2}). 

\begin{restatable}{theorem}{thmpunishlindirectionhedge}\label{thm: punishlindirectionhedge}
Let $\beta \in (0,1)$ be the discount factor of producers, $D \ge 1$ be the dimension, and $c \ge 1$ be the cost function exponent. Suppose that the platform runs \texttt{PunishLinDirectionHedge} with fixed learning rate schedule $\eta=O(\frac{1}{\sqrt{T}})$, quality threshold $q = \left(\frac{\beta}{P}\right)^{1/c} (1 - \epsilon)$ for any $\epsilon \in (0,1)$ and any direction criterion $g\in\mathbb{R}^D_{\ge 0}$ satisfying $\|g\|=1$. For $T$ sufficiently large, there is a unique mixed-strategy Nash equilibrium, comprised of symmetric pure strategies that we denote by $\mathbf{p} = \mathbf{p}_1 = \ldots = \mathbf{p}_P$. The content quality is $\|p^{t}\| = 0$ for $t = T$ and
\[\|p^{t}\| = \left(\frac{\beta}{P}\right)^{1/c} (1-\epsilon)\]
for $1 \le t \le T -1$. Moreover, the vector $p^t$ points in the direction of $g$ for $1 \le t \le T-1$. 
\end{restatable}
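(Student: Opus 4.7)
The plan is to reduce the $D$-dimensional game to the one-dimensional game already analyzed in \cref{thm:punishment} by exploiting the direction criterion. The direction punishment effectively forces every surviving producer to play content in direction $g$, so the strategic choice collapses to picking a scalar norm at each time step.

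First, I would prove a \emph{reduction to direction $g$}. At any mixed-strategy Nash equilibrium, I claim that, without loss of generality, every realized $p_j^t$ lies in $\{\vec{0}\} \cup \{xg : x \ge q\}$. Indeed, if producer $j$ plays some $p_j^t$ with $\|p_j^t\|>0$ and $p_j^t/\|p_j^t\|\neq g$, they are removed from $\mathcal{P}^{t+1}$ and earn nothing afterwards; replacing $p_j^t$ by $\vec{0}$ triggers the identical removal (since $\|\vec{0}\|<q$) and saves the cost $\|p_j^t\|^c$, while not affecting the recommendation probability at time $t$ (which depends only on the history \emph{before} $t$). Similarly, playing any $\|p_j^t\|\in(0,q)$ is weakly dominated by $\vec{0}$. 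This lets me restrict attention to content of the form $p_j^t = x_j^t \, g$ with $x_j^t \in \{0\}\cup[q,\infty)$.

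Next, I would translate to an effectively one-dimensional game. Because every active producer's content is a nonnegative multiple of $g$, the inner products in \texttt{PunishLinDirectionHedge} factor as $\langle u^t, p_j^\tau\rangle = \langle u^t, g\rangle\,x_j^\tau$. Letting $a^t := \langle u^t, g\rangle \in [0,1]$, the selection rule reduces to the one-dimensional \texttt{PunishHedge} rule with effective learning rate $\eta a^t$ and scalar choices $x_j^\tau$. The producer utility takes the same form as in \cref{thm:punishment} with $p_j^t$ replaced by $x_j^t$, except that the argument of the exponentials carries the extra random factor $a^t\le 1$.

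Then I would mirror the equilibrium analysis of \cref{thm:punishment} on this scalar game. For the candidate symmetric profile ($x^t=q$ for $t<T$ and $x^T=0$), I would show that the three relevant deviations are strictly suboptimal for $T$ large and $\eta=O(1/\sqrt{T})$ small enough: (i) a downward deviation $x_j^t<q$ at some $t<T$ causes removal from $\mathcal{P}^{t+1}$, losing a geometric tail of future $1/P$-type rewards worth $\approx \beta^{t+1}/((1-\beta)P)$, which exceeds the saved cost $\beta^t q^c$ by the choice $q=(\beta/P)^{1/c}(1-\epsilon)$; (ii) an upward deviation $x_j^t>q$ is strictly costlier, while the marginal gain in recommendation probability is $O(\eta)$ uniformly in $a^t$ (by the same sensitivity argument as \cref{lemma:main}), hence negligible for $\eta=O(1/\sqrt{T})$; (iii) at $t=T$ there is no future, so $x^T=0$ strictly dominates $q$. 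Because each producer's best response to the others playing the symmetric profile is uniquely the same profile, and the reduction in Step~1 rules out other supports, this is the unique mixed-strategy equilibrium. Finally, $\|p^t\|=q=(\beta/P)^{1/c}(1-\epsilon)$ for $t<T$ and in direction $g$, as claimed.

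The main obstacle is the upward-deviation analysis: one must bound the marginal effect on the cumulative discounted recommendation probability of perturbing $x_j^t$ upward, uniformly over all histories and users $u^t$. This uses the same Lipschitz-type control as \cref{lemma:main} but now with the extra $a^t$ factor; the key point is that since $a^t\le 1$, the factor only \emph{weakens} the sensitivity compared to the $D=1$ case, so the learning-rate condition $\eta=O(1/\sqrt{T})$ from \cref{thm:punishment} continues to suffice. The other subtlety is the boundary at $t=T$ and ensuring uniqueness despite the possibility of mixed directions, which is resolved by the weak-domination argument in Step~1.
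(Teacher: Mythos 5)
Your overall strategy matches the paper's in substance: pin the direction by a domination argument, rule out norms above $q$ with a Lipschitz/sensitivity bound in the spirit of Lemma~\ref{lemma:main} (the paper proves an explicit extension, Lemma~\ref{lemma:main-general}, to handle the active set $\mathcal{P}^t$; your observation that $a^t=\langle u^t,g\rangle\le 1$ only weakens the sensitivity plays the same role there), and rule out downward deviations by comparing the lost tail of $1/P$ recommendation probability against production costs via the choice $q=(\beta/P)^{1/c}(1-\epsilon)$. Two bookkeeping points: the relevant downward deviation also saves the cost $q^c$ at every remaining round, not just at time $t$, so the comparison should be $\sum_{\tau\ge t}\beta^{\tau}\bigl(\tfrac{\beta}{P}-q^c\bigr)>0$, which still holds since $q^c<\beta/P$; and the ``reduction to Theorem~\ref{thm:punishment}'' cannot be a black-box citation, because in the paper that theorem is itself obtained as the $D=1$ corollary of the present one --- you re-derive the scalar analysis rather than cite it, so this is only a framing issue, but it should be made explicit.

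The genuine gap is in uniqueness. Your deviation analysis (i)--(iii) is carried out only against the symmetric candidate profile; combined with the Step-1 support restriction this shows the candidate is an equilibrium, but it does not exclude other equilibria inside the restricted strategy space --- for instance asymmetric profiles in which some producers play $\vec{0}$ early and drop out, or profiles sustaining norms strictly above $q$ against non-symmetric opponents. Since the theorem asserts uniqueness, the upward-deviation bound must be established against an \emph{arbitrary} mixed opponent profile (this is what the paper's Lemma~\ref{lemma:optpunishment} does, pinning every best response's coordinates to $\{0, q\cdot g\}$), and the downward-deviation accounting must be established against an arbitrary pure opponent profile with coordinates in $\{0, q\cdot g\}$ (the paper's key fact, where the bound $|\mathcal{P}^t|\le P$ keeps the $1/P$ lower bound valid even when opponents drop out). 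Your tools do extend to this level of generality --- the sensitivity bound is opponent-independent, and the uniform-softmax $1/P$ bound only needs the active opponents to have played $q\cdot g$ at all prior rounds --- but the argument has to be stated and proved at that generality rather than only at the symmetric profile, and as written your one-sentence uniqueness conclusion does not follow.
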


\subsection{Partial recovery of user welfare}\label{subsec:partialrecovery}

While we focused on \textit{producer effort} $\|p^t\|$, our results also enable partial recovery of the \textit{user welfare} $\mathbb{E}[\langle u^t, p^t_{A^t} \rangle]$. If we take $g$ to the average user direction, \texttt{PunishLinDirectionHedge} guarantees  nonvanishing user welfare at equilibrium for $1 \le t \le T-1$, as the following corollary shows. 

\begin{corollary}\label{cor: punishlindirectionhedge}
Assume the same setup as Theorem \ref{thm: punishlindirectionhedge}, but where  direction criterion is taken to be $g=\frac{\EE[u]}{\|\EE[u]\|}$. For $T$ sufficiently large, the user welfare at equilibrium is equal to $\mathbb{E}[\langle u^t, p_{A^t}^t\rangle] = 0$ for $t = T$ and
    \begin{equation}
\label{eq:punishlinhedgedirectionwelfare}
\smash{
 \mathbb{E}[\langle u^t, p_{A^t}^t\rangle] = \left(\frac{\beta}{P}\right)^{1/c}(1-\epsilon)\cdot \|\EE_{\mathcal{D}}[u]\| }
\end{equation}
for $1 \le t \le T -1$.
\end{corollary}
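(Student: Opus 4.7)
The plan is to derive this corollary as an essentially direct consequence of \Cref{thm: punishlindirectionhedge} by exploiting the symmetric pure-strategy structure at equilibrium and the specific choice $g = \EE[u]/\|\EE[u]\|$. First, I would invoke \Cref{thm: punishlindirectionhedge} to conclude that, for $T$ sufficiently large, every producer $j \in [P]$ plays the same pure strategy $\mathbf{p}_j = \mathbf{p}$, where $p^t = \|p^t\| \cdot g$ with $\|p^t\| = (\beta/P)^{1/c}(1-\epsilon)$ for $1 \le t \le T-1$ and $p^T = 0$. This symmetry implies $p_{A^t}^t = p^t$ whenever $A^t \in [P]$, so the identity of the recommended producer does not affect the served content.

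Next, I would argue that for $1 \le t \le T-1$ the active set satisfies $\mathcal{P}^t = [P]$, and hence $A^t \neq \bot$ almost surely. This is because at every prior round $\tau < t$ each producer's norm equals exactly $q$ (not strictly less), and the direction is exactly $g$, so no producer gets removed by the punishment rule in Line 8 of \Cref{alg:punishlindirectionhedge}. Combined with the previous step, we get
\begin{equation*}
\EE[\langle u^t, p_{A^t}^t \rangle] = \EE[\langle u^t, p^t \rangle] = \|p^t\| \cdot \EE[\langle u^t, g \rangle]
\end{equation*}
for $1 \le t \le T-1$. A short linearity computation then yields $\EE[\langle u^t, g \rangle] = \langle \EE[u], g\rangle = \langle \EE[u], \EE[u]/\|\EE[u]\| \rangle = \|\EE[u]\|$, which plugged back in gives the claimed expression \eqref{eq:punishlinhedgedirectionwelfare}.

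Finally, for $t = T$, the equilibrium strategy is $p^T = \vec{0}$ for every producer. Thus $\langle u^T, p_{A^T}^T \rangle = 0$ deterministically (whether $A^T \in [P]$ or $A^T = \bot$, using the convention that the content vector is $\vec{0}$ in the latter case), which gives the boundary value. There is no real obstacle here; the only subtlety is ensuring that the punishment-threshold inequality $\|p_j^\tau\| < q$ is strict, so that the equilibrium play $\|p^\tau\| = q$ does not trigger removal and the active set remains $[P]$ throughout $t \le T-1$. This is immediate from the statement of \Cref{alg:punishlindirectionhedge}, so the corollary follows.
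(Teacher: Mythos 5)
Your proposal is correct and follows the same direct route the paper intends: the corollary is an immediate consequence of \Cref{thm: punishlindirectionhedge}, using that at the unique symmetric equilibrium every producer plays $q\cdot g$ with $q=(\beta/P)^{1/c}(1-\epsilon)$ for $t\le T-1$ (so no producer is ever punished, $\mathcal{P}^t=[P]$, and $p_{A^t}^t=p^t$ regardless of $A^t$) and $p^T=\vec{0}$, after which $\EE[\langle u^t,p^t\rangle]=q\,\langle\EE[u],g\rangle=q\,\|\EE[u]\|$ gives \eqref{eq:punishlinhedgedirectionwelfare}. Your attention to the strictness of the removal condition $\|p_j^t\|<q$ and the direction check is exactly the right (and only) subtlety.
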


However, the welfare bound in \eqref{eq:punishlinhedgedirectionwelfare} has suboptimal dependence on both the number of producers $P$ and the user distribution $\mathcal{D}$. 
\begin{itemize}
    \item First, the bound has a $1/P$ factor, leading to poor bounds for a large number of producers. This factor arises because all producers compete along the same direction $g$: as a result, they each gets only $1/P$ of the reward and puts in correspondingly little effort.
    \item Second, the bound has a $\|\EE_{\mathcal{D}}[u]\|$ factor, which is poor in high dimensions with heterogeneous users. For instance, it is $1/\sqrt{D}$ if users are uniformly distributed in $D$ dimensions.
\end{itemize}

In the next section, we will introduce a new algorithm that fixes both of these issues and achieves higher welfare.

\section{Incentivizing High User Welfare}\label{sec:welfare}

In this section, we address the suboptimality of the welfare bound for \texttt{PunishLinDirectionHedge} and develop algorithms that incentivize high user welfare $\mathbb{E}[\langle u^t, p_{A^t}^t\rangle]$. We design \texttt{PunishUserUtility} (\Cref{alg:punish-user-utility}), which encourages different producers to specialize their content in different directions (Section \ref{subsec:punishuserutility}). \texttt{PunishUserUtility} can significantly beat the welfare bound from the previous section (Sections \ref{subsec:welfareanalysisimproveddependenceP}- \ref{subsec:welfareanalysisimproveddependenceD}), and even achieves optimal welfare in limiting cases (Corollary \ref{cor:optimality}). 

\subsection{Algorithm and equilibrium} \label{subsec:punishuserutility}
Our algorithm, \texttt{PunishUserUtility}, sets individualized criteria $\bold{\bar{p}}=(\bar{p}_j)_{j=1}^P$ for each producer and uses these criteria both to punish producers and to determine recommendations. To instantiate punishment, the algorithm stops recommending a producer $j$ if their content $p_j^t$ does not meet the user-specific utility requirement $\langle u, \bar{p}_j\rangle$ for any user $u$. (This utility-based punishment differs from the effort-based punishment for \texttt{PunishHedge}.) To determine recommendations, the algorithm 
chooses the producer $j$ from the set of active producers whose individualized criterion $\bar{p}_j$ maximizes the arriving user's utility. The recommendation step is thus based entirely on the criteria $\bold{\bar{p}}$ and not on the content created by producers at each round.

\begin{algorithm}
\label{alg:punishuserutility}
\caption{$\texttt{PunishUserUtility}$} 
\begin{algorithmic}[1] 
\REQUIRE number of producers $P$, individualized criteria $\bold{\Bar{p}}=\left(\Bar{p}_j\right)_{j=1}^{P}$, time horizon $T$, user distribution $\mathcal{D}$
\STATE Initialize $\mathcal{P}^1 = [P]$
\STATE Initialize $u^0 \sim \mathcal{D}$, $A^0 = 1$, and $p_j^0 = \vec{0}$ for all $j \in \mathcal{P}^1$, and let $\mathcal{H}_{\text{full}}^{0} = \left\{(j, \tau, u^{\tau}, A^{\tau}, p_j^{\tau}) \mid \tau = 0, j \in [P]\right\}$.
\FOR{$1 \le t \le T$}
\STATE Let $u^t\sim \mathcal{D}$ be the arriving user at time $t$.
\STATE If $\mathcal{P}^t = \emptyset$, let $A^t = \bot$. 
\STATE Otherwise, let $A^t$ be defined so that 
\begin{equation}\label{eq:recommendation_punishuserutility}
    \small{\mathbb{P}[A^t = j\mid \mathcal{H}_{\text{full}}^{t-1}, u^t] = 
\frac{\mathbbm{1}\left [j=\arg\max_{j'\in [P^t]} \langle u^t, \Bar{p}_{j'}\rangle\right ]}{\left|\arg\max_{j'\in [P^t]} \langle u^t, \Bar{p}_{j'}\rangle\right|}
}
\end{equation}
\STATE Let $\mathcal{H}_{\text{full}}^{t} = \left\{(j, \tau, u^{\tau}, A^{\tau}, p_j^{\tau}) \mid 1 \le \tau \le t, j \in \mathcal{P}^t \right\}$
\STATE Let $\mathcal{P}^{t+1} = \mathcal{P}^t \setminus \big\{ j \in \mathcal{P}^t \mid \exists u \in \text{supp}(\mathcal{D}),\text{s.t.} \langle u, p_j^t \rangle < \langle u, \Bar{p}_j\rangle \big\}$ 
\ENDFOR
\end{algorithmic}
\label{alg:punish-user-utility}
\end{algorithm}

We set the individualized criteria $\bold{\bar{p}}$ to maximize user welfare while guaranteeing that producers prefer to avoid punishment: 
\begin{equation}\label{eq: F}
\begin{aligned}
&W(c, \beta, \mathcal{D}, P) :=  \max_{p_1, \ldots, p_P} \mathbb{E} \Big[\max_{j\in[P]} \langle u, {p}_j\rangle \Big]\\
&\textrm{s.t. } \|{p}_j\|\le \beta^{\frac{1}{c}}\EE\Bigg[\frac{\mathbbm{1}\big [j=\arg\max_{j'\in [P]} \langle u, {p}_{j'}\rangle\big ]}{\big|\arg\max_{j'\in [P]} \langle u, {p}_{j'}\rangle\big|}\Bigg]^{\frac{1}{c}},\forall j\in[P].
\end{aligned} 
\end{equation}
In this optimization problem, the objective is to maximize expected user welfare when each user is matched to the content that maximizes its utility. The constraint limits the production cost, guaranteeing that producers prefer creating high-quality content over facing punishment.

When we set the criteria based on the optimal solution to \eqref{eq: F}, we can lower bound the equilibrium user welfare of \texttt{PunishUserUtility}. 

\begin{restatable}{theorem}{thmequilibriumcharacterizationpunishuserutility}
\label{thm:equilibriumcharacterizationpunishuserutility}
    Let $\beta \in (0,1)$ be the producer discount factor and $c \ge 1$ be the cost function exponent. Suppose that the platform runs \texttt{PunishUserUtility} with individualized criteria $ \bold{\Bar{p}}$, where $\bold{\Bar{p}}$ is $ (1-\epsilon)$ times an optimal solution of \eqref{eq: F}. Then, there exists a pure strategy Nash equilibrium. 
    Furthermore, for all $t \in [T-1]$, at any (mixed-strategy) Nash equilibrium, the user welfare $\mathbb{E}\left[\langle u^t, p^t_{A^t} \rangle\right]$ is at least $ (1-\epsilon) \cdot W(c, \beta, P, \mathcal{D}) $.
\end{restatable}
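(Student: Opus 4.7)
The plan is to first exhibit an explicit pure-strategy Nash equilibrium with welfare $\ge (1-\epsilon)W$, and then leverage the same strict-dominance inequality that validates this equilibrium to argue that the welfare bound survives at every mixed-strategy Nash.

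\textbf{Construction of a pure Nash.} Let $\hat{p}$ be an optimizer of~\eqref{eq: F} so that $\bar{p} = (1-\epsilon)\hat{p}$, and for each producer $j$ fix any $p_j^{\star} \in \arg\min\{\|p\| : p \in \mathbb{R}^D_{\ge 0},\ \langle u, p\rangle \ge \langle u, \bar{p}_j\rangle\ \forall u \in \mathrm{supp}(\mathcal{D})\}$; this minimum exists (closed feasible set, coercive objective) and $\|p_j^{\star}\| \le \|\bar{p}_j\|$ since $\bar{p}_j$ itself is feasible. I propose the pure profile in which producer $j$ plays $p_j^{\star}$ for $t \in [T-1]$ and $\vec{0}$ at $t = T$. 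Under this profile, $\mathcal{P}^t = [P]$ at every round, and producer $j$ receives recommendation probability $q_j := \EE\bigl[\mathbbm{1}[j \in \arg\max_{j'}\langle u, \bar{p}_{j'}\rangle]/|\arg\max_{j'}\langle u, \bar{p}_{j'}\rangle|\bigr]$ every round; by positive-scaling invariance of argmax, this $q_j$ coincides with the quantity appearing in the constraint of~\eqref{eq: F} evaluated at $\hat{p}$.

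\textbf{Strict dominance of ``always dominate'' over ``punished at $t^*$''.} By the one-shot deviation principle, it is enough to rule out single-round deviations. A deviation at time $t^*$ that still satisfies the dominance condition cannot lower cost below $\|p_j^{\star}\|^c$, so it is not profitable. The remaining case is a deviation at some $t^* \le T-1$ that violates dominance and removes $j$ from $\mathcal{P}^{t^*+1},\ldots,\mathcal{P}^T$; the cheapest such deviation plays $\vec{0}$ at $t^*$ and anything thereafter, yielding a utility gap (``always dominate'' minus ``punished at $t^*$'') of
\begin{equation*}
\sum_{t=t^*+1}^{T}\beta^t\,\EE[q_j^{\mathrm{active}}(t)] \;-\; \sum_{t=t^*}^{T-1}\beta^t\,\|p_j^{\star}\|^c \;\ge\; \bigl(\beta q_j - \|p_j^{\star}\|^c\bigr)\sum_{t=t^*}^{T-1}\beta^t,
\end{equation*}
where the inequality uses the observation that removing other producers from $\mathcal{P}^t$ can only raise $j$'s chance of being in the argmax, so $\EE[q_j^{\mathrm{active}}(t)] \ge q_j$ regardless of what the other players do. Plugging in $\|p_j^{\star}\|^c \le \|\bar{p}_j\|^c = (1-\epsilon)^c\|\hat{p}_j\|^c$ and the feasibility constraint $\|\hat{p}_j\|^c \le \beta q_j$ from~\eqref{eq: F} gives $\|p_j^{\star}\|^c \le (1-\epsilon)^c \beta q_j < \beta q_j$, so the gap is strictly positive. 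Hence the proposed profile is a pure Nash equilibrium.

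\textbf{Welfare at any mixed Nash.} The strict inequality just derived is independent of the other producers' strategies (the bound $\EE[q_j^{\mathrm{active}}(t)] \ge q_j$ needs nothing of them), so it holds against any mixed profile of opponents. Consequently, no pure strategy that triggers punishment at some $t^* \le T-1$ can lie in the support of producer $j$'s mixed best response, since it is strictly dominated by any ``always dominate'' strategy. It follows that at every mixed Nash, with probability one $\mathcal{P}^t = [P]$ for all $t \in [T-1]$, and the algorithm recommends $A^t \in \arg\max_{j\in[P]}\langle u^t, \bar{p}_j\rangle$. Since $A^t$ is active, its content satisfies the dominance condition $\langle u^t, p_{A^t}^t\rangle \ge \langle u^t, \bar{p}_{A^t}\rangle = \max_{j\in[P]}\langle u^t, \bar{p}_j\rangle$, and taking expectations gives
\begin{equation*}
\EE\bigl[\langle u^t, p_{A^t}^t\rangle\bigr] \;\ge\; \EE\Bigl[\max_{j\in[P]}\langle u^t, \bar{p}_j\rangle\Bigr] \;=\; (1-\epsilon)\cdot W(c,\beta,\mathcal{D},P).
\end{equation*}

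\textbf{Main obstacle.} The delicate step is the dominance calculation: the one-round delay between a punishment-triggering action at $t^*$ and actual removal at $t^*+1$ translates into exactly one factor of $\beta$ when rewards and costs are aligned term-by-term. The feasibility constraint of~\eqref{eq: F} barely matches this factor, so the $(1-\epsilon)^c$ buffer coming from scaling $\hat{p}$ down to $\bar{p}$ is precisely what upgrades weak dominance into strict dominance. This upgrade is essential for the mixed-Nash support argument: without it, mixed equilibria randomizing between ``always dominate'' and ``punished'' strategies could exist and destroy the clean welfare bound.
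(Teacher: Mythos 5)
Your proposal is correct and follows essentially the same route as the paper: you build the same minimum-norm compliant vectors $p_j^\star$, show that compliance strictly dominates any punishment-triggering strategy via the same one-round-delay accounting, the observation that recommendations and opponents' active status are independent of producer $j$'s realized content, and the $(1-\epsilon)$ slack in the constraint of \eqref{eq: F}, and then exclude punished strategies from the support of any mixed Nash to get the welfare bound, exactly as in the paper's Lemma \ref{lem:PNE_dominate_punish} and its use in the proof of Theorem \ref{thm:equilibriumcharacterizationpunishuserutility}. One caveat: the appeal to the ``one-shot deviation principle'' is misplaced in this simultaneous-move commitment game (and, read literally, single-round checks would not suffice), but your argument is unaffected because the ``cheapest punished deviation'' you actually analyze zeroes out content from $t^*$ onward and the per-round minimality of $\|p_j^\star\|$ handles compliant rounds, so your comparisons in fact cover arbitrary multi-round deviations, matching the paper's two-case analysis.
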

\begin{proof}[Proof sketch]
The constraint in \eqref{eq: F} guarantees that producers weakly prefer to stay in the active set over being punished at equilibrium. The slack $\epsilon > 0$ guarantees that this preference is \textit{strict}. Altogether, we achieve user welfare $(1-\epsilon) \cdot W(c, \beta, \mathcal{D}, P)$, since the objective function in \eqref{eq: F} is exactly the user welfare guaranteed by the individualized criteria. A formal proof is given in \Cref{app: proofpunishuserutility}.
\end{proof}

Theorem \ref{thm:equilibriumcharacterizationpunishuserutility} shows that the welfare of \texttt{PunishUserUtility} as $\epsilon \rightarrow 0$ is at least  $W(c, \beta, P, \mathcal{D})$. To analyze the welfare of \texttt{PunishUserUtility}, it thus suffices to analyze $W(c, \beta, P, \mathcal{D})$.

In the following subsections, by analyzing the welfare $W(c, \beta, P, \mathcal{D})$, we show that \\ \texttt{PunishUserUtility} addresses the two shortcomings of \texttt{PunishLinDirectionHedge} described in Section \ref{subsec:partialrecovery}. More specifically, we show that  the welfare $W(c, \beta, P, \mathcal{D})$ achieves an improved dependence on the number of producers $P$ (Section \ref{subsec:welfareanalysisimproveddependenceP}) and the user distribution $\mathcal{D}$ (Section \ref{subsec:welfareanalysisimproveddependenceD}).

\subsection{Welfare analysis: Improved dependence on $P$}\label{subsec:welfareanalysisimproveddependenceP}

We show that the welfare of \texttt{PunishUserUtility} achieves a better dependence on the number of producers $P$ compared to the welfare of \texttt{PunishLinDirectionHedge}. In particular, in contrast to \Cref{eq:punishlinhedgedirectionwelfare}, \texttt{PunishUserUtility} achieves welfare that is independent of $P$:
\begin{restatable}{proposition}{propboundproducer}
\label{prop:bound1producer}
For $W(c, \beta, \mathcal{D}, P) $ as defined in \eqref{eq: F}, the following lower bound holds: $W(c, \beta, \mathcal{D}, P)  \ge \beta^{\frac{1}{c}}\|\EE[u]\|$. 
\end{restatable}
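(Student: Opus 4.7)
The plan is to lower bound $W(c, \beta, \DUser, P)$ by exhibiting an explicit feasible point for \eqref{eq: F} whose objective value equals $\beta^{1/c}\|\EE[u]\|$. Let $v := \EE_{u\sim\DUser}[u]$; since every user $u$ is a nonneg unit vector in $\mathbb{R}^D_{\ge 0}$, we have $v \in \mathbb{R}^D_{\ge 0}$ with $\|v\|>0$, so $v/\|v\|$ is well-defined. I propose the asymmetric candidate
\[ p_1 \;=\; \beta^{1/c}\cdot v/\|v\|, \qquad p_j \;=\; \vec{0} \ \text{ for } j \ge 2.\]
This captures the spirit of Section \ref{sec:welfare} that user welfare benefits from specialization: a single producer absorbs all of the effort and all of the recommendations, which avoids the $1/P$ factor coming from $P$ producers splitting recommendations along a common direction in \Cref{cor: punishlindirectionhedge}.

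First I would compute the objective. Since $p_j = \vec{0}$ for $j\ge 2$ and $\langle u, p_1\rangle\ge 0$ in the nonneg orthant, the inner max is always attained at producer~$1$, so
\[ \EE\Bigl[\max_{j\in[P]}\langle u,p_j\rangle\Bigr] \;=\; \EE[\langle u, p_1\rangle] \;=\; \tfrac{\beta^{1/c}}{\|v\|}\langle v, v\rangle \;=\; \beta^{1/c}\|v\|, \]
which already matches the target bound. The feasibility constraint in \eqref{eq: F} for $j\ge 2$ is trivial since $\|p_j\|=0$, so the only work is to verify the $j=1$ constraint, namely that
\[ \beta^{1/c} \;=\; \|p_1\| \;\le\; \beta^{1/c}\EE\!\left[\frac{\mathbbm{1}[1\in\arg\max_{j'}\langle u,p_{j'}\rangle]}{|\arg\max_{j'}\langle u,p_{j'}\rangle|}\right]^{1/c}.\]
It suffices to show the expectation inside equals $1$, which reduces to proving $\langle u,v\rangle>0$ for $\DUser$-a.e.~$u$: this forces $\arg\max = \{1\}$ almost surely (since $\langle u, p_j\rangle=0$ for $j\ge 2$), so the indicator is $1$ and $|\arg\max|=1$ almost surely.

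The main (and only nontrivial) step is the almost-sure positivity $\langle u,v\rangle>0$. I would prove $\mathrm{supp}(u)\subseteq\mathrm{supp}(v)$ almost surely: for each coordinate $i$ with $v_i = \EE[u_i] = 0$, nonnegativity $u_i\ge 0$ forces $u_i = 0$ almost surely, and a union bound over the finitely many such $i$ gives $u_i = 0$ for all $i\notin\mathrm{supp}(v)$ almost surely. Since $\|u\|=1$ implies $\mathrm{supp}(u)\ne\emptyset$, this gives $\mathrm{supp}(u)\cap\mathrm{supp}(v)\ne\emptyset$ a.s., and hence $\langle u,v\rangle=\sum_i u_i v_i > 0$ a.s. The main obstacle is precisely this support argument; it is where the $\mathbb{R}^D_{\ge 0}$ assumption is genuinely used, because in the absence of nonnegativity one would need to perturb the $p_j$ infinitesimally to break ties at measure-zero orthogonality events. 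Everything else is a direct substitution into \eqref{eq: F}.
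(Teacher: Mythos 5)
Your proposal is correct and takes essentially the same route as the paper's proof: the paper likewise exhibits the feasible point $p_1=\beta^{1/c}\,\EE[u]/\|\EE[u]\|$, $p_j=\vec{0}$ for $j\ge 2$, verifies the constraint in \eqref{eq: F}, and computes the objective as $\beta^{1/c}\|\EE[u]\|$. Your almost-sure positivity argument for $\langle u,\EE[u]\rangle$ merely makes explicit a tie-breaking point that the paper glosses over when asserting the winning fraction for producer $1$ equals $1$.
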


The bound in Proposition \ref{prop:bound1producer} improves over the welfare bound of \texttt{PunishLinDirectionHedge} ($\left(\frac{\beta}{P}\right)^{\frac{1}{c}}\|\EE[u]\|$, Theorem \ref{thm: punishlindirectionhedge}) by a factor of $P^{1/c}$. Setting the criteria $\left(\bar{p}_j\right)_{j \in [P]}$ equal to zero for all but one producer already achieves the bound in Proposition \ref{prop:bound1producer}: a single producer wins all of the users, which weakens the constraint in \eqref{eq: 
F}. This naive setting of the criteria achieves the lower bound for the welfare of \texttt{PunishUserUtility} in Proposition \ref{prop:bound1producer}; we defer the full proof to \Cref{appebndix:proofspropbound1producer}.

The economic intuition is that reducing homogeneous competition can enable greater effort investment. 
In particular, while \texttt{PunishLinDirectionHedge} incurred a $1/P$ factor from all producers competing for the same outcome, \texttt{PunishUserUtility} implicitly restricts competition between producers.

\subsection{Welfare analysis: Improved dependence on $\mathcal{D}$}\label{subsec:welfareanalysisimproveddependenceD} 

Compared with the naive setting of criteria in the previous subsection, we can do even better by having different producers specialize to different subsets of users. The potential for specialization improves the dependence on $\mathcal{D}$ and even results in the optimal welfare in some cases.

\paragraph{Welfare bound in the limit as $c \rightarrow \infty$.}
We first analyze the welfare $W(c, \beta, \mathcal{D}, P)$ as a function of $c$, focusing on the limiting case of $c\rightarrow\infty$ for analytic simplicity. 
\begin{restatable}{proposition}{proplimitanalysis}
\label{prop:limitanalysis}
 Suppose that $\mathcal{D}$ has finite support. Then the limiting welfare $\lim_{c \rightarrow \infty} W(c, \beta, \mathcal{D}, P)$ is equal to the following optimization problem:
\begin{equation}\label{eq: G}
\begin{aligned}
G(\beta, \mathcal{D}, P) := &\max_{p_1, \ldots, p_P} \mathbb{E} \Big [\max_{j\in[P]} \langle u, p_j \rangle \Big]\\
&\textrm{s.t. }  \|p_j\| = 1, \forall j\in[P].
\end{aligned}
\end{equation}
Moreover, if $P \ge |\text{supp}(\mathcal{D})|$, then $\lim_{c \rightarrow \infty} W(c, \beta, \mathcal{D}, P) = 1$. Finally, for any $P \ge 1$, it holds that  $\lim_{c \rightarrow \infty} W(c, \beta, \mathcal{D}, P) \ge \|\EE_{\mathcal{D}}[u]\|$.
\end{restatable}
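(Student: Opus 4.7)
The plan is to handle the three claims of \Cref{prop:limitanalysis} in sequence, with the identification of the limit as $G(\beta, \mathcal{D}, P)$ being the main substantive step; the second and third claims then follow from analyzing $G$ directly.

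For the upper bound on the limit in the first claim, the constraint in \eqref{eq: F} forces $\|p_j\| \le \beta^{1/c} \le 1$ for every feasible $W$-solution, so every feasible $(p_j)$ also satisfies $\|p_j\| \le 1$. Since $u, p_j \in \mathbb{R}^D_{\ge 0}$, rescaling $p_j \mapsto p_j/\|p_j\|$ only increases $\langle u, p_j\rangle$ pointwise, so the $W$-objective is dominated by the $G$-objective (with the $\|p_j\|=1$ constraint replaced by the equivalent $\|p_j\|\le 1$). Hence $W(c,\beta,\mathcal{D},P)\le G(\beta,\mathcal{D},P)$ for every $c$. For the matching lower bound, I would take an optimal $(p_j^*)_{j\in[P]}$ for $G$ with corresponding $G$-argmax probabilities $q_j^*$, set $S=\{j:q_j^*>0\}$, and construct a $W$-candidate by $p_j=\lambda p_j^*$ for $j\in S$ and $p_j=\vec{0}$ otherwise, where $\lambda=\beta^{1/c}(\min_{j\in S}q_j^*)^{1/c}$. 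This candidate has objective value $\lambda\cdot G(\beta,\mathcal{D},P)\to G(\beta,\mathcal{D},P)$ as $c\to\infty$, so it only remains to verify feasibility.

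Feasibility reduces to showing the induced $W$-argmax probabilities satisfy $q_j=q_j^*$ for $j\in S$. The key observation is that since $\mathcal{D}$ has finite support, $q_k^*=0$ forces $k$ to lie outside the $G$-argmax at \emph{every} $u\in\mathrm{supp}(\mathcal{D})$, since otherwise such a $u$ would contribute positive mass to $q_k^*$. First, at any $u\in\mathrm{supp}(\mathcal{D})$ with $\max_k\langle u, p_k^*\rangle>0$, the $G$-argmax therefore lies entirely in $S$; since scaling by $\lambda$ preserves argmax direction, $\arg\max_k\langle u, p_k\rangle=\arg\max_k\langle u, p_k^*\rangle\subseteq S$ for these $u$, while the zero vectors for $j\notin S$ never attain the positive max. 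Second, at any remaining $u\in\mathrm{supp}(\mathcal{D})$ with $\max_k\langle u, p_k^*\rangle=0$, one must have $\langle u, p_k^*\rangle=0$ for every $k$ (otherwise the $G$-argmax would include $k\notin S$ with positive $q_k^*$-mass, a contradiction), so both the $G$-argmax and the $W$-argmax at $u$ equal $[P]$ with uniform weight $1/P$ and contribute identically to $q_j$ and $q_j^*$. With $q_j=q_j^*$ established, the $W$-constraint for $j\in S$ collapses to $\lambda\le\beta^{1/c}(q_j^*)^{1/c}$, which holds by construction, and the $W$-constraint for $j\notin S$ is trivially satisfied since $\|p_j\|=0$.

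For the second claim, by the first claim it suffices to show $G(\beta,\mathcal{D},P)=1$. The upper bound $G\le 1$ is Cauchy--Schwarz ($\langle u, p_j\rangle\le\|u\|\|p_j\|=1$). For the lower bound, enumerate $\mathrm{supp}(\mathcal{D})=\{u_1,\ldots,u_K\}$ with $K\le P$ and set $p_k=u_k$ for $k\le K$ (arbitrary unit vectors otherwise); then $\max_j\langle u_i, p_j\rangle\ge\|u_i\|^2=1$ for each $u_i$ in the support. For the third claim, taking all $p_j$ equal to the unit vector $\EE[u]/\|\EE[u]\|$ (or any unit vector if $\EE[u]=0$) yields objective $\langle\EE[u],\EE[u]/\|\EE[u]\|\rangle=\|\EE[u]\|$, so $G(\beta,\mathcal{D},P)\ge\|\EE[u]\|$ and the limit bound follows from the first claim. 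The main obstacle is the argmax-preservation step in the lower bound of the first claim, which relies on the finite-support hypothesis to exclude measure-zero pathologies and requires carefully tracking how the zero vectors assigned to $j\notin S$ interact with ties at users $u$ where $\max_k\langle u, p_k^*\rangle=0$.
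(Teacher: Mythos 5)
Your proposal is correct, and the second and third claims are handled exactly as in the paper (set $p_k=u_k$ plus Cauchy--Schwarz; set all $p_j=\EE[u]/\|\EE[u]\|$). The overall architecture for the first claim is also the same squeeze the paper uses ($W\le G$ for every finite $c$, plus a feasible construction for \eqref{eq: F} whose value tends to $G$), and your upper bound coincides with the paper's Lemma \ref{lemma:upperbound}: feasibility in \eqref{eq: F} forces $\|p_j\|\le\beta^{1/c}\le 1$, and rescaling to unit norm only increases the objective. Where you genuinely diverge is the lower-bound construction (the paper's Lemma \ref{lem: limFgeG}): the paper keeps all $P$ producers, replaces every producer whose argmax share is zero by a duplicate of a producer with share at least $\alpha P$, where $\alpha=\frac{1}{P^2}\min_{x\in\mathrm{supp}(\mathcal{D})}\PP[u=x]$, and then restricts the feasible set of \eqref{eq: F} to $\|p_j\|=\beta^{1/c}\alpha^{1/c}$ with share at least $\alpha$; you instead zero out the zero-share producers, scale the remaining $G$-optimal vectors by $\lambda=\beta^{1/c}(\min_{j\in S}q_j^*)^{1/c}$, and prove the shares are \emph{exactly} preserved, so feasibility is immediate and the value is $\lambda\,G\to G$. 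Both routes hinge on finite support (positive point masses force a zero-share producer to be outside the argmax at every support point), but yours avoids the threshold-and-duplication step and is arguably more direct, at the cost of the argmax-preservation check you carry out.

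Two small points to tighten, neither of which breaks the argument. First, in the upper bound the map $p_j\mapsto p_j/\|p_j\|$ is undefined when $p_j=\vec{0}$; replace such a vector by an arbitrary nonnegative unit vector (its inner products remain nonnegative, so the pointwise max cannot decrease)---this is also how one justifies your asserted equivalence of the $\|p_j\|=1$ and $\|p_j\|\le 1$ constraint sets, which is precisely the paper's step $I\le G$. Second, your Case B (a support point $u$ with $\max_k\langle u,p_k^*\rangle=0$) is in fact vacuous unless $S=[P]$: at such a $u$ every producer, including any $k\notin S$, lies in the argmax and would receive mass $\PP[u]/P>0$, contradicting $q_k^*=0$. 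The conclusion you draw there ($q_j=q_j^*$ for $j\in S$) is still what holds, so this is a matter of phrasing rather than substance.
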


Proposition \ref{prop:limitanalysis} relates the limiting welfare of \texttt{PunishUserUtility} to the optimum of \eqref{eq: G}, which is easy to analyze. Moreover, the optimum of \eqref{eq: G} is $1$ once the number of producers is as high as the number of users in the support (Lemma \ref{lemma:finitesupport}). The intuition is that there are enough producers to specialize their content at the direction of each individual user so that all users are catered to. Finally, the optimum of \eqref{eq: G} is always at least as large as  $\|\EE_{\mathcal{D}}[u]\|$ since we can set $p_j = \EE_{\mathcal{D}}[u] / \|\EE_{\mathcal{D}}[u]\|$ for all $j$. We defer the full proof to \Cref{appebndix:proofslimitanalysis}. 

This analysis illustrates that while \texttt{PunishLinDirectionHedge} incurs a $\|\EE_{\mathcal{D}}[u]\|$ factor from the lack of specialization, \texttt{PunishUserUtility} can achieve better dependence on the user distribution $\mathcal{D}$ when $P$ is sufficiently large. The economic intuition is that different producers can point at different directions of user subgroups rather than a single direction of the average user, allowing content to align with user preferences in a more fine-grained way. 

Moving beyond \texttt{PunishLinHedgeDirection}, this analysis also enables us to compare the welfare of \texttt{PunishUserUtility} to \textit{arbitrary} algorithms. The following corollary of Proposition \ref{prop:limitanalysis} shows that the welfare bound $W(c, \beta, \mathcal{D}, P)$ is \textit{optimal} relative to any algorithm. 
\begin{restatable}{corollary}{coroptimality}\label{cor:optimality}
    Suppose that $\mathcal{D}$ has finite support and $P \ge |\text{supp}(\mathcal{D})|$. The limiting welfare $\lim_{c \rightarrow \infty} W(c, \beta, \mathcal{D}, P)$ is equal to the optimal welfare that can be achieved by any algorithm at any (mixed-strategy) Nash equilibrium. 
\end{restatable}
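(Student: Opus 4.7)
The plan is to show two matching bounds for the optimal welfare achievable at mixed-strategy Nash equilibria in the limit $c\to\infty$: \texttt{PunishUserUtility} achieves welfare approaching $1$ (lower bound), while no algorithm can do better at any Nash equilibrium (upper bound). By Proposition~\ref{prop:limitanalysis}, the hypothesis $P\ge|\text{supp}(\mathcal{D})|$ already yields $\lim_{c\to\infty} W(c,\beta,\mathcal{D},P)=1$, so it suffices to prove (i) $\lim_{c\to\infty}\sup_{\text{alg, NE}}\mathbb{E}[\langle u^t,p^t_{A^t}\rangle]\ge 1$ and (ii) the same supremum is at most $1$ in the limit.

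For (i), I would invoke Theorem~\ref{thm:equilibriumcharacterizationpunishuserutility}: \texttt{PunishUserUtility} with individualized criteria taken to be $(1-\epsilon)$ times an optimal solution of \eqref{eq: F} guarantees welfare at least $(1-\epsilon)\cdot W(c,\beta,\mathcal{D},P)$ at every mixed-strategy Nash equilibrium. Sending $c\to\infty$ first and then $\epsilon\to 0$ gives the lower bound.

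For (ii), the key idea is a budget argument from individual rationality. At any mixed-strategy Nash equilibrium under any algorithm, producer $j$ can always deviate to producing the zero vector at every round, and this deviation yields nonnegative utility. Hence $u_j(\mu_j\mid \mu_{-j})\ge 0$, which rearranges to
\[
\mathbb{E}\Bigl[\sum_{t=1}^T \beta^t \|p_j^t\|^c\Bigr] \le \mathbb{E}\Bigl[\sum_{t=1}^T \beta^t \mathbbm{1}[A^t=j]\Bigr].
\]
Summing over $j$ and using $\sum_j \mathbbm{1}[A^t=j]\le 1$ yields $\sum_j \mathbb{E}[\|p_j^t\|^c]\le \beta^{1-t}/(1-\beta)$ for every fixed $t$. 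Combining this with Cauchy--Schwarz (using $\|u^t\|=1$) and two applications of H\"older's inequality (once at the expectation level with exponents $c$ and $c/(c-1)$ on the pair $(\|p_j^t\|,\mathbbm{1}[A^t=j])$, and once on the sum over $j$ with the same exponents) yields
\[
\mathbb{E}[\langle u^t, p^t_{A^t}\rangle] \le \sum_{j}\bigl(\mathbb{E}[\|p_j^t\|^c]\bigr)^{1/c}\mathbb{P}[A^t=j]^{(c-1)/c} \le \Bigl(\tfrac{1}{\beta^{t-1}(1-\beta)}\Bigr)^{1/c},
\]
which tends to $1$ as $c\to\infty$.

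The main obstacle is extracting the sharp constant $1$ in the upper bound. A naive bound via Cauchy--Schwarz followed by Jensen's inequality would produce an extra factor of $P^{(c-1)/c}\to P$, which would only recover the bound of Proposition~\ref{prop:bound1producer} and not match Proposition~\ref{prop:limitanalysis}. The two-stage H\"older argument trades the producer's cost budget against the fact that only one arm is pulled per round, and that coupling is essential for producing the tight constant that matches the lower bound.
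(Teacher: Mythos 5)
Your proof is correct, and the lower-bound half (Theorem \ref{thm:equilibriumcharacterizationpunishuserutility} plus Proposition \ref{prop:limitanalysis}, sending $c\to\infty$ and then $\epsilon\to 0$) is exactly how the paper argues achievability. Where you diverge is the upper bound: the paper invokes Proposition \ref{Prop: upperboundwelfare}, which rests on Lemma \ref{lemma:naive-upper-bound} --- a single-round deviation (zero out $p_j^s$ at one time step $s$ only) showing that \emph{every} vector in the support of a best response satisfies $\|p_j^t\|\le(\beta/(1-\beta))^{1/c}$ pointwise, after which Cauchy--Schwarz immediately gives $\EE[\langle u^t,p^t_{A^t}\rangle]\le(\beta/(1-\beta))^{1/c}\to 1$, with no H\"older step and a bound uniform in $t$. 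You instead use the global deviation to the all-zero strategy, which only gives an aggregate discounted budget $\sum_j\EE\bigl[\sum_t\beta^t\|p_j^t\|^c\bigr]\le\beta/(1-\beta)$, and then need the two-stage H\"older argument (pairing $\|p_j^t\|^c$ against $\mathbbm{1}[A^t=j]$, then summing over $j$ against $\sum_j\PP[A^t=j]\le 1$) to avoid the spurious factor of $P$; this is a genuinely different and valid route, and your observation about why naive Cauchy--Schwarz/Jensen fails is accurate. The trade-off: your budget-plus-H\"older bound $(\beta^{1-t}/(1-\beta))^{1/c}$ degrades with $t$ (harmless here since $T$ is fixed and only the $c\to\infty$ limit at each fixed $t$ is needed, but weaker than the paper's time-uniform constant), while the paper's per-round deviation is both simpler and sharper; on the other hand, your argument uses only individual rationality of the all-zero strategy rather than optimality against every single-coordinate deviation, which is a slightly coarser equilibrium property. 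Both arguments apply to arbitrary algorithms and arbitrary mixed-strategy Nash equilibria, so there is no gap in generality --- just note that the H\"older exponents require $c>1$, which is immaterial in the $c\to\infty$ limit.
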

\noindent We defer the proof of Corollary \ref{cor:optimality} to \Cref{app: upperboundwelfare}

Since $W(c, \beta, \mathcal{D}, P)$ lower bounds the welfare of \texttt{PunishUserUtility} in the limit as $\epsilon \rightarrow 0$, Corollary \ref{cor:optimality} shows that \texttt{PunishUserUtility} in the limit as $c \rightarrow 0$ achieves the \textit{optimal} welfare in this limiting regime. (We note that for any $\epsilon >0$, \texttt{PunishUserUtility} is suboptimal due to the multiplicative factor of $(1-\epsilon)$.) This result highlights that \texttt{PunishUserUtility} successfully leverages specialization to maximize user welfare.

\paragraph{Finite $c$: the case of 2 users.} Since the above analysis focused on the case of $c \rightarrow \infty$, we now turn to finite $c$ and examine the gap between $W(c, \beta, \mathcal{D}, P)$ and the upper bound in Proposition \ref{Prop: upperboundwelfare}. For analytic tractability, we focus on the case of 2 users and explicitly compute  $W(c, \beta, \mathcal{D}, P) $ in closed-form.
\begin{restatable}{proposition}{propusers}
\label{prop:2users}
Suppose that $D = 2$ and $\mathcal{D}$ is a uniform distribution over $u_1, u_2 \in \mathbb{R}_{\ge 0}^2$. Let $\theta = \cos^{-1}\left(\frac{\langle u_1, u_2\rangle}{||u_1||_2 \cdot ||u_2||_2}\right)$ be the angle between the users. Let $\theta^* (c) = 2 \cos^{-1}(2^{-1/c})$. For any $P \ge 2$ and any $\beta \in (0, 1]$, the welfare is equal to:
\[
W(c, \beta, \mathcal{D}, P) =
\begin{cases}
\beta^{\frac{1}{c}} \|\EE[u]\|_2 \quad \text{ if } \theta <\theta^*(c) \\
(\beta/2)^{\frac{1}{c}} \quad \text{ if } \theta \ge \theta^*(c).
\end{cases}
\]
\end{restatable}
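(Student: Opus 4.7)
The plan is to solve the optimization \eqref{eq: F} explicitly for this two-user instance. I would (i) reduce to at most two ``active'' producers, (ii) analyze the two possible winner patterns---one producer covers both users versus one dedicated producer per user---and (iii) compare the resulting welfare values to extract the threshold $\theta^*(c)$.

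\textbf{Reduction step.} Only producers that are (unique) maximizers of $\langle u_i, \cdot\rangle$ at some user $u_i$ contribute to the objective. Producers that are never maximizers can be set to zero without affecting the objective, and if several producers tie at a user one can transfer mass to a single tied producer: this strictly relaxes its norm constraint via the concavity of $x \mapsto x^{1/c}$, so the optimum is attained by a configuration with at most one distinct maximizer per user. By symmetry we therefore reduce to two canonical configurations regardless of $P \ge 2$.

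\textbf{Two configurations.} In \emph{Configuration A}, one producer $p_1$ wins both users, so $q_1 = 1$ and $\|p_1\| \le \beta^{1/c}$. The objective $\frac{1}{2}(\langle u_1, p_1\rangle + \langle u_2, p_1\rangle) = \langle \mathbb{E}[u], p_1\rangle$ is maximized by aligning $p_1$ with $\mathbb{E}[u]$ at the boundary, giving welfare $\beta^{1/c}\|\mathbb{E}[u]\|_2$. In \emph{Configuration B}, two producers specialize: setting $p_j = (\beta/2)^{1/c} u_j$ gives $q_1 = q_2 = 1/2$, saturates the constraint $\|p_j\| \le (\beta/2)^{1/c}$, and yields welfare $(\beta/2)^{1/c}$. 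Consistency of the winner pattern in Configuration B follows because $\langle u_1, p_2\rangle = (\beta/2)^{1/c}\cos\theta < \langle u_1, p_1\rangle$ whenever $\theta > 0$.

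\textbf{Threshold.} Using $\|\mathbb{E}[u]\|_2^2 = \frac{1+\cos\theta}{2} = \cos^2(\theta/2)$, the two welfare values coincide exactly when $\cos(\theta/2) = 2^{-1/c}$, i.e.\ at $\theta = 2\cos^{-1}(2^{-1/c}) = \theta^*(c)$. Configuration A dominates for $\theta < \theta^*(c)$ and Configuration B for $\theta \ge \theta^*(c)$, matching the stated closed form. The main obstacle is the reduction step: rigorously ruling out configurations with shared winning probability at either user requires showing that any such ``mixed'' configuration is dominated by concentrating that probability onto a single producer. I would handle this by bounding the objective in terms of $\sum_j q_j^{1/c}$-type quantities and appealing to the concavity of $x \mapsto x^{1/c}$ on the simplex $\{(q_1,\ldots,q_P) : \sum_j q_j = 1\}$, so that the optimum is attained at an extreme point with support size at most two.
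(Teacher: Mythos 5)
Your overall route is the same as the paper's: split according to whether one producer wins both users or each user has its own dedicated winner, compute the best welfare under each pattern ($\beta^{1/c}\|\EE[u]\|_2$ versus $(\beta/2)^{1/c}$), and read off the threshold from $\|\EE[u]\|_2=\cos(\theta/2)$. The paper handles general $P$ by an explicit two-way reduction $W(c,\beta,\mathcal{D},P)=W(c,\beta,\mathcal{D},2)$ (keep the two winning producers, pad with zeros), which is the role your reduction step plays.

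One caution about the step you yourself flag as the main obstacle. The proposed justification---concavity of $x\mapsto x^{1/c}$ on the simplex, ``so that the optimum is attained at an extreme point with support size at most two''---is backwards: a sum of concave terms over the simplex is maximized at the barycenter, not at an extreme point. Fortunately no extreme-point argument is needed, because with two users the objective only ever involves a winner $j_1$ at $u_1$ and a winner $j_2$ at $u_2$. Writing $q_j$ for the winning probability appearing in the constraint of \eqref{eq: F}, there are just two cases: if some producer wins (possibly tied) at both users, then $q_{j^*}\le 1$ and the welfare is $\langle \EE[u],p_{j^*}\rangle\le \beta^{1/c}\|\EE[u]\|_2$; if the winner sets at $u_1$ and $u_2$ are disjoint, then each winner has $q_{j_i}\le 1/2$ even under ties, and the welfare is at most $\tfrac12\left(\|p_{j_1}\|+\|p_{j_2}\|\right)\le \tfrac{\beta^{1/c}}{2}\left(q_{j_1}^{1/c}+q_{j_2}^{1/c}\right)\le (\beta/2)^{1/c}$. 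This bound holds for every $P\ge 2$ and every tie pattern, so combined with your two achievability constructions it closes the argument; it is exactly the paper's Case 1/Case 2 dichotomy, and the ``transfer mass to a single tied producer'' maneuver can be dropped.
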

\noindent We defer the proof to \Cref{app:2users}. 

Proposition \ref{prop:2users} reveals that when $\theta < \theta^*(c)$ (the two users' preferences are similar), producers do not specialize their content to individual users even when $P$ is arbitrarily large. This contrasts with Proposition \ref{prop:limitanalysis} where specialization occurred where the number of producers was as large as the number of users. This finding bears resemblance to \citet{JGS22} where specialization did not always occur at equilibrium when preferences were similar across users; however, the results in \citet{JGS22} focus on the equilibrium behavior in a static game, whereas we focus on the solutions to a welfare-maximizing optimization program with cost constraints. 
 
To analyze the gap between $W(c, \beta, \mathcal{D}, P)$ and the optimal welfare, we compare $W(c, \beta, \mathcal{D}, P)$  with the following upper bound on the welfare of any algorithm (proof deferred to Appendix \ref{app: upperboundwelfare}). 
\begin{restatable}{proposition}{Propupperboundwelfare}\label{Prop: upperboundwelfare}
For any platform algorithm, at any (mixed-strategy) Nash equilibrium of producer strategies $(\mu_1, \ldots, \mu_P)$, the user welfare at any time $t$ satisfies \[\EE\left[\langle u^t, p_{A^t}^t \rangle\right]\le \left(\frac{\beta}{1-\beta}\right)^{\frac{1}{c}}.\]
\end{restatable}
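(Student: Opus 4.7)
The plan is to prove the bound by showing a stronger pointwise statement: at any mixed-strategy Nash equilibrium, for every producer $j$, every pure strategy $\mathbf{p}_j$ in the support of the equilibrium mixture $\mu_j$, and every time step $t$, the norm satisfies $\|p_j^t\| \le (\beta/(1-\beta))^{1/c}$. Since $\|u^t\|=1$, Cauchy-Schwarz then gives $\langle u^t, p_{A^t}^t\rangle \le \|p_{A^t}^t\| \le (\beta/(1-\beta))^{1/c}$ pointwise, and taking expectation over the algorithm's randomness and over $\mu$ yields the claimed bound. This reduction is the easy part; all the work lies in the pointwise norm bound.

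To bound $\|p_j^t\|$, I would use a single-coordinate deviation analogous to the one used in \Cref{subsec:prooftechnique}. Fix $\mathbf{p}_j \in \mathrm{supp}(\mu_j)$ and let $\mathbf{p}_{j,t,0}$ be the strategy that matches $\mathbf{p}_j$ except at time $t$, where it plays $\vec{0}$. The crucial observation is that at any round $s \le t$, the algorithm's draw $A^s$ depends only on $\mathcal{H}^{s-1}$, which does not contain $p_j^t$; hence the distribution of $A^s$ is identical under $\mathbf{p}_j$ and under $\mathbf{p}_{j,t,0}$ for all $s\le t$. Consequently, when we expand $u_j(\mathbf{p}_j\mid \mathbf{p}_{-j}) - u_j(\mathbf{p}_{j,t,0}\mid \mathbf{p}_{-j})$, the only terms that survive are the cost saving $\beta^t \|p_j^t\|^c$ and the change in expected discounted recommendation probabilities from $s = t+1$ onwards.

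The equilibrium condition gives $\EE_{\mathbf{p}_{-j}\sim \mu_{-j}}[u_j(\mathbf{p}_j\mid \mathbf{p}_{-j}) - u_j(\mathbf{p}_{j,t,0}\mid \mathbf{p}_{-j})] \ge 0$, which after rearranging yields
\[
\beta^t \|p_j^t\|^c \;\le\; \EE\!\left[\sum_{s=t+1}^T \beta^s \bigl(\mathbbm{1}[A^s(\mathbf{p}_j;\mathbf{p}_{-j}) = j] - \mathbbm{1}[A^s(\mathbf{p}_{j,t,0};\mathbf{p}_{-j}) = j]\bigr)\right].
\]
The right-hand side is trivially at most $\sum_{s=t+1}^T \beta^s \le \beta^{t+1}/(1-\beta)$ (bounding the first indicator by $1$ and the second by $0$). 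Dividing both sides by $\beta^t$ gives $\|p_j^t\|^c \le \beta/(1-\beta)$, as desired.

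The proof is conceptually simple because we do not need any property of the algorithm beyond causality (that $A^s$ depends only on $\mathcal{H}^{s-1}$) and the ``at most one arm pulled per round'' property; no analogue of \Cref{lemma:main} is needed here because we only need the crude upper bound of $1$ on the indicator. The only subtlety is the mixed-strategy bookkeeping: one must phrase the equilibrium deviation condition as an inequality that holds in expectation over $\mu_{-j}$ and the algorithm's internal randomness while concluding the deterministic bound $\|p_j^t\| \le (\beta/(1-\beta))^{1/c}$ for each $\mathbf{p}_j$ in the support of $\mu_j$. Once that is handled, taking an outer expectation over $A^t$ and over all producers' mixtures delivers the stated welfare bound.
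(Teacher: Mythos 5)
Your proposal is correct and follows essentially the same route as the paper: the paper proves the pointwise effort bound $\|p_j^t\|\le(\beta/(1-\beta))^{1/c}$ via exactly this single-time-step deviation to $\vec{0}$ with the crude bound of $1$ on the change in recommendation probability (Lemma \ref{lemma:naive-upper-bound}), and then concludes via Cauchy--Schwarz with $\|u^t\|=1$, just as you do.
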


When comparing the bound in Proposition \ref{prop:2users} with the bound in Proposition \ref{Prop: upperboundwelfare}, there are two sources of suboptimality. First, even when $\theta \ge \theta^*(c)$, the lower bound is a factor of $((1-\beta)/2)^{1/c}$ smaller than the upper bound. Second, when $\theta< \theta^*(c)$ there is an additional factor of $||\mathbb{E}[u]||_2$ gap. This arises from the lack of specialization. Closing the gap between the upper and lower bound for finite $c$ is an interesting direction for future work.

\section{Discussion}

We studied how a platform's learning algorithm shapes the content created by producers in a recommender system, focusing on the equilibrium quality of the content from the point of view of both the producer effort and the user welfare. We showed that the equilibrium quality decays across time when the platform runs $\texttt{LinHedge}$ or $\texttt{LinEXP3}$. In light of these negative results, we designed punishment-based algorithmic approaches to incentivize producers to invest effort in content creation and achieve near-optimal user welfare in limiting cases. 

Our model motivates several future directions that seem promising. First, generalizing our punishment-based approach in \Cref{sec:result-part2} and \ref{sec:welfare} to the bandit setting is an interesting open question. Furthermore, while we study the Nash equilibria of non-adaptive producers, it would be interesting to consider other solution concepts, such as the subgame perfect equilibria of adaptive producers. Finally, there are several open questions about when natural producer dynamics would converge to these solution concepts and which equilibria these dynamics will converge to. 

More broadly, we envision that our model can be extended to incorporate additional aspects of content recommender systems. On the platform side, our assumptions about the information available to the platform could be relaxed in several ways. First, before making recommendation decisions, the platform may have access to side information about the current content beyond just the producer's identity. The platform algorithm could thus leverage this additional information about the content to make more informed recommendations. Second, the platform may not directly observe the user utility and may have to infer this from noisy feedback (e.g., clicks, likes, or comments). This could be captured by adding noise to the platform observations.

Our assumptions about producer behavior could also be relaxed in several ways. First, producers may face a cost of changing their content between time steps, which could be captured by making the cost function history-dependent. Second, producers may not participate on the platform for the same time horizon, which could be captured by allowing different producers to have different entry and exit timings. Finally, a producer may have specific skills enabling them to easily improve their content along certain dimensions; this could be captured by heterogeneity in the cost functions across producers. 

Finally, going beyond recommender systems, it would be interesting to extend our insights to more general Stackelberg games with a leader is \textit{learning} over time and \textit{multiple non-myopic} followers. Since our work is motivated by online recommender systems, our model places several additional structural assumptions on the game (see Section \ref{subsec:modeldiscussion}). An interesting future direction would be to extend our model and findings to more general Stackelberg games, relaxing these structural assumptions.

\bibliography{ref}
\bibliographystyle{plainnat}

\newpage
\appendix
\onecolumn

\section{Main Lemma}\label{appendix:mainlemma}

In this section, we prove Lemma \ref{lemma:main} (restated below) and develop a general sublemma which will be useful in the analysis of other algorithms.  
\mainlemma*

To prove Lemma \ref{lemma:main}, we first prove the following sublemma which bounds how much a softmax function changes with its input. 

\begin{lemma}\label{lemma:main-ingredient}
For any $P' \ge 1$ and for any two sequences $(x_{1, 1}, x_{2, 1}, \cdots, x_{P', 1}), (x_{1, 2}, x_{2, 2}, \cdots, x_{P', 2})\in \mathbb{R}^{P'}_{\ge 0}$ , it holds that:
    \begin{align*}
        &\frac{e^{x_{j,1}}}{\sum_{j'\in[P']} e^{x_{j', 1}}} - \frac{e^{x_{j,2}}}{\sum_{j'\in[P']} e^{x_{j', 2}}} \\
        &\le \min\left\{ 
        \left| x_{j,1}-x_{j,2} - \sum_{j'\neq j} \left ((x_{j',1}-x_{j',2}) \frac{e^{x_{j',2}}}{\sum_{j''\neq j} e^{x_{j'', 2}}}\right )\right|,
        x_{j,1}+ \sum_{j'\neq j} \frac{x_{j',2} e^{x_{j',2}}}{\sum_{j''\neq j} e^{x_{j'', 2}}}\right\}.
    \end{align*}
\end{lemma}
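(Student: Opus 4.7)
The plan is to derive both bounds from a single common starting point and then peel off the two conclusions separately. Let $f_j(x) := e^{x_j}/\sum_{j'\in[P']} e^{x_{j'}}$. First I would combine the two fractions over a common denominator, observing that the $j'=j$ term cancels, to get
$$f_j(x_1) - f_j(x_2) = \frac{\sum_{j'\neq j}\bigl(e^{x_{j,1}+x_{j',2}} - e^{x_{j,2}+x_{j',1}}\bigr)}{\bigl(\sum_{j'}e^{x_{j',1}}\bigr)\bigl(\sum_{j'}e^{x_{j',2}}\bigr)}.$$
Then I would factor each numerator term as $e^{x_{j,1}+x_{j',2}}(1-e^{-u_{j'}})$ with $u_{j'} := (x_{j,1}-x_{j,2}) - (x_{j',1}-x_{j',2})$, and apply the universal inequality $1-e^{-u}\le u$ termwise (valid for all $u\in\mathbb{R}$ by convexity of $e^{-u}$), which is allowed even when some $u_{j'}<0$ because the bound holds pointwise.

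Next I would pull the constant $e^{x_{j,1}}$ out of the numerator and rewrite the remaining sum over $j'\neq j$ as a weighted average against the probability distribution $q_{j'} := e^{x_{j',2}}/\sum_{j''\neq j}e^{x_{j'',2}}$ on $[P']\setminus\{j\}$. After cancelling $\sum_{j'\neq j}e^{x_{j',2}}/\sum_{j'}e^{x_{j',2}} = 1-f_j(x_2)$ and recognizing $e^{x_{j,1}}/\sum_{j'}e^{x_{j',1}} = f_j(x_1)$, this yields the unified inequality
$$f_j(x_1) - f_j(x_2) \;\le\; f_j(x_1)\bigl(1-f_j(x_2)\bigr)\cdot E, \qquad E := (x_{j,1}-x_{j,2}) - \sum_{j'\neq j}(x_{j',1}-x_{j',2})\,q_{j'}.$$

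For the first bound, I would use $f_j(x_1)(1-f_j(x_2))\in[0,1]$ and split into cases: if $E\ge 0$ then the RHS is at most $E=|E|$, and if $E<0$ then the RHS is $\le 0 \le |E|$. For the second bound, I would instead weaken $u_{j'}$ before simplifying: since $x_{j,2},x_{j',1}\ge 0$, we have $u_{j'} \le x_{j,1} + x_{j',2}$. Substituting into the pre-simplified expression, splitting the two pieces, and again using $f_j(x_1)(1-f_j(x_2))\le 1$ gives
$$f_j(x_1) - f_j(x_2) \;\le\; x_{j,1} + \sum_{j'\neq j} x_{j',2}\, q_{j'},$$
which is exactly the claimed second bound.

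The derivation is essentially a clean manipulation, so no step is a genuine obstacle. The one subtle choice is to factor each numerator difference via $e^{x_{j,1}+x_{j',2}}$ (rather than the symmetric $e^{x_{j,2}+x_{j',1}}$), since this is precisely what produces the weights $e^{x_{j',2}}/\sum_{j''\neq j}e^{x_{j'',2}}$ in $E$ and matches the statement. The only bookkeeping point is ensuring the termwise application of $1-e^{-u}\le u$ respects signs (so that the resulting inequality still goes the right way after summing signed terms), and handling the degenerate $P'=1$ case by convention (empty sums equal $0$, and the LHS is also $0$).
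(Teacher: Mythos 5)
Your proof is correct and is essentially the paper's argument in different notation: your prefactor $f_j(x_1)\bigl(1-f_j(x_2)\bigr)$ equals the paper's $\frac{1}{(1+Q_1)(1+1/Q_2)}$, your quantity $E$ is exactly the paper's bound on $1-Q_1/Q_2$ obtained from the same linearization $1-e^{-u}\le u$ with the same weights $q_{j'} = e^{x_{j',2}}/\sum_{j''\neq j}e^{x_{j'',2}}$, and both arguments conclude by dropping the prefactor (which lies in $[0,1]$) and, for the second bound, using nonnegativity of the $x$'s. The sign bookkeeping (termwise application of $1-e^{-u}\le u$ against positive multipliers, and the case split on the sign of $E$) and the degenerate $P'=1$ convention are handled correctly, so there is no gap.
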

\begin{proof}
For ease of notation, we define the following two quantities: \[Q_1 := \frac{\sum_{j'\neq j} e^{x_{j',1}}}{e^{x_{j,1}}}, \;\;\;\ Q_2 := \frac{\sum_{j'\neq j} e^{x_{j',2}}}{e^{x_{j,2}}},\] which satisfy $Q_1 > 0$ and $Q_2 > 0$. 

Using this notation, we can simplify: 
     \begin{align*}
         \frac{e^{x_{j,1}}}{\sum_{j'\in[P']} e^{x_{j', 1}}} - \frac{e^{x_{j,2}}}{\sum_{j'\in[P']} e^{x_{j', 2}}} &= \frac{1}{1+\frac{\sum_{j'\neq j} e^{x_{j',1}}}{e^{x_{j,1}}}}- \frac{1}{1+\frac{\sum_{j'\neq j} e^{x_{j',2}}}{e^{x_{j,2}}}} \\
         &= \frac{1}{1+Q_1}-\frac{1}{1+Q_2} \\
         &= \frac{Q_2-Q_1}{(1+Q_1)(1+Q_2)} \\
         &= \frac{1-Q_1/Q_2}{(1+Q_1)(1+1/Q_2)}.
     \end{align*}
The remainder of the analysis boils down to bounding $\frac{1-Q_1/Q_2}{(1+Q_1)(1+1/Q_2)}$. 

We first analyze $1-\frac{Q_1}{Q_2}$. We rearrange the summation term and use the inequality $e^x\ge 1+x$ for any $x\ge 0$ as follows:
     \begin{align*}
         1-\frac{Q_1}{Q_2}&=1-\frac{e^{-x_{j,1}}\cdot\sum_{j'\neq j} e^{x_{j',1}}}{e^{-x_{j,2}}\cdot\sum_{j'\neq j} e^{x_{j',2}}}\\
         &= 1-\frac{e^{-x_{j,1}}}{e^{-x_{j,2}}} \cdot \sum_{j'\neq j} \left (\frac{e^{x_{j',1}}}{e^{x_{j',2}}}\cdot \frac{e^{x_{j',2}}}{\sum_{j''\neq j} e^{x_{j'',2}}}\right)\\
         &= 1-\sum_{j'\neq j} \left(e^{-x_{j,1}+x_{j,2}+x_{j',1}-x_{j',2}}\cdot \frac{e^{x_{j',2}}}{\sum_{j''\neq j} e^{x_{j'',2}}}\right)\\
         &\le_{(A)} 1-\sum_{j'\neq j} \left((1-x_{j,1}+x_{j,2}+x_{j',1}-x_{j',2})\cdot \frac{e^{x_{j',2}}}{\sum_{j''\neq j} e^{x_{j'',2}}}\right)\\
         &=1 - (1-x_{j,1}+x_{j,2})-\sum_{j'\neq j} \left( (x_{j',1}-x_{j',2})\cdot \frac{e^{x_{j',2}}}{\sum_{j''\neq j} e^{x_{j'',2}}}\right)\\
         &=x_{j,1}-x_{j,2}-\sum_{j'\neq j} \left( (x_{j',1}-x_{j',2})\cdot \frac{e^{x_{j',2}}}{\sum_{j''\neq j} e^{x_{j'',2}}}\right).
         \end{align*}
where (A) uses the fact that $-e^{-y} \le -(1-y)$. We bound this expression in two different ways. First we take an absolute value and bound:
\[x_{j,1}-x_{j,2}-\sum_{j'\neq j} \left( (x_{j',1}-x_{j',2})\cdot \frac{e^{x_{j',2}}}{\sum_{j''\neq j} e^{x_{j'',2}}}\right) \le \left| x_{j,1}-x_{j,2} - \sum_{j'\neq j} \left ((x_{j',1}-x_{j',2}) \frac{e^{x_{j',2}}}{\sum_{j''\neq j} e^{x_{j'', 2}}}\right )\right|.\]
Second, we use the fact that $x_{j,k}\ge 0$ for any $j\in[P]$ and $k\in\{1,2\}$ and bound:
\[x_{j,1}-x_{j,2}-\sum_{j'\neq j} \left( (x_{j',1}-x_{j',2})\cdot \frac{e^{x_{j',2}}}{\sum_{j''\neq j} e^{x_{j'',2}}}\right) \le x_{j,1} + \sum_{j'\neq j} \left( x_{j', 2} \cdot \frac{e^{x_{j',2}}}{\sum_{j''\neq j} e^{x_{j'',2}}}\right)  \le x_{j,1}+ \sum_{j'\neq j} x_{j',2}.\]
Putting this all together, we see that:
 \[  1-\frac{Q_1}{Q_2} \le \min\left\{ 
        \left| x_{j,1}-x_{j,2} - \sum_{j'\neq j} \left ((x_{j',1}-x_{j',2}) \frac{e^{x_{j',2}}}{\sum_{j''\neq j} e^{x_{j'', 2}}}\right )\right|,
        x_{j,1}+ \sum_{j'\neq j} x_{j',2}\right\}.\]

     Now, we can use this bound on $1 - Q_1 / Q_2$ to bound $\frac{1-Q_1/Q_2}{(1+Q_1)(1+1/Q_2)}$ as follows: 
     \begin{align*} 
     &\frac{1-Q_1/Q_2}{(1+Q_1)(1+1/Q_2)} \\
     &\le_{(C)} \frac{1}{(1+Q_1)(1+1/Q_2)}\min\left\{ 
        \left| x_{j,1}-x_{j,2} - \sum_{j'\neq j} \left ((x_{j',1}-x_{j',2}) \frac{e^{x_{j',2}}}{\sum_{j''\neq j} e^{x_{j'', 2}}}\right )\right|,
        x_{j,1}+ \sum_{j'\neq j} x_{j',2}\right\} \notag\\
     &\le_{(D)} \min\left\{ 
        \left| x_{j,1}-x_{j,2} - \sum_{j'\neq j} \left ((x_{j',1}-x_{j',2}) \frac{e^{x_{j',2}}}{\sum_{j''\neq j} e^{x_{j'', 2}}}\right )\right|,
        x_{j,1}+ \sum_{j'\neq j} x_{j',2}\right\}, 
    \end{align*}
     where (C) comes from $\frac{1}{(1+Q_1)(1+1/Q_2)}\ge 0$ and (D) comes from $\frac{1}{(1+Q_1)(1+1/Q_2)}\le 1$. This proves the desired statement. 
\end{proof}

Using Lemma \ref{lemma:main-ingredient}, we prove Lemma \ref{lemma:main}. 
\begin{proof}[Proof of \cref{lemma:main}]
We first explicitly write out the probability that an arm is pulled, using the notation from Section \ref{sec:model}. To make explicit the difference between the two instances $k\in\{1, 2\}$, we let  $\hat{p}^\tau_{j'}(\mathbf{p}_{j, k})$ denote the estimator $\hat{p}^\tau_{j'}$ when producer $j$ chooses $\mathbf{p}_{j,k}$. In particular, for $k \in \left\{1,2\right\}$, it holds by definition that 
\begin{align}
    \mathbb{P}[A^t(\mathbf{p}_{j, k}; \mathbf{p}_{-j}) = j \mid \mathcal{H}^{t-1}, u^t]  &= \frac{\gamma}{P} +(1-\gamma)\frac{e^{\eta_t \sum_{\tau=0}^{t-1} \langle u^t,  \hat{p}^\tau_{j}(\mathbf{p}_{j, k})\rangle}}{\sum_{j'\in[P]}e^{\eta_t \sum_{\tau=0}^{t-1}\langle u^t,  \hat{p}^\tau_{j'}(\mathbf{p}_{j, k})\rangle}}\\
    &= \frac{\gamma}{P} +(1-\gamma)\frac{1}{1+\sum_{j'\neq j} e^{\eta_t \sum_{\tau=0}^{t-1} \langle u^t,  \hat{p}^\tau_{j'}(\mathbf{p}_{j, k})-\hat{p}_j^\tau(\mathbf{p}_{j,k})\rangle}}.
\end{align}
For ease of notation, we define two sequences $(x_{j,k})_{j=1}^P$ for $k \in \left\{1,2\right\}$, where $x_{j,k}=\eta_t\sum_{\tau=0}^{t-1}\langle u^t, \hat{p}^\tau_{j}(\mathbf{p}_{j, k})\rangle \\ \ge 0$ for each $j\in[P]$ and $k\in\{1,2\}$.  
This allows us to express the probabilities $ \mathbb{P}[A^t(\mathbf{p}_{j, k}; \mathbf{p}_{-j})=j\mid \mathcal{H}^{t-1}, u^t]$ as follows:
\[ \mathbb{P}[A^t(\mathbf{p}_{j, k}; \mathbf{p}_{-j}) = j \mid \mathcal{H}^{t-1}, u^t] = \frac{\gamma}{P} +(1-\gamma) \frac{e^{x_{j,k}}}{\sum_{j'\in[P]} e^{x_{j', k}}} . \]
This means that:
\begin{align*}
    \mathbb{E}_{\mathcal{H}^{t-1}, u^t}\left[\mathbb{P}[A^t(\mathbf{p}_{j, k}; \mathbf{p}_{-j}) = j \mid \mathcal{H}^{t-1}, u^t]\right] &= \mathbb{E}_{\mathcal{H}^{t-1}, u^t}\left[\frac{\gamma}{P} +\frac{(1-\gamma)e^{x_{j,k}}}{\sum_{j'\in[P]} e^{x_{j', k}}}\right] \\
    &= \frac{\gamma}{P}  + (1-\gamma) \mathbb{E}_{\mathcal{H}^{t-1}, u^t}\left[ \frac{e^{x_{j,k}}}{\sum_{j'\in[P]} e^{x_{j', k}}}\right].
\end{align*} 

In order to analyze the difference $M_t$, it is convenient to place the random variables $(x_{j,1})_{j\in[P]}$ and $(x_{j,2})_{j\in[P]}$ on the same probability space. Note also that the distribution over $(\mathcal{H}^{t-1}, u^t)$ implicitly depends on the specific values $\mathbf{p}_{j,k}$.  
For \textit{any} joint distribution over $(\mathcal{H}^{t-1}, u^t)$ for the instance where producer $j$ chooses $\mathbf{p}_{j,1}$ and $(\mathcal{H}^{t-1}, u^t)$ for the instance where producer $j$ chooses $\mathbf{p}_{j,2}$, we have that:
\begin{equation}
    \label{eq:M_t}
    M_t=(1-\gamma)\mathbb{E} \left[\frac{e^{x_{j,1}}}{\sum_{j'\in[P]} e^{x_{j', 1}}}-\frac{e^{x_{j,2}}}{\sum_{j'\in[P]} e^{x_{j', 2}}}\right],
\end{equation}
where the expectation is over the joint distribution described above. 

We then using \Cref{lemma:main-ingredient} to upper bound $M_t$ in full-information setup and bandit setup separately.

\begin{enumerate}
    \item [(1)] \textbf{Full-information setup.} Using the first term of \Cref{lemma:main-ingredient} and plugging in $x_{j,k}$ for all $j\in[P]$ and $k\in\{1,2\}$, we have
    \begin{align*}
        &\frac{e^{x_{j,1}}}{\sum_{j'\in[P]} e^{x_{j', 1}}}-\frac{e^{x_{j,2}}}{\sum_{j'\in[P]} e^{x_{j', 2}}}\\
        &\le \left| x_{j,1}-x_{j,2} - \sum_{j'\neq j} \left ((x_{j',1}-x_{j',2}) \frac{e^{x_{j',2}}}{\sum_{j''\neq j} e^{x_{j'', 2}}}\right )\right|\\
        &= \Bigg|\eta_t \sum_{\tau=0}^{t-1} \langle u^t, \hat{p}_{j}^\tau(\mathbf{p}_{j, 1})-\hat{p}_{j}^\tau(\mathbf{p}_{j, 2})\rangle
        \\
        & \quad \quad  -\sum_{j'\neq j} \left(\left(\eta_t \sum_{\tau=0}^{t-1} \langle u^t, \hat{p}_{j'}^\tau(\mathbf{p}_{j, 1})-\hat{p}_{j'}^\tau(\mathbf{p}_{j, 2})\rangle \right)\frac{ e^{\eta_t \sum_{\tau=0}^{t-1}\langle u^t, \hat{p}_{j'}^\tau(\mathbf{p}_{j, 2})\rangle}}{\sum_{j''\neq j} e^{\eta_t \sum_{\tau=0}^{t-1}\langle u^t, \hat{p}_{j''}^\tau(\mathbf{p}_{j, 2})\rangle}}\right)\Bigg|.
    \end{align*} 
    
    Since all producers $j' \neq j$ choose the same $\mathbf{p}_{-j}$ and since the content vector is observed in the full-information setting, it holds that  $\hat{p}_{j'}^\tau(\mathbf{p}_{j,k})=p_{j'}^\tau$ for all $j'\neq j, k\in\{1, 2\}, \tau\in[T]$. Moreover, it holds that $\hat{p}_{j}^\tau(\mathbf{p}_{j,k})=p_{j, k}^\tau$ for all $k\in\{1, 2\}, \tau\in[T]$. Thus, we simplify the above upper bound as follows:
\begin{align*}
    \frac{e^{x_{j,1}}}{\sum_{j'\in[P]} e^{x_{j', 1}}}-\frac{e^{x_{j,2}}}{\sum_{j'\in[P]} e^{x_{j', 2}}} 
    &\le \left|\eta_t \sum_{\tau=0}^{t-1} \langle u^t, {p}_{j, 1}^\tau-{p}_{j, 2}^\tau \rangle\right|\\
    &\le \eta_t \sum_{\tau=0}^{t-1} \left|\langle u^t, {p}_{j, 1}^\tau-{p}_{j, 2}^\tau \rangle\right|\\
    &\le \eta_t \sum_{\tau=0}^{t-1}  \|u^t\| \|{p}_{j, 1}^\tau-{p}_{j, 2}^\tau \|\tag{Cauchy-Schwarz Inequality}\\
    &=\eta_t\sum_{\tau=0}^{t-1} \|{p}_{j, 1}^\tau-{p}_{j, 2}^\tau \|.
\end{align*}
Plugging in \cref{eq:M_t}, we conclude the following bound on $M_t$:
\begin{align*}
    M_t
    &\le (1-\gamma)\mathbb{E}\left[\eta_t \sum_{\tau=0}^{t-1} \|{p}_{j, 1}^\tau-{p}_{j, 2}^\tau \|\right]\\
    &=(1-\gamma)\eta_t \sum_{\tau=0}^{t-1} \|{p}_{j, 1}^\tau-{p}_{j, 2}^\tau \|.
\end{align*}
\item[(2)] \textbf{Bandit setup.} Using the second term of \Cref{lemma:main-ingredient} and plugging in $x_{j,k}$ for all $j\in [P]$ and $k\in\{1,2\}$, we have
\begin{align*}
    \frac{e^{x_{j,1}}}{\sum_{j'\in[P]} e^{x_{j', 1}}}-\frac{e^{x_{j,2}}}{\sum_{j'\in[P]} e^{x_{j', 2}}}
    &\le  x_{j,1}+ \sum_{j'\neq j} x_{j',2}\\
    &=\eta_t\sum_{\tau=0}^{t-1} \langle u^t, \hat{p}_j^\tau(\mathbf{p}_{j, 1})\rangle+ \eta_t\sum_{j'\neq j}\sum_{\tau=0}^{t-1}\langle u^t, \hat{p}_{j'}^\tau(\mathbf{p}_{j, 2})\rangle.
\end{align*}
We can plug in \cref{eq:M_t} to see: 
\begin{align*}
    M_t
    &\le (1-\gamma)\mathbb{E}\left[\eta_t\sum_{\tau=0}^{t-1} \langle u^t, \hat{p}_j^\tau(\mathbf{p}_{j, 1})\rangle+ \eta_t\sum_{j'\neq j}\sum_{\tau=0}^{t-1}\langle u^t, \hat{p}_{j'}^\tau(\mathbf{p}_{j, 2})\rangle \right]\\
    &=(1-\gamma)\eta_t\sum_{\tau=0}^{t-1} \langle \EE[u^t], \EE[\hat{p}_j^\tau(\mathbf{p}_{j, 1})]\rangle+ (1-\gamma)\eta_t\sum_{j'\neq j}\sum_{\tau=0}^{t-1}\langle \EE[u^t], \EE[\hat{p}_{j'}^\tau(\mathbf{p}_{j, 2})]\rangle \tag{independence between $u^t$ and estimators}.
\end{align*}
Recall that $\hat{p}_{j'}^\tau(\mathbf{p}_{j, k}) \in \mathbb{R}_{\ge 0}^D$ is an unbiased estimator of $p_{j'}^{\tau}$ for any $j'\in[P], \tau\in[T]$. 
This means that we can further bound $M_t$ as follows: 
\begin{align*}
    M_t &\le (1-\gamma)\eta_t\sum_{\tau=0}^{t-1} \langle \EE_{u\sim \mathcal{D}}[u], p_{j,1}^\tau\rangle+ (1-\gamma)\eta_t\sum_{j'\neq j}\sum_{\tau=0}^{t-1} \langle \EE_{u\sim \mathcal{D}}[u], p_{j'}^\tau\rangle \\
    &\le (1-\gamma) \eta_t \sum_{\tau=0}^{t-1} \|\EE_{u\sim \mathcal{D}}[u]\|_*\|p_{j, 1}^\tau\|+ (1-\gamma)\eta_t \sum_{j'\neq j}\sum_{\tau=0}^{t-1} \|\EE_{u\sim \mathcal{D}}[u]\|_*\|p_{j'}^\tau\|\tag{Cauchy-Schwarz inequality}\\
    &\le(1-\gamma) \eta_t \sum_{\tau=0}^{t-1} \EE_{u\sim \mathcal{D}}[\|u\|_*]\|p_{j, 1}^\tau\|+ (1-\gamma)\eta_t \sum_{j'\neq j}\sum_{\tau=0}^{t-1} \EE_{u\sim \mathcal{D}}[\|u\|_*]\|p_{j'}^\tau\| \tag{Jensen's inequality on $\|\cdot \|$)}\\
    &=\eta_t \sum_{\tau=0}^{t-1} \|p_{j, 1}^\tau\|+\eta_t \sum_{j'\neq j}\sum_{\tau=0}^{t-1} \|p_{j'}^\tau\|. \tag{$\|u\|=1, \forall u\sim \mathcal{D}$}
\end{align*}

\end{enumerate}

\end{proof}

\section{Proofs for Section \ref{sec:result-part1}}\label{appendix:proofsec3}
\subsection{Existence of equilibrium}\label{appsub:proofsmodel}
We verify that a symmetric mixed-strategy equilibria exists. Our key technical tool is \\\citet{glickbergsymm}, which is a strengthening of Glickberg's theorem that guarantees the existence of a \textit{symmetric} mixed-strategy equilibrium, rather than just a mixed-strategy equilibrium, for a symmetric game. 
\begin{proposition}
 \label{prop:existence}
 Suppose that the platform runs LinHedge or LinEXP3 with learning rate schedule $\eta_1, \ldots, \eta_T \ge 0$ and exploration parameter $\gamma \in (0,1)$. Let $\beta \in (0,1)$ be  the discount factor of producers, let $c \ge 1$ be the cost function exponent, and let $\mathcal{D}$ be any distribution over users. There exists a symmetric mixed-strategy Nash equilibrium $(\mu, \ldots, \mu)$. 
\end{proposition}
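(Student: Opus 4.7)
The plan is to invoke the symmetric extension of Glicksberg's theorem \citep{glickbergsymm}, which yields a symmetric mixed-strategy equilibrium in any symmetric game with compact action spaces and continuous utilities. The game here is symmetric by construction (relabeling producers permutes the $u_j$), so the two substantive tasks are to reduce to a compact action set and to verify that the utilities are continuous on it.

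For the compactness reduction, I would first argue that large content norms are never played at a best response. Fix producer $j$ and any opponent profile $\mathbf{p}_{-j}$, and suppose $\|p_j^t\| > R^* := (1/(1-\beta))^{1/c}$ for some $t$. Let $\mathbf{p}_j'$ agree with $\mathbf{p}_j$ on all rounds except round $t$, where it plays $\vec{0}$. Since the two strategies induce identical distributions over the history through round $t-1$, we have $u_j(\mathbf{p}_j \mid \mathbf{p}_{-j}) - u_j(\mathbf{p}_j' \mid \mathbf{p}_{-j}) \le \sum_{s=t}^T \beta^s - \beta^t \|p_j^t\|^c \le \beta^t/(1-\beta) - \beta^t \|p_j^t\|^c < 0$, where the first inequality uses $\mathbbm{1}[A^s = j] \le 1$. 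Hence no best response (and in particular no equilibrium strategy) ever places a vector of norm exceeding $R^*$, so we may restrict each producer's action space to the compact convex set $\mathcal{A} := \{p \in \mathbb{R}^D_{\ge 0} : \|p\| \le R^*\}^T \subset \mathbb{R}^{DT}$ without losing any equilibrium.

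Continuity of $u_j$ on $\mathcal{A}^P$ is the routine step. The cost contribution $\|p_j^t\|^c$ is manifestly continuous, and $\mathbb{P}[A^t = j]$ is an expectation over user draws and algorithm randomness of a finite composition of softmaxes applied to inner products of $u^t$ with content vectors (LinHedge) or importance-weighted estimators (LinEXP3). For LinEXP3 the exploration floor $\gamma/P > 0$ keeps the selection probabilities and hence the estimator denominators uniformly bounded below on $\mathcal{A}^P$, so the estimator at each round depends continuously on the content profile, and a straightforward induction on $t$ propagates continuity to the full trajectory distribution. With compact convex action sets and continuous utilities symmetric under producer relabeling, \citep{glickbergsymm} then produces a symmetric mixed equilibrium $(\mu, \ldots, \mu)$ on $\mathcal{A}$, which by the first step is also an equilibrium of the original unrestricted game.

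The main obstacle is the continuity check for LinEXP3, since the importance-weighted estimator appears inside the softmax at every subsequent round and is itself a random quantity depending on the arm pulled. The argument relies crucially on the uniform exploration bound $\gamma/P > 0$, which keeps denominators bounded away from zero and makes the inductive continuity step clean; by comparison the norm truncation is a one-line calculation and the application of symmetric Glicksberg is black-box.
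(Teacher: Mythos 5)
Your proposal is correct and follows essentially the same route as the paper's proof: a dominance argument bounding best-response norms (the paper's Lemma~\ref{lemma:naive-upper-bound}, which gets the slightly tighter radius $(\beta/(1-\beta))^{1/c}$ by noting the content at round $t$ only affects recommendations from round $t+1$ on), restriction to the resulting compact convex action space, and an appeal to the symmetric version of Glicksberg's theorem. Your additional discussion of continuity for \texttt{LinEXP3} (via the $\gamma/P$ exploration floor) is a detail the paper merely asserts, and your marginally looser radius is harmless for the existence argument.
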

\begin{proof}
    In order to apply \citet{glickbergsymm}, we first define an equivalent game where the action set of producers is compact. By Lemma \ref{lemma:naive-upper-bound}, we know that any producer $j$'s best response always satisfies $\|p_j^t\| \le \left(\frac{\beta}{1-\beta}\right)^{1/c}$. This means that we can constrain the content vectors for each time step to be $\left\{p \in \mathbb{R}_{\ge 0}^D \mid \|p\| \le \left(\frac{\beta}{1-\beta}\right)^{1/c} \right\}$, which results in a compact, convex action space as desired. Since the producer utility function is continuous, we can apply the main result in \citet{glickbergsymm} to see that a symmetric mixed-strategy equilibrium exists.  
\end{proof}

While this result focuses on establishing the existence of a symmetric mixed-strategy equilibrium, our analysis in Section \ref{sec:result-part1} applies to \textit{any} mixed-strategy Nash equilibrium (regardless of whether the equilibrium is symmetric). 

 \subsection{Proofs for Section \ref{subsec:idealized}}\label{appsub:negative-proofs}

We first prove Theorem \ref{thm:upperboundfullinfo}.
\thmupperboundfullinfo*

 \begin{proof}[Proof of Theorem \ref{thm:upperboundfullinfo}]
The key idea is to apply Lemma \ref{lemma:main} to compare the number of users  that producer $j$ wins if they choose  $\mathbf{p}_j$ to the number of users won if the producer were to change their content vector at a single time step $s$ to $\vec{0}$. 

Recall that the producer utility in expectation is equal to:
\begin{equation}\label{eq:utility}
    u_j(\mathbf{p}_j \mid \mathbf{p}_{-j}) = \sum_{t=1}^T \beta^{t-1} \left(\mathbb{E}\left[\mathbb{P}[A^t=j \mid \mathcal{H}^{t-1}, u^t]\right] - \|p_j^t\|^{c} \right). 
\end{equation}
If $\mathbf{p}_j$ is a best response, then the utility incurred by $\mathbf{p}_j$ must be at least as large as the utility incurred by any deviation vector $\Tilde{\mathbf{p}}_{j}$; that is, 
\[u_j(\mathbf{p}_j \mid \mathbf{p}_{-j}) \ge u(\tilde{\mathbf{p}}_{j} \mid \mathbf{p}_{-j}).\] 
To make explicit the difference between two instantiations, we let $A^t(\mathbf{p}, \mathbf{p}_{-j})$ be the arm pulled by the algorithm if the producer $j$'s vector is $\mathbf{p}$ and other producers' vectors are $\mathbf{p}_{-j}$. Then, plugging in \cref{eq:utility}, we have
\small{
\begin{equation}
    \label{eq:deviation}
    \sum_{t=1}^T \beta^{t-1}(\|p_j^t\|^c-\|\tilde{p}_{j}^t\|^c)\le \sum_{t=1}^T \beta^{t-1} (\mathbb{E}[\mathbb{P}[A^t(\mathbf{p}_j, \mathbf{p}_{-j})=j\mid \mathcal{H}^{t-1}, u^t ]]- \mathbb{E}[\mathbb{P}[A^t(\tilde{\mathbf{p}}_{j}, \mathbf{p}_{-j})=j\mid \mathcal{H}^{t-1}, u^t ]]).
\end{equation}
}
\normalsize{}

We set $\tilde{\mathbf{p}}_j$ to equal to $\mathbf{p}$ except for a single time step $s$: that is, $\tilde{p}_j^{\tau} = p_{j}^{\tau}$ for $\tau \neq s$ and $\tilde{p}_j^{\tau} = 0$ for $\tau = s$.  
By definition, we simplify the LHS of \cref{eq:deviation}:
\[\sum_{t=1}^T \beta^{t-1}(\|p_j^t\|^c-\|\tilde{p}_{j}^t\|^c) =\beta^{s-1}\|p_j^s\|^c.\]
Recall that $\left\{\eta_t\right\}_{1 \le t \le T}$ is the learning rate schedule; for notational convenience, we set $\eta_{T+1} = 0$.
We upper bound the RHS of \cref{eq:deviation} as follows: 
\begin{align*}
    & \sum_{t=1}^T \beta^{t-1} (\mathbb{E}[\mathbb{P}[A^t(\mathbf{p}_j, \mathbf{p}_{-j})=j\mid \mathcal{H}^{t-1}, u^t ]]- \mathbb{E}[\mathbb{P}[A^t(\tilde{\mathbf{p}}_{j}, \mathbf{p}_{-j})=j\mid \mathcal{H}^{t-1}, u^t ]]) \\
    &\le \sum_{t=1}^T (1-\gamma)\beta^{t-1}\eta_t\sum_{\tau=0}^{t-1}\|p_j^\tau-\Tilde{p}_{j}^\tau\|\tag{\cref{lemma:main} with $\mathbf{p}_{j, 1}=\mathbf{p}_{j}$ and $\mathbf{p}_{j, 2}=\tilde{\mathbf{p}}_{j}$}\\
    &= (1-\gamma)\sum_{t=s+1}^T \beta^{t-1}\eta_t\|p_j^s\| 
    \\
    &\le (1-\gamma)\eta_{s+1} \|p_j^s\|\sum_{t=s+1}^T \beta^{t-1}\tag{$\eta_t\le \eta_{s+1}, \forall t\ge s+1$}\\
    &=\frac{\beta^s(1-\beta^{T-s})}{1-\beta}(1-\gamma)\eta_{s+1}\|p_j^s\|.
\end{align*}
Combining our analysis of the LHS and RHS of \cref{eq:deviation}, we have
\begin{equation}
    \|p_j^s\|^{c}\le \frac{\beta(1-\beta^{T-s})}{1-\beta} (1-\gamma)\eta_{s+1}\|p_j^s\|. 
\end{equation}
For the case of $c>1$, we have
\begin{equation}
    \|p_j^s\|\le \left(\frac{\beta(1-\beta^{T-s})}{1-\beta} (1-\gamma)\eta_{s+1}\right)^{\frac{1}{c-1}}.
\end{equation}
For the case of $c=1$, if $\frac{\beta(1-\beta^{T-s})}{1-\beta} (1-\gamma)\eta_{s+1}<1$, we have $\|p_j^s\|=0$.

\end{proof}

We next prove the bandit case (Theorem \ref{thm:upperboundbanditgeneralP}). 
We first show a naive upper bound on the content quality for any learning algorithm at any (mixed-strategy) Nash equilbrium.

\begin{lemma} \label{lemma:naive-upper-bound}
Suppose that the platform runs any online learning algorithm. Let $\beta \in (0,1)$ be the discount factor of producers, and let $c \ge 1$ be the cost function exponent. Regardless of what actions are taken by other producers, the best response of any producer $j$ at any time step $t$ satisfies:
\[\|p_j^t \|\le  \left (\frac{\beta}{1-\beta}\right)^{\frac{1}{c}}.\] 
\end{lemma}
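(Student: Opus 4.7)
The plan is a single-round deviation argument that does not depend on the structure of the platform's algorithm. Fix a putative best response $\mathbf{p}_j$ and compare it against the alternative strategy $\tilde{\mathbf{p}}_j$ that agrees with $\mathbf{p}_j$ at every round $\tau \neq t$ and sets $\tilde{p}_j^{\,t} = \vec{0}$. From the utility formula \eqref{eq:expectedutility}, the difference $u_j(\mathbf{p}_j \mid \mathbf{p}_{-j}) - u_j(\tilde{\mathbf{p}}_j \mid \mathbf{p}_{-j})$ splits cleanly into a cost contribution at round $t$ and a change in discounted recommendation probabilities over the remaining rounds.

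First, the cost contributions at rounds $\tau \neq t$ cancel exactly, so the difference in costs is simply $-\beta^t\|p_j^t\|^c$ (the deviation saves this cost). Second, recall from the model that the platform assigns $A^t$ before seeing any round-$t$ content, so $\mathbb{P}[A^t = j \mid \mathcal{H}^{t-1}, u^t]$ is identical under $\mathbf{p}_j$ and $\tilde{\mathbf{p}}_j$; only the probabilities at rounds $s > t$ can differ. I would then upper-bound the total discounted recommendation benefit of $\mathbf{p}_j$ over $\tilde{\mathbf{p}}_j$ by the geometric tail $\sum_{s=t+1}^T \beta^s \le \beta^{t+1}/(1-\beta)$, using only the crude inequality $\mathbbm{1}[A^s = j] \le 1$ on each future round.

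Rearranging the best-response inequality $u_j(\mathbf{p}_j \mid \mathbf{p}_{-j}) \ge u_j(\tilde{\mathbf{p}}_j \mid \mathbf{p}_{-j})$ then yields $\beta^t \|p_j^t\|^c \le \beta^{t+1}/(1-\beta)$, which rearranges immediately to the claimed bound $\|p_j^t\| \le (\beta/(1-\beta))^{1/c}$. There is no serious obstacle here — the only point requiring care is verifying from the model that round-$t$ content is not observed in time to influence $A^t$, so that the deviation leaves round-$t$ recommendation probabilities unchanged. The argument is algorithm-agnostic precisely because it only invokes the trivial bound $\mathbbm{1}[A^s = j] \le 1$ on the future indicator variables, and it holds regardless of $\mathbf{p}_{-j}$ and of any prior history.
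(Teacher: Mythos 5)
Your proposal is correct and follows essentially the same argument as the paper: a single-round deviation to $\vec{0}$, noting that $A^t$ is chosen before round-$t$ content is observed so the round-$t$ probability is unchanged, and bounding the future gain crudely by the geometric tail $\sum_{s>t}\beta^s$. The only (immaterial) difference is that the paper keeps the finite-horizon factor $(1-\beta^{T-t})$ before relaxing to $\beta/(1-\beta)$.
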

\begin{proof}
  Similarly to the proof of \cref{thm:upperboundfullinfo}, we construct the same alternative content vector $\tilde{\mathbf{p}}_{j}$ and compare producer $j$'s utility for $\tilde{\mathbf{p}}_{j}$ and $\mathbf{p}_{j}$. Recall that we set $\tilde{\mathbf{p}}_j$ equal to $\mathbf{p}$ except for a single time step $s$: that is, $\Tilde{p}_j^\tau=p_j^\tau$ for $\tau\neq s$ and $\Tilde{p}_j^\tau=0$ for $\tau=s$. By definition, we simplify the LHS of \cref{eq:deviation}:
\[\sum_{t=1}^T \beta^{t-1}(\|p_j^t\|^c-\|\Tilde{p}_j^t\|^c)\ge \beta^{s-1}\|p_j^s\|^c.\]
We then upper bound the RHS of \cref{eq:deviation}. Since $\mathbf{p}_j^\tau=\mathbf{\Tilde{p}}_j^\tau$ for any $0\le \tau\le s$, the probability of producer $j$ being chosen is the same between the two choices of actions of producer $j$ for any time $1\le t\le s$. And for any time $s+1\le t\le T$, the difference of the two probabilities can be naively upper bounded by one. Thus, we have
\begin{align}
     &\sum_{t=1}^T \beta^{t-1} (\mathbb{E}[\mathbb{P}[A^t(\mathbf{p}_j, \mathbf{p}_{-j})=j\mid \mathcal{H}^{t-1}, u^t ]]- \mathbb{E}[\mathbb{P}[A^t(\tilde{\mathbf{p}}_{j}, \mathbf{p}_{-j})=j\mid \mathcal{H}^{t-1}, u^t ]])\\ &\le \sum_{t=s+1}^T \beta^{t-1}\cdot 1= \frac{\beta^s(1-\beta^{T-s})}{1-\beta}.
\end{align}
  Combining our analysis of the LHS and RHS of \cref{eq:deviation}, we have
  \begin{align}
      \|p_j^t \|\le \left(\frac{\beta(1-\beta^{T-s})}{1-\beta}\right)^{\frac{1}{c}} \le \left (\frac{\beta}{1-\beta}\right)^{\frac{1}{c}}.
  \end{align}
\end{proof}

We now prove Theorem \ref{thm:upperboundbanditgeneralP}.
\thmupperboundbanditgeneralP*
 \begin{proof}[Proof of Theorem \ref{thm:upperboundbanditgeneralP}]
The proof follows a similar high-level argument to that in the proof of Theorem \ref{thm:upperboundfullinfo}. Again, we construct the same alternative content vector $\tilde{\mathbf{p}}_{j}$ and compare producer $j$'s utility for $\tilde{\mathbf{p}}_{j}$ and $\mathbf{p}_{j}$. Recall that we set $\tilde{\mathbf{p}}_j$ equal to $\mathbf{p}$ except for a single time step $s$: that is, $\Tilde{p}_j^\tau=p_j^\tau$ for $\tau\neq s$ and $\Tilde{p}_j^\tau=0$ for $\tau=s$. By definition, we simplify the LHS of \cref{eq:deviation}:
\[\sum_{t=1}^T \beta^{t-1}(\|p_j^t\|^c-\|\Tilde{p}_j^t\|^c)\ge \beta^{s-1}\|p_j^s\|^c.\]
We upper bound the RHS of \cref{eq:deviation} as follows: 
\begin{align*}
    & \sum_{t=1}^T \beta^{t-1} (\mathbb{E}[\mathbb{P}[A^t(\mathbf{p}_j, \mathbf{p}_{-j})=j\mid \mathcal{H}^{t-1}, u^t ]]- \mathbb{E}[\mathbb{P}[A^t(\tilde{\mathbf{p}}_{j}, \mathbf{p}_{-j})=j\mid \mathcal{H}^{t-1}, u^t ]]) \\ &\le \sum_{t=1}^T \beta^{t-1}(1-\gamma)\eta_t \sum_{\tau=s}^{t-1}  \left(\|p_{j}^{\tau}\| + \sum_{j'\neq j} \|p_{j'}^{\tau}\| \right)\tag{\cref{lemma:main} with $\mathbf{p}_{j, 1}=\mathbf{p}_{j}$ and $\mathbf{p}_{j, 2}=\Tilde{\mathbf{p}}_{j}$}\\
    &\le (1-\gamma)\sum_{t=s+1}^T \beta^{t-1}\eta_t P \sum_{\tau=s}^{t-1} \left (\frac{\beta}{1-\beta}\right)^{\frac{1}{c}}\tag{\cref{lemma:naive-upper-bound}}\\ 
    &\le (1-\gamma)\eta_{s+1}P\left(\frac{\beta}{1-\beta} \right)^{\frac{1}{c}}\sum_{t=s+1}^T (t-s)\beta^{t-1}\tag{$\eta_t\le \eta_{s+1}, \forall t\ge s+1$}\\
    &= \frac{\beta^{s+\frac{1}{c}}(1-\beta^{T-s}(T-s+1)+\beta^{T-s+1}(T-s))}{(1-\beta)^{2+\frac{1}{c}}}(1-\gamma)\eta_{s+1}P.
\end{align*}
Combining our analysis of the LHS and RHS of \cref{eq:deviation}, we have
\begin{equation}
    \|p_j^s\|\le \left(\frac{\beta^{1+\frac{1}{c}}(1-\beta^{T-s}(T-s+1)+\beta^{T-s+1}(T-s))}{(1-\beta)^{2+\frac{1}{c}}}(1-\gamma)\eta_{s+1}P\right)^{\frac{1}{c}}. 
\end{equation}

\end{proof}

\section{Proofs for Section \ref{sec:result-part2}}\label{appendix:resultspart2}

In this section, we prove \Cref{thm: punishlindirectionhedge} for general $D\ge 1$. (\Cref{thm:punishment} follows as a direct corollary, since it is a special case of \Cref{thm: punishlindirectionhedge} when $D=1$.)
\thmpunishlindirectionhedge*

We use the following notation in this section. Like in \Cref{subsec:prooftechnique}, we let $A^t(\mathbf{p}_j; \mathbf{p}_{-j})$ be the arm pulled by the algorithm at time $t$ if producer $j$ chooses the vector $\mathbf{p}_j$ and other producers choose the vectors $\mathbf{p}_{-j}$. Similarly, we let  $\mathcal{P}^t(\mathbf{p}_j; \mathbf{p}_{-j})$ be the set of active producers at time $t$ if producer $j$ chooses the vector $\mathbf{p}_j$ and other producers choose the vectors $\mathbf{p}_{-j}$.

A key ingredient in the proof of \Cref{thm: punishlindirectionhedge} is the following lemma, which upper bounds how much the probability of choosing a producer in \Cref{alg:punishlindirectionhedge} is affected by a producer's choices of content vectors. This lemma can be viewed as an extension of Lemma \ref{lemma:main} to \texttt{PunishLinDirectionHedge}. 
\begin{lemma}
\label{lemma:main-general} 
Let $D \ge 1$ be the dimension. Suppose that the platform runs \texttt{PunishLinDirectionHedge} with any learning rate schedule $\eta_1\ge \eta_2\ge \cdots\ge \eta_T$, any $q>0$, and any vector $g\in \mathbb{R}^D_{\ge 0}$ satisfying $\|g\|=1$. For any choice of $\mathbf{p}_{-j}$, $\mathbf{p}_{j, 1}$, and $\mathbf{p}_{j, 2}$, for any $t$ such that $j\in\mathcal{P}^t(\mathbf{p}_{j, 1}; \mathbf{p}_{-j})\cap \mathcal{P}^t(\mathbf{p}_{j, 2}; \mathbf{p}_{-j})$, the difference
\[M_t := \mathbb{E}\left[\mathbb{P}[A^t(\mathbf{p}_{j, 1}; \mathbf{p}_{-j}) = j \mid \mathcal{H}^{t-1}, u^t] \right] - \mathbb{E}\left[\mathbb{P}[A^t(\mathbf{p}_{j, 2}; \mathbf{p}_{-j})=j \mid \mathcal{H}^{t-1}, u^t]  \right]\]
can be upper bounded as follows 
\begin{align}
    M_t \le (1-\gamma) \eta_t \sum_{\tau=0}^{t-1} \|p^{\tau} _{j, 1} -   p^{\tau} _{j, 2}  \|.
\end{align}
\end{lemma}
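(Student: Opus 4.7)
The plan is to follow the template of the full-information case of \Cref{lemma:main}, with one additional ingredient: verifying that the active set $\mathcal{P}^t$ coincides across the two hypothetical plays by producer $j$, so that the punishment step does not introduce any new asymmetry between the scenarios. Concretely, I would first show by induction on $s \le t$ that for every $j' \neq j$, the membership $j' \in \mathcal{P}^s(\mathbf{p}_{j,1}; \mathbf{p}_{-j})$ is equivalent to $j' \in \mathcal{P}^s(\mathbf{p}_{j,2}; \mathbf{p}_{-j})$: the base case $s=1$ is immediate since $\mathcal{P}^1 = [P]$, and the inductive step uses that the punishment rule removes $j'$ only based on $\|p_{j'}^{s-1}\|$ and $p_{j'}^{s-1}/\|p_{j'}^{s-1}\|$, both of which are determined by $\mathbf{p}_{-j}$ alone. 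Combined with the standing hypothesis that $j$ itself lies in both $\mathcal{P}^t(\mathbf{p}_{j,1};\mathbf{p}_{-j})$ and $\mathcal{P}^t(\mathbf{p}_{j,2};\mathbf{p}_{-j})$, this yields a common active set $\mathcal{P}^t$ in both scenarios.

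With the active sets identified, the uniform-exploration term $\gamma/|\mathcal{P}^t|$ in the selection rule cancels when forming $M_t$, so only the softmax contribution remains. Fixing a realization of $u^t$ and setting
\[
x_{j',k} := \eta_t \sum_{\tau=0}^{t-1} \langle u^t, p_{j',k}^\tau \rangle \ge 0, \qquad j' \in \mathcal{P}^t, \; k \in \{1,2\},
\]
with the convention $p_{j',k}^\tau = p_{j'}^\tau$ for $j' \neq j$, I would invoke \Cref{lemma:main-ingredient} with $P' = |\mathcal{P}^t|$. Since $x_{j',1} = x_{j',2}$ for every $j' \in \mathcal{P}^t$ with $j' \neq j$, the first argument of the minimum collapses to $|x_{j,1} - x_{j,2}|$, giving
\[
\frac{e^{x_{j,1}}}{\sum_{j' \in \mathcal{P}^t} e^{x_{j',1}}} \;-\; \frac{e^{x_{j,2}}}{\sum_{j' \in \mathcal{P}^t} e^{x_{j',2}}} \;\le\; \eta_t \left| \sum_{\tau=0}^{t-1} \langle u^t, p_{j,1}^\tau - p_{j,2}^\tau \rangle \right|.
\]

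To finish, I would apply the triangle inequality and Cauchy--Schwarz to bound the right-hand side by $\eta_t \sum_{\tau=0}^{t-1} \|u^t\| \cdot \|p_{j,1}^\tau - p_{j,2}^\tau\|$, then use $\|u^t\| = 1$ (from the normalization on $\mathrm{supp}(\mathcal{D})$ in \Cref{sec:model}). Multiplying by $(1-\gamma)$ and taking expectation over $(\mathcal{H}^{t-1}, u^t)$ under any valid coupling of the two scenarios then yields the stated bound, since the pointwise control does not depend on the history.

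The main obstacle is the active-set decoupling in the first step: making it rigorous that producer $j$'s deviation cannot alter whether other producers pass the punishment filter, and therefore that the softmax in the two scenarios is taken over a truly common index set. Once this decoupling is established, the remaining argument is a direct reuse of \Cref{lemma:main-ingredient} with $P'$ replaced by $|\mathcal{P}^t|$ (the lemma is already stated for arbitrary $P' \ge 1$), and the rest mirrors the full-information half of the proof of \Cref{lemma:main}.
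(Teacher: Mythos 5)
Your proposal is correct and follows essentially the same route as the paper's proof: establish that the active set is identical under $\mathbf{p}_{j,1}$ and $\mathbf{p}_{j,2}$ (since punishment of $j'\neq j$ depends only on $\mathbf{p}_{-j}$, plus the hypothesis that $j$ is active in both), then apply Lemma \ref{lemma:main-ingredient} over $\mathcal{P}^t$, collapse the cross terms because the other producers' vectors coincide, and finish with Cauchy--Schwarz and $\|u^t\|=1$. The only cosmetic difference is that you phrase the active-set decoupling as an induction, whereas the paper states it as a direct observation.
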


\begin{proof}
    The proof is similar to the proof of \Cref{lemma:main}, except that we now need to deal with the fact that $\mathcal{P}^t$ is not always equal to $[P]$. 
    
    We first observe that the membership of each producer $j$ in $\mathcal{P}^t$ is only determined by that producer's actions. This means that for any producer $j'\neq j$, it holds that $j' \in \mathcal{P}^t(\mathbf{p}_{j, 1}; \mathbf{p}_{-j})$ if and only if $j' \in \mathcal{P}^t(\mathbf{p}_{j, 2}; \mathbf{p}_{-j})$. Moreover, by assumption, for producer $j$, it holds that $j \in \mathcal{P}^t(\mathbf{p}_{j, 1}; \mathbf{p}_{-j})$ and  $j \in \mathcal{P}^t(\mathbf{p}_{j, 2}; \mathbf{p}_{-j})$. Thus, we have $\mathcal{P}^t(\mathbf{p}_{j, 1}; \mathbf{p}_{-j})= \mathcal{P}^t(\mathbf{p}_{j, 2}; \mathbf{p}_{-j})$, which we denote by $\mathcal{P}^t$ for notational convenience.

    By the specification of \texttt{PunishLinDirectionHedge}, we have for $k \in \left\{1,2\right\}$: 
\begin{align}
    \mathbb{P}[A^t(\mathbf{p}_{j, k}; \mathbf{p}_{-j}) = j \mid \mathcal{H}^{t-1}, u^t]  &= \frac{\gamma}{|\mathcal{P}^t|} +(1-\gamma)\frac{e^{\eta_t \sum_{\tau=0}^{t-1} \langle u^t,  {p}^\tau_{j, k}\rangle}}{\sum_{j'\in[P]}e^{\eta_t \sum_{\tau=0}^{t-1}\langle u^t,  {p}^\tau_{j'}\rangle}}\\
    &= \frac{\gamma}{|\mathcal{P}^t|} +(1-\gamma)\frac{1}{1+\sum_{j'\neq j} e^{\eta_t \sum_{\tau=0}^{t-1} \langle u^t,  {p}^\tau_{j'}-{p}_{j,k}^\tau\rangle}}.
\end{align}

We define the following sequences $(x_{j,k})_{j \in \mathcal{P}^t}$ for $k \in \left\{1,2\right\}$. For $j \in \mathcal{P}^t$, we define: 
\[x_{j,k}:=\eta_t\sum_{\tau=0}^{t-1}\langle u^t, \hat{p}^\tau_{j}(\mathbf{p}_{j, k})\rangle\ge 0. \] This allows us to express the probabilities $ \mathbb{P}[A^t(\mathbf{p}_{j, k}; \mathbf{p}_{-j})=j\mid \mathcal{H}^{t-1}, u^t]$ as follows:
\[ \mathbb{P}[A^t(\mathbf{p}_{j, k}; \mathbf{p}_{-j}) = j \mid \mathcal{H}^{t-1}, u^t] = \frac{\gamma}{|\mathcal{P}^t|} +(1-\gamma)\frac{e^{x_{j,k}}}{\sum_{j'\in[\mathcal{P}^t]} e^{x_{j', k}}}. \]
This means that:
\[\mathbb{E}_{\mathcal{H}^{t-1}, u^t}\left[\mathbb{P}[A^t(\mathbf{p}_{j, k}; \mathbf{p}_{-j}) = j \mid \mathcal{H}^{t-1}, u^t]\right] = \mathbb{E}_{\mathcal{H}^{t-1}, u^t}\left[\frac{\gamma}{|\mathcal{P}^t|} +(1-\gamma)\frac{e^{x_{j,k}}}{\sum_{j'\in[\mathcal{P}^t]} e^{x_{j', k}}}\right]. \]
Using the fact that $\mathcal{P}^t(\mathbf{p}_{j, 1}; \mathbf{p}_{-j})= \mathcal{P}^t(\mathbf{p}_{j, 2}; \mathbf{p}_{-j})=\mathcal{P}^t$ and the same analysis of coupling probability space as in the proof of \Cref{lemma:main}, we have 
\[M_t=(1-\gamma)\mathbb{E} \left[\frac{e^{x_{j,1}}}{\sum_{j'\in[\mathcal{P}^t]} e^{x_{j', 1}}}-\frac{e^{x_{j,2}}}{\sum_{j'\in[\mathcal{P}^t]} e^{x_{j', 2}}}\right].\]

To analyze this expression, we apply the first term in \Cref{lemma:main-ingredient} to obtain: 
\footnotesize{
    \begin{align*}
        &\frac{e^{x_{j,1}}}{\sum_{j'\in[\mathcal{P}^t]} e^{x_{j', 1}}}-\frac{e^{x_{j,2}}}{\sum_{j'\in[\mathcal{P}^t]} e^{x_{j', 2}}}\\
        &\le \left| x_{j,1}-x_{j,2} - \sum_{j' \in \mathcal{P}^t, j'\neq j} \left ((x_{j',1}-x_{j',2}) \frac{e^{x_{j',2}}}{\sum_{j'' \in \mathcal{P}^t, j'' \neq j} e^{x_{j'', 2}}}\right )\right|\\
        &= \left|\eta_t \sum_{\tau=0}^{t-1} \langle u^t, \hat{p}_{j}^\tau(\mathbf{p}_{j, 1})-\hat{p}_{j}^\tau(\mathbf{p}_{j, 2})\rangle-\sum_{j' \in \mathcal{P}^t, j' \neq j} \left(\left(\eta_t \sum_{\tau=0}^{t-1} \langle u^t, \hat{p}_{j'}^\tau(\mathbf{p}_{j, 1})-\hat{p}_{j'}^\tau(\mathbf{p}_{j, 2})\rangle \right)\frac{ e^{\eta_t \sum_{\tau=0}^{t-1}\langle u^t, \hat{p}_{j'}^\tau(\mathbf{p}_{j, 2})\rangle}}{\sum_{j'' \in \mathcal{P}^t, j'' \neq j} e^{\eta_t \sum_{\tau=0}^{t-1}\langle u^t, \hat{p}_{j''}^\tau(\mathbf{p}_{j, 2})\rangle}}\right)\right|.
    \end{align*}}
    \normalsize{}
    
    Since all producers $j' \neq j$ choose the same $\mathbf{p}_{-j}$ and since the content vector is observed in the full-information setting, it holds that  $\hat{p}_{j'}^\tau(\mathbf{p}_{j,k})=p_{j'}^\tau$ for all $j'\neq j, k\in\{1, 2\}, \tau\in[T]$. Moreover, it holds that $\hat{p}_{j}^\tau(\mathbf{p}_{j,k})=p_{j, k}^\tau$ for all $k\in\{1, 2\}, \tau\in[T]$. Thus, we simplify the above upper bound as follows:
\begin{align*}
    \frac{e^{x_{j,1}}}{\sum_{j'\in[\mathcal{P}^t]} e^{x_{j', 1}}}-\frac{e^{x_{j,2}}}{\sum_{j'\in[\mathcal{P}^t]} e^{x_{j', 2}}} 
    &\le \left|\eta_t \sum_{\tau=0}^{t-1} \langle u^t, {p}_{j, 1}^\tau-{p}_{j, 2}^\tau \rangle\right|\\
    &\le \eta_t \sum_{\tau=0}^{t-1} \left|\langle u^t, {p}_{j, 1}^\tau-{p}_{j, 2}^\tau \rangle\right|\\
    &\le \eta_t \sum_{\tau=0}^{t-1}  \|u^t\|_* \|{p}_{j, 1}^\tau-{p}_{j, 2}^\tau \|\tag{Cauchy-Schwarz Inequality}\\
    &=\eta_t\sum_{\tau=0}^{t-1} \|{p}_{j, 1}^\tau-{p}_{j, 2}^\tau \|.
\end{align*}

Plugging in \cref{eq:M_t}, we conclude the following bound on $M_t$:
\begin{align*}
    M_t
    &\le (1-\gamma)\mathbb{E}\left[\eta_t \sum_{\tau=0}^{t-1} \|{p}_{j, 1}^\tau-{p}_{j, 2}^\tau \|\right]\\
    &=(1-\gamma)\eta_t \sum_{\tau=0}^{t-1} \|{p}_{j, 1}^\tau-{p}_{j, 2}^\tau \|.
\end{align*}
\end{proof}

Using this lemma, we next prove an intermediate result on the best response of any producer at any time. 

\begin{lemma}
\label{lemma:optpunishment}
Let $D \ge 1$ be the dimension, and assume that $\beta > 0$.
Suppose that the platform applies \cref{alg:punishlindirectionhedge} with fixed learning rate schedule $\eta=O(1/\sqrt{T})$, any punishment threshold $q>0$, and any direction criterion $g\in \mathcal{R}^D_{\ge 0}$ satisfying $\|g\|=1$. For $T$ sufficiently large, for any mixed strategy profile $\mathbf{\mu}_{-j}$
and for any best response $\mathbf{p}_j \in \arg\max_{\mathbf{p}} \mathbb{E}_{\mathbf{p}_{-j} \sim \mathbf{\mu}_{-j}} [u_j(\mathbf{p}\mid \mathbf{p}_{-j})]$, it holds that  $\mathbf{p}_j^t \in \left\{0,q\cdot g\right\}$ for all $1 \le t \le T$. 
\end{lemma}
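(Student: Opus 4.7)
The plan is to fix an arbitrary best response $\mathbf{p}_j$ and argue by contradiction: if there exists a time step $t^*$ with $p_j^{t^*} \notin \{0, q \cdot g\}$, I will exhibit a single-coordinate deviation $\mathbf{p}'_j$ that agrees with $\mathbf{p}_j$ except in the $t^*$-th entry and strictly improves $\mathbb{E}_{\mathbf{p}_{-j} \sim \bm{\mu}_{-j}}[u_j(\mathbf{p}_j' \mid \mathbf{p}_{-j})]$. Because the punishment rule in \texttt{PunishLinDirectionHedge} inspects each step's content vector independently, changing only the $t^*$-th coordinate leaves $j$'s membership in $\mathcal{P}^s$ for all $s \ne t^*$ identical under $\mathbf{p}_j$ and $\mathbf{p}'_j$, which keeps the comparison local.

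\paragraph{Case 1: $p_j^{t^*} \neq 0$ but violates the direction/norm criterion.} I take $(p'_j)^{t^*} = 0$. In both strategies, $j$ is removed from $\mathcal{P}^s$ for every $s > t^*$ (in the original by the criterion violation; in the deviation because $\|0\| < q$), so $j$'s future selection probabilities are zero in both. The selection probability at time $t^*$ depends only on $\sum_{\tau < t^*}\langle u^{t^*}, p_j^{\tau}\rangle$ and is unchanged. Hence the deviation saves exactly $\beta^{t^*-1}\|p_j^{t^*}\|^c > 0$ (or $\beta^{t^*}$; whichever convention \eqref{eq:expectedutility} uses), contradicting optimality of $\mathbf{p}_j$.

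\paragraph{Case 2: $p_j^{t^*} = \alpha g$ with $\alpha > q$.} I take $(p'_j)^{t^*} = q g$. Both strategies keep $j$ in $\mathcal{P}^{t^*+1}$, and since $p_j^{s}$ is unchanged for $s \ne t^*$, $j$'s membership in $\mathcal{P}^{s}$ for every $s > t^*$ is identical in both runs, so the hypothesis of Lemma~\ref{lemma:main-general} is met. Applying that lemma with $\mathbf{p}_{j,1} = \mathbf{p}_j$ and $\mathbf{p}_{j,2} = \mathbf{p}'_j$, the increase in the expected selection probability at each $s > t^*$ is at most $(1-\gamma)\eta(\alpha - q)$, giving total discounted gain at most $(1-\gamma)\eta(\alpha-q)\cdot \beta^{t^*}/(1-\beta)$. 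The cost saved at time $t^*$ is $\beta^{t^*-1}(\alpha^c - q^c) \ge \beta^{t^*-1} c q^{c-1}(\alpha - q)$ by convexity of $x \mapsto x^c$ for $c \ge 1$. Thus the deviation is strictly profitable whenever
\[
c q^{c-1} > \frac{(1-\gamma)\beta}{1-\beta}\,\eta,
\]
and since $q = (\beta/P)^{1/c}(1-\epsilon)$ is a positive constant independent of $T$ while $\eta = O(1/\sqrt{T}) \to 0$, this inequality holds for $T$ sufficiently large, again contradicting optimality.

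\paragraph{Main obstacle.} The main subtlety is the \emph{decoupling argument}: I need the single-coordinate swap at $t^*$ not to perturb the active-set dynamics $\{\mathcal{P}^s\}_{s \ne t^*}$, so that Lemma~\ref{lemma:main-general} can be applied with the same $\mathcal{P}^s$ on both sides and the cost changes at times $s \ne t^*$ vanish; this relies crucially on the punishment criterion being step-local. The second subtlety is matching the \emph{linear-in-$(\alpha-q)$} form of the convexity lower bound $\alpha^c - q^c \ge c q^{c-1}(\alpha-q)$ to the linear bound from Lemma~\ref{lemma:main-general}, with the key observation that $q^{c-1}$ is a strictly positive constant (not shrinking in $T$), which is precisely what allows the cost savings to dominate the future selection loss once $\eta$ is small enough. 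The $c=1$ boundary case is fine because the convexity bound holds with equality and the comparison reduces to $1 > (1-\gamma)\beta\eta/(1-\beta)$.
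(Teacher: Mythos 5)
Your proposal is correct and follows essentially the same route as the paper's proof: single-coordinate deviations evaluated against the best response, the step-local punishment rule keeping active-set membership identical across the two runs, Lemma~\ref{lemma:main-general} to bound the gain in recommendation probability linearly in the change, the convexity bound $\alpha^c - q^c \ge cq^{c-1}(\alpha-q)$, and $\eta = O(1/\sqrt{T})$ to make the cost saving dominate. The only differences are cosmetic: you organize the cases pointwise by the value of $p_j^{t^*}$ rather than around the first violation time $T_0$ as the paper does, and your Case~2 phrase ``both strategies keep $j$ in $\mathcal{P}^{t^*+1}$'' is slightly loose when $j$ was already punished before $t^*$ --- but there the probability difference is trivially zero while the cost saving remains positive, so the argument still closes.
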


\begin{proof}
Our main technical ingredient is Lemma \ref{lemma:main-general}. Fix the mixed strategy profile $\mathbf{\mu}_{-j}$ chosen by the other producers. Let  $\mathbf{p}_j$ be a best response to $\mathbf{\mu}_{-j}$. If there exists $t$ such that $\|p_j^t\| < q$ or $\frac{p_j^t}{\|p_j^t\|} \neq g$, let $1 \le T_0 \le T$ be the first time step where this occurs; otherwise, let $T_0 = T+1$. We split into two cases: $t < T_0$ and $t \ge T_0$. 

\paragraph{Case 1: $t < T_0$.} We will show that $p_j^t=q\cdot g$ for any $1\le t< T_0$. Since $\frac{p_j^t}{\|p_j^t\|}= g$ for any $1\le t< T_0$, it suffices to show that $\|p_j^t\|=q$ for any $1\le t< T_0$. Assume for the sake of contradiction that there exists a time step $1\le s< T_0$ such that $\|p_j^s\|>q$. Let ${\mathbf{p}^*_j}:=[p_j^1, \cdots, p_j^{s-1}, q\cdot g, p_j^{s+1}, \cdots, p_j^T]$, whose coordinates are the same as $\mathbf{p}_j$ except the $s$-th coordinate is $q\cdot g$. Since $\mu_{-j}$ is a best-response to $\bold{p_{-j}}$, it must hold that $ \mathbb{E}_{\mathbf{p}_{-j} \sim \mathbf{\mu}_{-j}} [u_j(\bold{p^*_j} \mid\bold{p_{-j}})] -  \mathbb{E}_{\mathbf{p}_{-j} \sim \mathbf{\mu}_{-j}} [u_j(\bold{p_j}\mid\bold{p_{-j}})] \le 0$. 

However, we will derive a contradiction by showing that $\mathbb{E}_{\mathbf{p}_{-j} \sim \mathbf{\mu}_{-j}} [u_j(\bold{p^*_j} \mid\bold{p_{-j}})] -  \mathbb{E}_{\mathbf{p}_{-j} \sim \mathbf{\mu}_{-j}} [u_j(\bold{p_j}\mid\bold{p_{-j}})] > 0$. By definition, 
\begin{align*}
    & \mathbb{E}_{\mathbf{p}_{-j} \sim \mathbf{\mu}_{-j}} [u_j(\bold{p^*_j} \mid\bold{p_{-j}})] -  \mathbb{E}_{\mathbf{p}_{-j} \sim \mathbf{\mu}_{-j}} [u_j(\bold{p_j}\mid\bold{p_{-j}})]
        \\
        &= - \underbrace{\mathbb{E}\left[\sum_{t=1}^{T} \beta^t \left(\mathbbm{1}[ A^t(\bold{p}_j;\bold{p}_{-j})=j]- \mathbbm{1}[A^t(\bold{p}_j^*;\bold{p}_{-j})=j] \right)\right]}_{(A)}
        +\underbrace{\sum_{t=1}^{T} \beta^t \left( \|p_j^t\|^{c} - \|(p_j^*)^t\|^{c} \right)}_{(B)}
\end{align*}
We separately analyze (A) and (B). 
\begin{itemize}
    \item \textit{Term (A)}: We show that term (A) is upper bounded by $\sum_{t=s+1}^{T_0} \beta^t (1-\gamma)\eta (\|p_j^s\|-q)$. Recall that $\bold{p^*_j}$ and $\bold{p_j}$ are the same up to time $s-1$, so the expected probability $\mathbb{E}_{\mathbf{p}_{-j} \sim \mathbf{\mu}_{-j}} [\mathbb{P}[A^t(\bold{p_j};\bold{p_{-j}})=j]] = \mathbb{E}_{\mathbf{p}_{-j} \sim \mathbf{\mu}_{-j}} [\mathbb{P}[A^t(\bold{p^*_j};\bold{p_{-j}})=j]]$ for all $t \le s$. Moreover, we see that $\mathbb{E}_{\mathbf{p}_{-j} \sim \mathbf{\mu}_{-j}} [\mathbb{P}[A^t(\bold{p_j};\bold{p_{-j}})=j]] = \mathbb{E}_{\mathbf{p}_{-j} \sim \mathbf{\mu}_{-j}} [\mathbb{P}[A^t(\bold{p^*_j};\bold{p_{-j}})=j]] = 0$ for all $t \ge T_0 + 1$. Therefore, term (A) is equal to 
    \[\mathbb{E}\left[\sum_{t=s+1}^{T_0} \beta^t \left( \mathbbm{1}[ A^t(\bold{p_j};\bold{p_{-j}})=j]- \mathbbm{1}[A^t(\bold{p^*_j};\bold{p_{-j}})=j]\right)\right].\] Applying \Cref{lemma:main-general} with $\mathbf{p}_{j,1}=\mathbf{p_j}$ and $\mathbf{p}_{j,2}=\mathbf{{p}^*_j}$,
    we have 
    \begin{align*}
        &\mathbb{E}\left[\sum_{t=s+1}^{T_0} \beta^t \left( \mathbbm{1}[ A^t(\bold{p_j};\bold{p_{-j}})=j]- \mathbbm{1}[A^t(\bold{p^*_j};\bold{p_{-j}})=j]\right)\right] \\
        &=  \mathbb{E}\left[\sum_{t=s+1}^{T_0} \beta^t \left( \mathbb{P}[ A^t(\bold{p_j};\bold{p_{-j}})=j]- \mathbb{P}[A^t(\bold{p^*_j};\bold{p_{-j}})=j]\right)\right] \\
        &\le\mathbb{E}_{\mathbf{p}_{-j} \sim \mathbf{\mu}_{-j}}\left[\sum_{t=s+1}^{T_0} \beta^t (1-\gamma)\eta \|p_j^s-q\cdot g\|\right] \\
        &=_{(1)} \sum_{t=s+1}^{T_0} \beta^t (1-\gamma)\eta (\|p_j^s\|-q),
    \end{align*}
    where (1) follows from the fact that $\frac{p_j^s}{\|p_j^s\|}=g$ and $\|g\| =1$.
    \item \textit{Term (B)}: We show that term (B) is lower bounded by $\beta^s cq^{c-1}(\|p_j^s\|-q)$. Using that $\bold{p^*_j}$ and $\bold{p_j}$ are the same except for time $s$, we see that:
    \[\sum_{t=1}^{T} \beta^t \left( \|p_j^t\|^{c} - \|(p_j^*)^t\|^{c} \right) = \beta^s (\|p_j^s\|^c - q^c).\]
    By using the first-order condition for convex function $x^c (c\ge 1)$, we have 
    \begin{align*}
        \beta^s (\|p_j^s\|^c - q^c) \ge \beta^s cq^{c-1}(\|p_j^s\|-q).
    \end{align*}
\end{itemize}

Using this analysis of (A) and (B), we see that:
\begin{align*}
    \mathbb{E}_{\mathbf{p}_{-j} \sim \mathbf{\mu}_{-j}} [u_j(\bold{p^*_j} \mid\bold{p_{-j}})] - \mathbb{E}_{\mathbf{p}_{-j} \sim \mathbf{\mu}_{-j}} [u_j(\bold{p_j}\mid\bold{p_{-j}})]
    &\ge \sum_{t=s+1}^{T_0} - \beta^{t} (1-\gamma)\eta (\|p_j^s\|-q) + \beta^{s} cq^{c-1}(\|p_j^s\|-q)\\
    &= (\|p_j^s\|-q)\beta^{s} \left (-(1-\gamma)\eta\frac{\beta(1-\beta^{T_0-s})}{(1-\beta)}+cq^{c-1}\right)\\
    &\ge (\|p_j^s\|-q)\beta^{s} \left (-(1-\gamma)\eta\frac{\beta}{(1-\beta)}+cq^{c-1}\right).
\end{align*}
For $T$ sufficiently large, we see that:
\[(1-\gamma)\eta\frac{\beta}{(1-\beta)} = (1-\gamma)\cdot \frac{1}{\sqrt{T}} \frac{\beta}{(1-\beta)}  < cq^{c-1}. \]
Since $\|p_j^s\| > q$ by assumption, this means that $u_j(\bold{p^*_j} \mid\bold{p_{-j}}) - u_j(\bold{p_j}\mid\bold{p_{-j}})>0$. This is a contradiction.

\paragraph{Case 2: $t \ge T_0.$} We will show that $p_j^t = 0$ for any $T_0\le t\le T$. If $T_0 = T+1$, this statement holds trivially. If $T_0 \le T$, then \texttt{PunishLinDirectionHedge} guarantees that $j \not\in \mathcal{P}_{s}$ for all $s \ge T_0 + 1$, so the probability $\mathbb{P}[A^s = j]$ is $0$ for all $t \ge T_0 + 1$. Moreover, the probabilities $\mathbb{P}[A^s = j]$ for $s \le T_0$ are unaffected by the content vectors $p_j^t$ for $T_0 \le t \le T$. Since a nonzero content vector would incur nonzero cost of production and since $\beta > 0$, the producer must set $p_j^t = 0$ for all $T_0 \le t \le T$. 

\end{proof}

Then we use \Cref{lemma:optpunishment} to prove \Cref{thm: punishlindirectionhedge}.
\begin{proof}[Proof of \cref{thm: punishlindirectionhedge}]
We use the following notation in this proof. Recall that $q = \left(\frac{\beta}{P}\right)^{1/c}(1-\epsilon)$ and $\|g\|=1$. Let $\mathbf{p} := [q\cdot g, q\cdot g, \ldots, q\cdot g, 0]$ be the vector whose first $T-1$ coordinates are $q\cdot g$ and last coordinate is zero.

The main ingredient of the proof is to show that for any $j \in [P]$ and any pure strategy profile $\mathbf{p}_{-j}$ such that $p_{j'}^t \in \left\{0, q \cdot g \right\}$ for all $1 \le t \le T$ and $j' \neq j$, the vector $\mathbf{p}$ is 
producer $j$'s \textit{unique} best response. In the following argument, we use the notation $\mathbf{p}_{-j}$ to denote any pure strategy profile for players $j' \neq j$ and $\mathbf{p}_j$ to denote any best response for player $j$ to $\mathbf{p}_{-j}$. We split into two cases: $t = T$ and $1 \le t \le T-1$.

\paragraph{Case 1: $t = T$.} For $t=T$, the choice of producer at the last time step doesn't affect the cumulative probability of being chosen, but only affects the cost. Since a nonzero content vector will incur nonzero cost and since $\beta>0$, the producer must choose $p_j^T=0 = p^T$ as desired. 

\paragraph{Case 2: $1 \le t \le T-1$.} 
We show that $p_j^t$ must be equal to $q\cdot g$ for all $1 \le t \le T-1$. Assume for sake of contradiction that this statement does not hold. Let $1 \le s \le T-1$ denote the first time step such that there exists $j \in [P]$ and $\mathbf{p}_j \in \text{supp}(\mu_j)$ such that $p_j^s \neq q\cdot g$. By Lemma \ref{lemma:optpunishment} and the fact that $\mathbf{p}_j$ is a best response, we know that $p_j^t \in \left\{0, q\cdot g\right\}$ for all $t$; this means that $p_j^s = 0$. Since $\mathbf{p}_j$ is a best-response, it holds that $u_j(\mathbf{p}\mid \mathbf{p}_{-j}) -  u_j(\mathbf{p}_j\mid \mathbf{p}_{-j}) \le 0$.

However, we will derive a contradiction by showing that $u_j(\mathbf{p}\mid \mathbf{p}_{-j}) -  u_j(\mathbf{p}_j\mid \mathbf{p}_{-j}) > 0$. By definition, 
\begin{align*}
    & u_j(\mathbf{p}\mid \mathbf{p}_{-j}) -  u_j(\mathbf{p}_j\mid \mathbf{p}_{-j})
        \\
        &= \underbrace{\mathbb{E}\left[\sum_{t=1}^{T} \beta^t \left(\mathbbm{1}[ A^t(\bold{p};\bold{p_{-j}})=j]- \mathbbm{1}[A^t(\bold{p_j};\bold{p_{-j}})=j] \right)\right]}_{(A)}
        -\underbrace{\sum_{t=1}^{T} \beta^t \left( \|p^t\|^{c} - \|p_j^t\|^{c} \right)}_{(B)}
\end{align*}
We separately analyze (A) and (B). 
\begin{itemize}
    \item \textit{Term (A)}: We show that term (A) is \textit{lower bounded} by $\sum_{t=s+1}^T \beta^{t}\cdot \frac{1}{P}$. Because $p_j^t = p^t$ for all $1 \le t \le s-1$, the probability $\mathbb{P}[A^t (\mathbf{p}; \mathbf{p}_{-j}) = j] = \mathbb{P}[A^t (\mathbf{p}_j; \mathbf{p}_{-j}) = j] $ for $1 \le t \le s$. Since $p_j^s=0$, it holds that $\mathbb{P}[A^t(\mathbf{p}_j; \mathbf{p}_{-j}) = j] = 0$ for $t \ge s +1$. For $\mathbf{p}$, we can analyze $\mathbb{P}[A^t(\mathbf{p}; \mathbf{p}_{-j}) = j]$ in terms of the set of producers $\mathcal{P}^t$ (using the notation in Algorithm \ref{alg:punishlindirectionhedge}). Thus, we see that: \begin{align*}
& \mathbb{E}\left[\sum_{t=1}^{T} \beta^t \left(\mathbbm{1}[ A^t(\bold{p};\bold{p_{-j}})=j]- \mathbbm{1}[A^t(\bold{p_j};\bold{p_{-j}})=j] \right)\right] \\
&= \sum_{t=s+1}^T \beta^{t}(\PP[A^t(\mathbf{p}; \mathbf{p}_{-j})=j]) \\
        &=\sum_{t=s+1}^T \beta^{t} \cdot \EE_{u^t, \mathcal{H}_{t-1} }\left[\frac{(1-\gamma)\cdot e^{\eta\sum_{\tau=0}^{t-1}\langle u^t, p^\tau\rangle}}{\sum_{j'\in \mathcal{P}^t} e^{\eta\sum_{\tau=0}^{t-1} \langle u^t, p_{j'}^\tau\rangle}}+\frac{\gamma}{|\mathcal{P}^t|} \right],
\end{align*}
To further analyze this expression, note that $p_{j'}^\tau = q\cdot g$ for all $j'\in[\mathcal{P}^t]$ and $\tau\in [t-1]$ because $p_{j'}^\tau\neq 0$ by definition of $\mathcal{P}^t$. 
We apply Lemma \ref{lemma:optpunishment} to see: 
\begin{align*}
\sum_{t=s+1}^T \beta^{t}(\PP[A^t(\mathbf{p}; \mathbf{p}_{-j})=j])
    &= \sum_{t=s+1}^T \beta^{t} \cdot \EE_{\mathcal{H}_{t-1}, u^t}\left[\frac{(1-\gamma)\cdot e^{\eta\sum_{\tau=0}^{t-1} \langle u^t, q\cdot g\rangle}}{\sum_{j'\in \mathcal{P}^t} e^{\eta\sum_{\tau=0}^{t-1}\langle u^t, q\cdot g\rangle}}+\frac{\gamma}{|\mathcal{P}^t|} \right]\\
    &= \sum_{t=s+1}^T \beta^{t} \cdot  \EE_{\mathcal{H}_{t-1}}\left[\frac{1}{|\mathcal{P}^t|} \right] \\
    &\ge_{(1)}\sum_{t=s+1}^T \beta^{t}\cdot \frac{1}{P},
\end{align*}
where (1) is because it always holds that $\mathcal{P}^t\subset [P]$ and thus $|\mathcal{P}^t|\le P$.
    \item \textit{Term (B)}: We show that term (B) is upper bounded by $\sum_{t=s}^{T-1} \beta^{t} q^c$. Because $p_j^t = p^t$ for all $1 \le t \le s-1$, the cost $c(p_j^t)=c(p^t)$ for $1 \le t \le s - 1$. And since $p^t=q\cdot g$ for all $t< T$ and $p^T=0$, we have   
\begin{align*}
\sum_{t=1}^{T} \beta^t \left( \|p^t\|^{c} - \|p_j^t\|^{c} \right)
        &= \sum_{t=s}^{T-1} \beta^{t}(q^c - \|p_j^t\|^c) -\beta^T \|p_j^T\|^c\\
        &\le_{(1)} \sum_{t=s}^{T-1} \beta^{t} q^c,
\end{align*}
where (1) comes from fact that the norms are nonnegative.
\end{itemize}

Using this analysis of (A) and (B), we see that:
\begin{align*}
    \mathbb{E}_{\mu_{-j}}[u_j(\mathbf{p}\mid \mathbf{p}_{-j})] -  \mathbb{E}_{\mu_{-j}}[u_j(\mathbf{p}_j\mid \mathbf{p}_{-j})]
    &\ge \sum_{t=s+1}^T \beta^{t}\cdot \frac{1}{P} - \sum_{t=s}^{T-1} \beta^{t} q^c\\
    &=\left(\frac{\beta}{P}-q^c\right)\sum_{t=s+1}^{T-1}\beta^{t-1}\\
    &>_{(1)}0.
\end{align*}
where (1) comes from the definition that $q=\left(\frac{\beta}{P}\right)^{1/c} (1-\epsilon)$.
This is a contradiction. 

\paragraph{Concluding the theorem statement.} We show that the theorem statement follows from the fact shown above (i.e., for any $j \in [P]$ and any pure strategy profile $\mathbf{p}_{-j}$ such that $p_{j'}^t \in \left\{0, q \cdot g \right\}$ for all $1 \le t \le T$ and $j' \neq j$, the vector $\mathbf{p}$ is 
producer $j$'s \textit{unique} best response). From this fact, we can immediately conclude that $\mathbf{p} = \mathbf{p}_1 = \ldots = \mathbf{p}_P$ is an equilibrium. 

To see that $\mathbf{p} = \mathbf{p}_1 = \ldots = \mathbf{p}_P$ is the unique equilibrium, let $(\mu_1, \ldots, \mu_P)$ be any mixed-strategy Nash equilibrium. By Lemma \ref{lemma:optpunishment}, we see that for all $j'$ and for all $\mathbf{p}_{j'} \in \text{supp}(\mu_{j'})$, it holds that ${p}_{j'}^t \in \left\{0, q \cdot g\right\}$ for all $t$. This means that we can apply the fact shown above to see that $\mathbf{p}$ is producer $j$'s unique best-response to \textit{any} $\mathbf{p}_{-j} \in \text{supp}(\mu_{-j})$. This implies that $\mathbf{p}$ is producer $j$'s unique best-response to
$\mu_{-j}$. Thus, $\mu_j$ is a point mass at $\mathbf{p}$ as desired. 

\end{proof}

\section{Proofs for \Cref{sec:welfare}}

\subsection{Proof of Theorem \ref{thm:equilibriumcharacterizationpunishuserutility}}\label{app: proofpunishuserutility}
\thmequilibriumcharacterizationpunishuserutility*

In this section, we use the following notation. Let $\bar{p}_1, \ldots, \bar{p}_P$ be defined to be the optimal solution to \eqref{eq: F} multiplied by $1-\epsilon$. We let $p^*_j$ be a optimal solution to the following optimization problem:
\begin{equation}\label{eq: NEpunishuserutility}
\begin{aligned}
 p^*_j :=& \arg\min_{p} \|p\|\\
&\textrm{s.t. } \langle u, p\rangle\ge \langle u, \bar{p}_j\rangle ,\forall u\in \text{supp}(\mathcal{D}).
\end{aligned} 
\end{equation}
We define the pure strategy $\bold{p^*_j}$ as follows. For $1 \le t < T$, $(p^*_j)^t = p^*_j$ defined above. For $t = T$, $(p^*_j)^T = 0$. 

A key ingredient in the proof of Theorem \ref{thm:equilibriumcharacterizationpunishuserutility} is the following lemma, which shows that $\bold{p^*_j}$ strictly dominates any pure strategy that is punished at any time step. 
\begin{lemma} \label{lem:PNE_dominate_punish}
Assume the notation above. Let $j \in [P]$ be any producer, and let $\bold{p_j}$ be any pure strategy such that $\langle u, p_j^t \rangle < \langle u, \Bar{p}_j \rangle$ for some time step $t<T$ and some $u\in \text{supp}(\mathcal{D})$. For any strategy profile $\bold{p_{-j}}$ of the other producers, producer $j$ receives strictly higher utility for $\bold{p^*_j}$ than $\bold{p_j}$: i.e. \[u_j(\bold{p^*_j} \mid\bold{p_{-j}}) > u_j(\bold{p_j}\mid\bold{p_{-j}}),\]
    for any strategy profile $\bold{p_{-j}}$ of all the other producers.
\end{lemma}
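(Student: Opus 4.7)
The plan is to compare $u_j(\mathbf{p}^*_j \mid \mathbf{p}_{-j})$ and $u_j(\mathbf{p}_j \mid \mathbf{p}_{-j})$ term by term after tracking which time steps producer $j$ remains in the active set under each strategy. Let $t^\star := \min \{t < T : \exists u \in \text{supp}(\mathcal{D}),\ \langle u, p_j^t \rangle < \langle u, \bar{p}_j \rangle\}$ denote the first time that $\mathbf{p}_j$ violates the punishment criterion; by hypothesis $t^\star$ exists and $t^\star \le T - 1$. One may assume $\bar{p}_j \ne 0$, since otherwise $\langle u, \bar{p}_j \rangle = 0 \le \langle u, p_j^t \rangle$ for every $u$ in the nonnegative support and the hypothesis is vacuous. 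The key structural fact about \texttt{PunishUserUtility} is that, for a fixed pure profile $\mathbf{p}_{-j}$, the sets $\mathcal{P}^t \setminus \{j\}$ are deterministic (they depend on $\mathbf{p}_{-j}$ alone), so the per-round probability $\pi_j^t$ that $j$ is recommended \emph{whenever} $j \in \mathcal{P}^t$ is the same in both worlds. Under $\mathbf{p}^*_j$, feasibility of $p^*_j$ in \eqref{eq: NEpunishuserutility} guarantees the criterion holds at every $t=1,\dots,T-1$, so $j$ is in the active set at every step and pays zero cost at $t=T$; under $\mathbf{p}_j$, $j$ is in the active set only for $t=1,\dots,t^\star$.

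Next I would bound the cost of the stay-in strategy. Because $\bar{p}_j$ itself is feasible for \eqref{eq: NEpunishuserutility}, $\|p^*_j\| \le \|\bar{p}_j\|$. Writing $\bar{p}_j = (1-\epsilon)\, p_j^{\mathrm{opt}}$ where $(p_{j'}^{\mathrm{opt}})_{j' \in [P]}$ is an optimum of \eqref{eq: F}, and noting that rescaling all criteria by $(1-\epsilon)$ preserves the argmax, the constraint of \eqref{eq: F} yields $\|p_j^{\mathrm{opt}}\|^c \le \beta\, \pi_j^{\min}$, where $\pi_j^{\min}$ denotes the recommendation probability for $j$ under $(\bar{p}_{j'})_{j'}$ when all $P$ producers are active. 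Combining,
\begin{equation*}
\|p^*_j\|^c \le (1-\epsilon)^c\, \beta\, \pi_j^{\min}.
\end{equation*}
Moreover, for each $t < t^\star$ the vector $p_j^t$ is feasible for the minimization in \eqref{eq: NEpunishuserutility}, so $\|p_j^t\| \ge \|p^*_j\|$.

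Putting the pieces together, the utility difference expands as
\begin{equation*}
u_j(\mathbf{p}^*_j \mid \mathbf{p}_{-j}) - u_j(\mathbf{p}_j \mid \mathbf{p}_{-j}) = \sum_{t=t^\star+1}^T \beta^t\, \pi_j^t + \sum_{t=1}^T \beta^t \|p_j^t\|^c - \sum_{t=1}^{T-1} \beta^t \|p^*_j\|^c.
\end{equation*}
I would use $\|p_j^t\|^c \ge \|p^*_j\|^c$ for $t<t^\star$ to cancel costs in those rounds and drop the nonnegative costs $\mathbf{p}_j$ incurs at $t \ge t^\star$, leaving the lower bound $\sum_{t=t^\star+1}^T \beta^t\, \pi_j^t - \sum_{t=t^\star}^{T-1} \beta^t \|p^*_j\|^c$. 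Plugging in the cost bound, reindexing $\beta \sum_{t=t^\star}^{T-1} \beta^t = \sum_{t=t^\star+1}^T \beta^t$, and using $\pi_j^t \ge \pi_j^{\min}$ (removing competitors can only enlarge the argmax advantage of $j$) collapses the expression to $(1 - (1-\epsilon)^c)\, \pi_j^{\min} \sum_{t=t^\star+1}^T \beta^t$. This is strictly positive: $\epsilon > 0$ gives $1 - (1-\epsilon)^c > 0$; the sum is positive because $t^\star \le T-1$ and $\beta > 0$; and $\pi_j^{\min} > 0$ because $\bar{p}_j \ne 0$ combined with the constraint of \eqref{eq: F} forces $\pi_j^{\min} > 0$ (otherwise the constraint would pin $\bar{p}_j$ to zero). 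The main delicate step is the bookkeeping at rounds $t < t^\star$: a naive comparison only at time $t^\star$ does not produce a strict gap, and it is precisely the cost savings from the minimum-norm choice $p^*_j$, together with the $(1-\epsilon)$ slack built into the constraint of \eqref{eq: F}, that propagate into strict domination.
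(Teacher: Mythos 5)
Your proof is correct and follows essentially the same route as the paper's: the same decomposition into the recommendation gain after the first punished round versus the cost comparison with the minimum-norm vector $p_j^*$ (feasibility of $p_j^t$ in \eqref{eq: NEpunishuserutility} for rounds before the violation, dropping later costs), followed by the monotonicity $\pi_j^t \ge \pi_j^{\min}$ and the constraint of \eqref{eq: F} with the $(1-\epsilon)$ slack. Your explicit check that $\pi_j^{\min} > 0$ (via $\bar{p}_j \neq 0$, which also rules out the vacuous case of the hypothesis) is a small point of extra care that the paper leaves implicit.
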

\begin{proof}
Let $s<T$ be the first round where there exists some $u\in \text{supp}(\mathcal{D})$ such that $\langle u, p_j^s \rangle <  \langle u, \Bar{p}_j \rangle$. By the specification of \texttt{PunishUserUtility}, $\bold{p_j}$ is punished at time $s$ and will not be recommended from round $s+1$ onwards. We use this fact to analyze the difference between producer $j$'s utility of choosing $\bold{p_j^*}$ and of choosing $\bold{p_j}$:
     \begin{align*}
        & u_j(\bold{p^*_j} \mid\bold{p_{-j}}) - u_j(\bold{p_j}\mid\bold{p_{-j}}) 
        \\
        &= \underbrace{\mathbb{E}\left[\sum_{t=1}^{s} \beta^t \left(\mathbbm{1}[A^t(\bold{p^*_j};\bold{p_{-j}})=j] - \mathbbm{1}[ A^t(\bold{p_j};\bold{p_{-j}})=j]\right)\right]}_{(A)}
        \\& +\underbrace{\mathbb{E}\left[\sum_{t=s+1}^T \beta^t \left(\mathbbm{1}[A^t(\bold{p^*_j};\bold{p_{-j}})=j] - \mathbbm{1}[ A^t(\bold{p_j};\bold{p_{-j}})=j]\right)\right]
        }_{(B)} 
        \\&+\underbrace{\sum_{t=1}^{s-1} \beta^t \left( - \|p_j^*\|^{c} + \|p_j^t\|^{c} \right)}_{(C)} \\
        \\&+ \underbrace{ \left(\sum_{t=s}^{T-1} \beta^t \left( - \|p_j^*\|^{c} + \|p_j^t\|^{c} \right)\right) + \beta^T \left( -0 + \|p_j^T\|^{c} \right)}_{(D)}
    \end{align*}

We separately analyze (A), (B), (C) and (D). 

\begin{itemize}
    \item \textit{Term (A)}: We show that term (A) is equal to $0$. We analyze each time step $t$ separately. Recall that $\PP[A^t=j] = \mathbb{E}[\mathbbm{1}[A^t(\bold{p^*_j};\bold{p_{-j}})=j]$ only depends on the individualized criteria $\bar{p}_{j'}$ of all active producers $j'$. Since producer $j$ stays active for the first $s$ rounds and the active status of other producers is independent of producer $j$'s actions, we have $\PP[A^t(\bold{p^*_j};\bold{p_{-j}})=j] = \PP[ A^t(\bold{p_j};\bold{p_{-j}})=j]$ as desired. 
    \item \textit{Term (B)}: We show that term (B) is equal to $\mathbb{E}\left[\sum_{t=s+1}^T \beta^t \cdot \mathbbm{1}[A^t(\bold{p^*_j};\bold{p_{-j}})=j] \right]$. This follows from the fact that the second term $\mathbbm{1}[ A^t(\bold{p_j};\bold{p_{-j}})=j]$ is equal to $0$ for all $s+1\le t\le T$, since producer $j$ is punished at time $s$ on the pure strategy $\mathbf{p_j}$. 
   \item \textit{Term (C)}: We show that term (C) is nonnegative. Since $\langle u, p_j^t \rangle \ge \langle u, \Bar{p}_j \rangle$ for all $t\le s - 1$ and all $u\in \text{supp}(\mathcal{D})$, we have $\|p_j^*\|\le \|p_j^t\|$ for all $t\le s - 1$ by the optimality of $p_j^*$ defined in \eqref{eq: NEpunishuserutility}. 
    \item \textit{Term (D):} We show that term (D) is at least $-\sum_{t=s}^{T-1} \beta^t \cdot \|p_j^*\|^{c}$. This follows from the fact that norms are always nonnegative, so we can lower bound $\|p_j^t\| \ge 0$. 
\end{itemize}

Using this analysis of (A), (B), (C), and (D), we see that:
\begin{align*}
 u_j(\bold{p^*_j} \mid\bold{p_{-j}}) - u_j(\bold{p_j}\mid\bold{p_{-j}}) 
            &\ge \mathbb{E}\left[\sum_{t=s+1}^T \beta^t\cdot \mathbbm{1}[A^t(\bold{p^*_j};\bold{p_{-j}})=j]\right] - \sum_{t=s}^{T-1} \beta^t \cdot \|p_j^*\|^{c} \\
        &= \mathbb{E}\left[\sum_{t=s+1}^T \beta^t \left(\mathbbm{1}[A^t(\bold{p_j^*};\bold{p_{-j}})=j] - \frac{1}{\beta} \cdot \|p_j^*\|^{c} \right) \right]\\
        &= \sum_{t=s+1}^T \beta^t \left(\mathbbm{P}[A^t(\bold{p_j^*};\bold{p_{-j}})=j] - \frac{1}{\beta} \cdot \|p_j^*\|^{c} \right) \\        
        &\ge_{(a)} \sum_{t=s+1}^T \beta^t \left(\mathbbm{P}\left[\frac{\mathbbm{1}\left [j=\arg\max_{j'\in [P]} \langle u, \Bar{p}_{j'}\rangle\right ]}{\left|\arg\max_{j'\in [P]} \langle u, \Bar{p}_{j'}\rangle\right|}\right] - \frac{1}{\beta} \cdot \|\bar{p}_j\|^{c} \right)
\end{align*}
where (a) follows from the specification of \texttt{PunishUserUtility} along with the fact that $\|p_j^*\| \le  \|\Bar{p}_j\|$ (based on the definition of $p_j^*$ in \eqref{eq: NEpunishuserutility} and the definition of $\Bar{p}_j$). Finally, the fact that $\Bar{p}_j$ satisfies the constraint in \eqref{eq: F} means that: \[ \|\Bar{p}_j\|\le (1-\epsilon) \beta^{\frac{1}{c}} \EE\left[\frac{\mathbbm{1}\left [j=\arg\max_{j'\in [P]} \langle u, \Bar{p}_{j'}\rangle\right ]}{\left|\arg\max_{j'\in [P]} \langle u, \Bar{p}_{j'}\rangle\right|}\right]^{\frac{1}{c}},\] 
which lead to $u_j(\bold{p^*_j} \mid\bold{p_{-j}}) - u_j(\bold{p_j}\mid\bold{p_{-j}}) > 0$ as desired.     
\end{proof}

We now prove Theorem \ref{thm:equilibriumcharacterizationpunishuserutility}. 
\begin{proof}[Proof of \Cref{thm:equilibriumcharacterizationpunishuserutility}]
We first construct a pure strategy equilibrium to show equilibrium existence, and then we analyze the welfare at any equilibrium. 

\paragraph{Construction of a pure strategy Nash equilibrium.} We show that $(\bold{p_j^*})_{j\in[P]}$ is a pure Nash equilibrium. It suffices to show that for any producer $j\in[P]$, $\bold{p_j^*}$ (weakly) dominates any other strategy $\bold{p_j}$ of producer $j$. By \Cref{lem:PNE_dominate_punish}, we have already shown that $\bold{p_j^*}$ (strictly) dominates $\bold{p_j}$ if $\langle u, p_j^t \rangle < \langle u, \Bar{p}_j \rangle$ for some $t<T$ and some $u\in \text{supp}(\mathcal{D})$. It suffices to prove that $\bold{p_j^*}$ (weakly) dominates $\bold{p_j}$ for the case that $\langle u, p_j^t \rangle \ge \langle u, \Bar{p}_j \rangle$ for any $t<T$ and any $u\in \text{supp}(\mathcal{D})$. We again compare the utility of choosing $\bold{p_j^*}$ and of choosing $\bold{p_j}$ no matter what $\bold{p_{-j}}$ the other producers choose, which is separated into three parts as follows:
\begin{align}
    u_j(\bold{p^*_j} \mid\bold{p_{-j}}) - u_j(\bold{p_j}\mid\bold{p_{-j}}) &= \underbrace{\mathbb{E}\left[\sum_{t=1}^T \beta^t \left(\mathbbm{1}[A^t(\bold{p^*_j};\bold{p_{-j}})=j] - \mathbbm{1}[ A^t(\bold{p_j};\bold{p_{-j}})=j]\right)\right]}_{(E)} \\&+ \underbrace{\sum_{t=1}^{T-1} \beta^t \left( - \|p_j^*\|^{c} + \|p_j^t\|^{c} \right)}_{(F)} + \underbrace{\beta^T \left( -0 + \|p_j^T\|^{c} \right)}_{(G)}.
\end{align}
We prove $(E), (F),\text{ and }(G)$ are all non-negative. 
\begin{itemize}
    \item \textit{Term (E)}: We show that term (E) is equal to $0$. This analysis follows similarly to term (A) in the proof of \Cref{lem:PNE_dominate_punish}. We again recall that $\PP[A^t=j] = \mathbb{E}[\mathbbm{1}[A^t(\bold{p^*_j};\bold{p_{-j}})=j]$ only depends on the individualized criteria $\bar{p}_{j'}$ of all active producers $j'$. Since producer $j$ stays active for all  $T$ rounds in either case and the active status of other producers is independent of producer $j$'s actions,  we have $\PP[A^t(\bold{p^*_j};\bold{p_{-j}})=j] = \PP[ A^t(\bold{p_j};\bold{p_{-j}})=j]$ as desired. 
    \item \textit{Term (F)}: We show that term (F) is nonnegative. This follows similarly to the analysis of term (C) in the proof of \Cref{lem:PNE_dominate_punish}. By the definition of $p^*_{j}$ and the property of $p_j^t$ under this case, we see that $\langle u, p_j^t \rangle  \ge \langle u, \Bar{p}_j \rangle$ for all $t\le T - 1$ and $u\in \text{supp}(\mathcal{D})$. This means that $\|p_j^*\|\le \|p_j^t\|$ for all $t\le T - 1$ by the optimality of $p_j^*$ defined in \eqref{eq: NEpunishuserutility}. 
    \item \textit{Term (G)}: Term (G) is nonnegative because any norm is non-negative. 
\end{itemize}
Putting this together, we have $u_j(\bold{p^*_j} \mid\bold{p_{-j}}) - u_j(\bold{p_j}\mid\bold{p_{-j}})\ge 0$ as desired. 

\paragraph{Welfare analysis.} Let $(\mu_1, \ldots, \mu_P)$ be a mixed-strategy Nash equilibrium.  By \Cref{lem:PNE_dominate_punish}, we have $\langle u, p_j^t \rangle \ge \langle u, \Bar{p}_j \rangle$ for any $t<T$, any $u\in \text{supp}(\mathcal{D})$, and any strategy $p_j^t\in \text{supp}({\mu_j})$ for any producer $j\in[P]$. Taking expectation, for any $t< T$, we have 
\begin{align*}
\EE[\langle u^t, p_{A^t}^t \rangle]&\ge \EE[ \langle u^t, \Bar{p}_{A^t} \rangle]\\
&=_{(a)} \EE\left[\max_{j\in[P]} \langle u^t, \Bar{p}_{j} \rangle\right] \\
&=_{(b)} (1-\epsilon) \cdot W(c, \beta, P, \mathcal{D}),   
\end{align*}
where (a) follows from the specification of $\PP[A^t=j]$ in \texttt{PunishUserUtility} and (b) follows from the fact that $\bold{\Bar{p}}$ is set to be $(1-\epsilon)$ times the optimum of \eqref{eq: F}. 
\end{proof}

\subsection{Proof of Proposition \ref{prop:bound1producer}}\label{appebndix:proofspropbound1producer}
\propboundproducer*
\begin{proof}[Proof of \Cref{prop:bound1producer}]
It suffices to construct individualized criteria $p_1, \ldots, p_P$ which are a feasible solution to \eqref{eq: F} and achieve an objective of: 
\[\mathbb{E}\left[\max_{j \in [P]} \langle u, p_j \rangle\right]  = \beta^{1/c} \|\mathbb{E}[u]\|.\] 
We take the individualized criteria to be $p_1={\beta}^{\frac{1}{c}}\frac{\EE[u]}{\|\EE[u]\|}$ and $p_2=\cdots =p_P=0$.

First, we show that the constraints of \eqref{eq: F} are satisfied. By definition, we have:
\[
\EE\Bigg[\frac{\mathbbm{1}\big [j=\arg\max_{j'\in [P]} \langle u, p_{j'}\rangle\big ]}{\big|\arg\max_{j'\in [P]} \langle u, p_{j'}\rangle\big|}\Bigg]^{\frac{1}{c}} = 
\begin{cases}
 1 & \text{ if } j = 1 \\
 0 & \text { if } j > 1.
\end{cases}
\]
Moreover we have that  $\|p_1\|= {\beta}^{\frac{1}{c}}$ and $\|p_j\|= 0$ for any $j>1$. Thus, the vectors satisfy the constraint:
\[\|p_j\|\le\beta^{\frac{1}{c}} \cdot \EE\Bigg[\frac{\mathbbm{1}\big [j=\arg\max_{j'\in [P]} \langle u, p_{j'}\rangle\big ]}{\big|\arg\max_{j'\in [P]} \langle u, p_{j'}\rangle\big|}\Bigg]^{\frac{1}{c}}\]
for all $j\in[P]$. 

Next, we show that the objective of \eqref{eq: F} is equal to $\beta^{1/c} \|\mathbb{E}[u]\|$. This follows by definition: 
\[\mathbb{E} \Big[\max_{j\in[P]} \langle u, p_j\rangle \Big] = \mathbb{E} [\langle u, p_1\rangle]= \langle p_1, \mathbb{E}[u]\rangle = {\beta}^{\frac{1}{c}}\|\EE[u]\|.\] 

Putting this all together, we see that 
\[W(c, \beta, \mathcal{D}, P)\ge  \mathbb{E} \Big[\max_{j\in[P]} \langle u, p_j\rangle \Big]= {\beta}^{\frac{1}{c}}\|\EE[u]\|.\]
as desired. 
\end{proof}

\subsection{Proof of Proposition \ref{prop:limitanalysis}}\label{appebndix:proofslimitanalysis}
\proplimitanalysis*

To prove Proposition \ref{prop:limitanalysis}, we prove the following sublemmas. 

The first sublemma proves a lower bound on $\lim_{c\rightarrow\infty} W(c, \beta, \mathcal{D}, P)$. 
\begin{lemma} \label{lem: limFgeG}
Assume the notation of Proposition \ref{prop:limitanalysis}.
If $\mathcal{D}$ has finite support, then
    \[\lim_{c\rightarrow\infty} W(c, \beta, \mathcal{D}, P) \ge G(\beta, \mathcal{D}, P).\]
\end{lemma}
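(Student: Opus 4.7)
The plan is to convert any optimal solution of the $G$-program into a feasible solution of the $W$-program whose objective converges to $G(\beta, \mathcal{D}, P)$ as $c \to \infty$. First I would fix an optimizer $(p_1^*, \ldots, p_P^*)$ of \eqref{eq: G} (which exists since the feasible region is compact and the objective continuous) and define the winning probabilities
\[q_j^* := \EE\Bigg[\frac{\mathbbm{1}\left[j \in \arg\max_{j'\in[P]} \langle u, p_{j'}^*\rangle\right]}{\left|\arg\max_{j'\in[P]} \langle u, p_{j'}^*\rangle\right|}\Bigg].\]
Let $J := \{j \in [P] : q_j^* > 0\}$. Since $\sum_j q_j^* = 1$, the set $J$ is nonempty, and finiteness of both $J$ and $\mathrm{supp}(\mathcal{D})$ gives $q^*_{\min} := \min_{j \in J} q_j^* > 0$.

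The candidate solution for $W$ is built by a uniform rescaling supported on $J$. I would set $\alpha := \beta^{1/c} (q^*_{\min})^{1/c}$ and define $p_j := \alpha p_j^*$ for $j \in J$ and $p_j := 0$ for $j \notin J$. The key observation is that $q_j^* = 0$ forces $j \notin \arg\max_{j'}\langle u, p_{j'}^*\rangle$ for $\mathcal{D}$-a.e.\ $u$; hence $\arg\max_{j'}\langle u, p_{j'}^*\rangle \subseteq J$ almost surely, and by the same reasoning the event $\{u : \max_{j' \in J}\langle u, p_{j'}^*\rangle = 0\}$ has $\mathcal{D}$-measure zero (unless $J = [P]$, in which case this complication does not arise). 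Using this, I would verify that uniform scaling on $J$ preserves the argmax almost surely, so that the recomputed winning probabilities of the scaled solution coincide with $q_j^*$.

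Feasibility then follows directly: for $j \notin J$, $\|p_j\| = 0$ trivially; for $j \in J$, $\|p_j\| = \alpha \le \beta^{1/c} (q_j^*)^{1/c}$, matching the constraint of \eqref{eq: F}. Since the argmax is a.s.\ contained in $J$ and $\alpha > 0$, the objective evaluates to
\[\EE\Big[\max_{j\in[P]}\langle u, p_j\rangle\Big] = \alpha \cdot \EE\Big[\max_{j\in[P]}\langle u, p_j^*\rangle\Big] = \alpha \cdot G(\beta, \mathcal{D}, P),\]
and $\alpha = \beta^{1/c} (q^*_{\min})^{1/c} \to 1$ as $c \to \infty$, yielding the desired inequality.

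The main subtlety is handling producers with $q_j^* = 0$: a naive per-producer scaling $p_j = \beta^{1/c}(q_j^*)^{1/c} p_j^*$ would zero out those producers \emph{and} also shift the relative ranking among the remainder, muddling the feasibility check. The uniform-scaling-plus-zeroing trick on $J$ sidesteps this, and the finite-support assumption on $\mathcal{D}$ is precisely what guarantees $q^*_{\min} > 0$ so that $\alpha \to 1$ rather than being pinned at $0$.
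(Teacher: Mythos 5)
Your proof is correct, but it takes a genuinely different route from the paper's. The paper first lower-bounds $\lim_{c\to\infty} W(c,\beta,\mathcal{D},P)$ by a restricted program in which every producer is required to win at least a \emph{universal} fraction $\alpha=\frac{1}{P^2}\min_{x\in \text{supp}(\mathcal{D})}\PP_{u\sim\mathcal{D}}[u=x]$ of users (with $\|p_j\|=\beta^{1/c}\alpha^{1/c}$), and then shows this restricted unit-norm program has value exactly $G(\beta,\mathcal{D},P)$ by taking an optimizer of \eqref{eq: G} and replacing every producer whose winning fraction falls below $\alpha P$ (hence equals $0$, since $\alpha P$ is the smallest possible nonzero fraction under finite support) by a copy of a producer winning at least $\alpha P$. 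You instead construct, for each $c$, an explicit feasible point of \eqref{eq: F} directly from an optimizer of \eqref{eq: G}: zero out the producers with $q_j^*=0$ and uniformly scale the rest by $\beta^{1/c}(q^*_{\min})^{1/c}$, obtaining $W(c,\beta,\mathcal{D},P)\ge \beta^{1/c}(q^*_{\min})^{1/c}\,G(\beta,\mathcal{D},P)\to G(\beta,\mathcal{D},P)$. Your treatment of the one real subtlety---ties at value $0$ after zeroing---is sound: if $\max_{j\in J}\langle u,p_j^*\rangle=0$ on a positive-measure set while $J\neq[P]$, all producers would tie there and every $q_j^*$ would be positive, a contradiction, so the scaled solution's winning probabilities coincide with $q_j^*$ and feasibility and the objective value follow. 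What each approach buys: yours is shorter and uses the finite-support hypothesis only marginally (your $q^*_{\min}>0$ already follows from $J\subseteq[P]$ being finite, so the argument essentially extends to general $\mathcal{D}$ once a maximizer of \eqref{eq: G} is secured by compactness and Lipschitz continuity of the objective), whereas the paper's universal threshold $\alpha$ relies on finite support in an essential way; the paper's route, on the other hand, avoids your case analysis by never creating zero vectors, keeping all producers at full norm. One cosmetic point, shared with the paper: what you literally establish is $\liminf_{c\to\infty}W(c,\beta,\mathcal{D},P)\ge G(\beta,\mathcal{D},P)$; existence of the limit comes only when combined with the matching upper bound $W\le G$ (Lemma \ref{lemma:upperbound}).
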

\begin{proof} 
For ease of notation, for  $\bold{p}= \left(p_j\right)_{j=1}^P $ and $j \in [P]$, we let: 
\[f_j(\bold{p}) := \EE\left[\frac{\mathbbm{1}\left [j=\arg\max_{j'\in [P]} \langle u, p_{j'}\rangle\right ]}{\left|\arg\max_{j'\in [P]} \langle u, p_{j'}\rangle\right|}\right] \]
be the fraction of users that producer $j$ would win if, for each user, the algorithm randomly chooses from the set of producers maximizing the user's utility. 
The high-level idea of the proof is that we restrict the feasible set of $W(c, \beta, \mathcal{D}, P)$ to only contain vectors where $f_j(\bold{p})$ is at least some threshold $\alpha$ for all $j \in [P]$, which makes the constraints simpler in the limit as $c \rightarrow \infty$, and show this is sufficient to prove an upper bound of $G(\beta, \mathcal{D}, P)$.   

\paragraph{Lower bound $\lim_{c\rightarrow\infty} W(c, \beta, \mathcal{D}, P)$ by a restricted feasible set.}
First, let $\alpha \in (0,1)$  be any constant. In the following calculation, we lower bound $\lim_{c\rightarrow\infty} W(c, \beta, \mathcal{D}, P)$ by restricting the feasible set to $f_j(\bold{p}) \ge \alpha$ for $j \in [P]$: 
    \begin{align}
\lim_{c\rightarrow\infty} W(c, \beta, \mathcal{D}, P) &=    \lim_{c\rightarrow\infty} \max_{ \substack{\|p_j\|\le \beta^{\frac{1}{c}} f_j(\bold{p})^{\frac{1}{c}},\\ \forall j\in[P] }}\EE\left [\max_{j\in[P]} \langle u, p_j\rangle \right] \\
        &\ge_{(A)} \lim_{c\rightarrow\infty}\max_{\substack{\|p_j\| = \beta^{\frac{1}{c}} \alpha^{\frac{1}{c}}\\ f_j(\bold{p})\ge \alpha,\\ \forall j\in[P] }}\EE\left [\max_{j\in[P]} \langle u, p_j\rangle \right]\\
        & =_{(B)} \lim_{c\rightarrow\infty} \left(\beta^{\frac{1}{c}} \alpha^{\frac{1}{c}} \max_{\substack{\|p_j\| = 1\\ f_j(\bold{p})\ge \alpha,\\ \forall j\in[P] }}\EE\left [\max_{j\in[P]} \langle u, p_j\rangle \right]\right)\\
        &= \lim_{c\rightarrow\infty} \left(\beta^{\frac{1}{c}} \alpha^{\frac{1}{c}} \right) \cdot  \max_{\substack{\|p_j\| = 1\\ f_j(\bold{p})\ge \alpha,\\ \forall j\in[P] }}\EE\left [\max_{j\in[P]} \langle u, p_j\rangle \right]\\
        & =\max_{\substack{\|p_j\| = 1\\ f_j(\bold{p})\ge \alpha,\\ \forall j\in[P] }}\EE\left [\max_{j\in[P]} \langle u, p_j\rangle \right],
    \end{align}
    where (A) uses the fact that if $\|p_j\| =  \beta^{\frac{1}{c}} \alpha^{1/c}$ and $f_j(\bold{p})\ge \alpha$ then it holds that $\|p_j\|\le \beta^{\frac{1}{c}} f_j(\bold{p})^{\frac{1}{c}}$, and (B) uses the fact that $f_j$ is invariant under multiplicative scaling of all vectors $p_j$.  

\paragraph{Construction of the restricted feasible set.} For the remainder of the analysis, we set the threshold to be 
\[\alpha=\frac{1}{P^2}\min_{x\in \text{supp}(\mathcal{D})} \PP_{u \sim \mathcal{D}}[u=x],\]
which is positive because of the finite support assumption. The value $\alpha$ captures the minimum nonzero fraction of users that a producer can win (which is $\frac{1}{P}\min_{x\in \text{supp}(\mathcal{D})} \PP_{u \sim \mathcal{D}}[u=x]$) divided by $P$.  

\paragraph{Analysis of $G(\beta, \mathcal{D}, P)$.}
We claim that we can restrict the feasible set of $G(\beta, \mathcal{D}, P)$ without changing the optimum: that is, we will show that 
\[G(\beta, \mathcal{D}, P) = \max_{\substack{\|p_j\| = 1\\ f_j(\bold{p})\ge \alpha,\\ \forall j\in[P] }}\EE\left [\max_{j\in[P]} \langle u, p_j\rangle \right].\]

To see this, let $\bold{p}^*$ be any optimal solution of $G(\beta, \mathcal{D}, P) = \max_{\substack{\|p_j\| = 1 \\ \forall j\in[P] }}\EE\left [\max_{j\in[P]} \langle u, p_j\rangle \right]$. It suffices to construct $\bold{\tilde{p}}$, such that $\|\tilde{p}_j\| = 1$ and $f_j(\bold{\tilde{p}})\ge \alpha$ for all $j\in[P]$, which satisfies
\[\EE\left [\max_{j\in[P]} \langle u, \tilde{p}_j\rangle \right] \ge \EE\left [\max_{j\in[P]} \langle u, p^*_j\rangle \right].\] 

We construct $\mathbf{\tilde{p}}$ as follows. Let $S \subseteq [P]$ be the set $\left\{j \in [P] \mid f_{{j}}(\bold{p}^*)< \alpha \cdot P \right\}$. 
Observe that
\[\sum_{j=1}^P f_j(\bold{p}^*) =\sum_{j=1}^P \EE\left[\frac{\mathbbm{1}\left [j=\arg\max_{j'\in [P]} \langle u, p^*_{j'}\rangle\right ]}{\left|\arg\max_{j'\in [P]} \langle u, p^*_{j'}\rangle\right|}\right] = 1 \ge \alpha \cdot P^2.\] This means that there exists $\bar{j} \in [P]$ such that $f_{\bar{j}}(\bold{p}^*) \ge \alpha \cdot P$. Then $\bar{j} \notin S$. We define $\bold{\tilde{p}}$ so that $\tilde{p}_j = p^*_j$ for $j \not\in S$ and $\tilde{p}_{{j}} = p^*_{\bar{j}}$ for $j \in S$. 

We first show that $\mathbf{\tilde{p}}$  satisfies $\|\tilde{p}_j\| = 1$ and $f_j(\bold{\tilde{p}})\ge \alpha$ for all $j\in[P]$. The fact that $\|\tilde{p}_j\| = 1$ follows by construction. We next show that $f_j(\bold{\tilde{p}})\ge \alpha$.
For $j \not\in S$, we see that: 
    \begin{align*}
     f_j(\mathbf{\tilde{p}}) &\ge \EE\left[\frac{\mathbbm{1}\left [j=\arg\max_{j'\in [P]} \langle u, \tilde{p}_{j'} \rangle\right ]}{\left|\arg\max_{j'\in [P]} \langle u, \tilde{p}_{j'}\rangle\right|}\right] \\
     &\ge_{(A)} \EE\left[\frac{\mathbbm{1}\left [j=\arg\max_{j'\in [P]} \langle u, p^*_{j'} \rangle\right ]}{\left|\arg\max_{j'\in [P]} \langle u, \tilde{p}_{j'}\rangle\right|}\right]\\
     &\ge \EE\left[\frac{\mathbbm{1}\left [j=\arg\max_{j'\in [P]} \langle u, p^*_{j'} \rangle\right ]}{P \cdot \left|\arg\max_{j'\in [P]} \langle u, p^*_{j'}\rangle\right|}\right]\\
     &\ge \frac{1}{P} f_{j}(\mathbf{p^*}) \\ 
    \\&\ge \alpha    
    \end{align*}
    where (A) follows from the fact that the set of vectors in $\mathbf{\tilde{p}}$ is contained in the set of vectors in $\mathbf{p^*}$ and that $\tilde{p}_j = p^*_j$ for $j \not\in S$. 
For $j \in S$, since $\Bar{j}\notin S$, we see that $\Tilde{p}_{\Bar{j}} = p_{\Bar{j}}^* = \Tilde{p}_j$ and thus $ f_j(\mathbf{\tilde{p}}) =  f_{\bar{j}}(\mathbf{\tilde{p}})$. Since we have proved that $ f_{\bar{j'}}(\mathbf{\tilde{p}}) \ge \alpha$ for any $j'\notin S$, we have $f_j(\mathbf{\tilde{p}}) = f_{\bar{j}}(\mathbf{\tilde{p}}) \ge \alpha$ for $j\in S$ as desired.

We next show that $\EE\left [\max_{j\in[P]} \langle u, \tilde{p}_j\rangle \right]\ge \EE\left [\max_{j\in[P]} \langle u, p_j^*\rangle \right]$.  Since $\alpha \cdot P$ captures the minimum nonzero fraction of users that a producer can win, and since $f_j(\cdot)$ captures the fraction of users won by producer $j$, this means that $f_{j}(\bold{p})$ is actually equal to $0$ for $j \in S$. In other words, for $j \in S$, there is no $u\in\text{supp}(\mathcal{D})$ such that $j\in \arg\max_{j' \in [P]} \langle u, p_{j'}^* \rangle$. This implies that for all $u \in \text{supp}(\mathcal{D})$, it holds that:
\[\max_{j\in [P]}\langle u, \tilde{p}_j\rangle \ge \max_{j\not\in S}\langle u, \tilde{p}_j\rangle = \max_{j\not\in S}\langle u, p_j^*\rangle = \max_{j\in [P]}\langle u, p_j^*\rangle.\] 

Putting this all together, we see that there exists $\bold{\tilde{p}}$, such that $\|\tilde{p}_j\| = 1$ and $f_j(\bold{\tilde{p}})\ge \alpha$ for all $j\in[P]$, which satisfies
\[\EE\left [\max_{j\in[P]} \langle u, \tilde{p}_j\rangle \right] \ge \EE\left [\max_{j\in[P]} \langle u, p^*_j\rangle \right].\] This means that $G(\beta, \mathcal{D}, P) = \max_{\substack{\|p_j\| = 1\\ f_j(\bold{p})\ge \alpha,\\ \forall j\in[P] }}\EE\left [\max_{j\in[P]} \langle u, p_j\rangle \right]$ as desired. 

\paragraph{Concluding the result.} Putting this all together, we see that: 
\[
\lim_{c\rightarrow\infty} W(c, \beta, \mathcal{D}, P)  \ge \max_{\substack{\|p_j\| = 1\\ f_j(\bold{p})\ge \alpha,\\ \forall j\in[P] }}\EE\left [\max_{j\in[P]} \langle u, p_j\rangle \right] =G(\beta, \mathcal{D}, P).
\]
    
\end{proof}

The next sublemma proves an upper bound on $\lim_{c\rightarrow\infty} W(c, \beta, \mathcal{D}, P)$. 
\begin{lemma}
\label{lemma:upperbound}
Assume the notation of Proposition \ref{prop:limitanalysis}. Then it holds that: 
$\lim_{c\rightarrow\infty} W(c, \beta, \mathcal{D}, P) \le G(\beta, \mathcal{D}, P)$.
\end{lemma}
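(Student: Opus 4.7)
\textbf{Proof plan for Lemma \ref{lemma:upperbound}.} The plan is to prove the stronger statement that $W(c, \beta, \mathcal{D}, P) \le G(\beta, \mathcal{D}, P)$ holds for \emph{every} $c \ge 1$, not merely in the limit. The limit inequality then follows immediately by taking $c \to \infty$. The key observation is that the feasibility constraint in \eqref{eq: F} implies uniformly $\|p_j\| \le 1$, so any feasible point of $W$ can be rescaled to a feasible point of $G$ without decreasing the objective.

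First I would verify the norm bound. For any $\bold{p} = (p_1, \ldots, p_P)$ feasible for \eqref{eq: F}, the quantity $f_j(\bold{p}) := \EE\!\left[\mathbbm{1}[j \in \arg\max_{j'}\langle u, p_{j'}\rangle]/|\arg\max_{j'}\langle u, p_{j'}\rangle|\right]$ satisfies $0 \le f_j(\bold{p}) \le 1$ (indeed $\sum_j f_j(\bold{p}) = 1$). Combined with $\beta \in (0,1)$, the constraint $\|p_j\| \le \beta^{1/c} f_j(\bold{p})^{1/c}$ gives $\|p_j\| \le 1$ for every $j$.

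Next I would construct a feasible lift to $G$. Fix any unit vector $v \in \mathbb{R}^D_{\ge 0}$ and define $p_j' := p_j/\|p_j\|$ when $p_j \neq 0$ and $p_j' := v$ otherwise. By construction $\|p_j'\| = 1$ for every $j$, so $\bold{p}' := (p_1', \ldots, p_P')$ is feasible for \eqref{eq: G}. To compare objectives, note that $u \in \mathbb{R}^D_{\ge 0}$ and $p_j \in \mathbb{R}^D_{\ge 0}$ give $\langle u, p_j\rangle \ge 0$; when $p_j \neq 0$, the bound $\|p_j\| \le 1$ yields
\begin{equation*}
\langle u, p_j'\rangle \;=\; \frac{\langle u, p_j\rangle}{\|p_j\|} \;\ge\; \langle u, p_j\rangle,
\end{equation*}
and when $p_j = 0$ we still have $\langle u, p_j'\rangle \ge 0 = \langle u, p_j\rangle$. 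Taking the maximum over $j$ and then expectation over $u \sim \mathcal{D}$ preserves the inequality, so $\EE[\max_j \langle u, p_j'\rangle] \ge \EE[\max_j \langle u, p_j\rangle]$.

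Finally, taking the supremum of the left-hand side over $\bold{p}' \in \{(q_1, \ldots, q_P) : \|q_j\| = 1\}$ gives $G(\beta, \mathcal{D}, P) \ge \EE[\max_j \langle u, p_j\rangle]$ for every $\bold{p}$ feasible for \eqref{eq: F}. Taking the supremum over such $\bold{p}$ yields $G(\beta, \mathcal{D}, P) \ge W(c, \beta, \mathcal{D}, P)$ for all $c \ge 1$, and hence $\lim_{c\to\infty} W(c, \beta, \mathcal{D}, P) \le G(\beta, \mathcal{D}, P)$. The only mild subtlety—and the one step worth double-checking—is the $p_j = 0$ case in the lift, which is handled by picking an arbitrary unit vector in $\mathbb{R}^D_{\ge 0}$ and using nonnegativity of inner products; there is no need for any compactness or subsequence argument.
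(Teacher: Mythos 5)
Your proof is correct and follows essentially the same route as the paper: both establish the stronger non-asymptotic bound $W(c,\beta,\mathcal{D},P)\le G(\beta,\mathcal{D},P)$ for every $c\ge 1$ by observing that the constraint in \eqref{eq: F} forces $\|p_j\|\le 1$ and that rescaling to unit norm (handling zero vectors with an arbitrary nonnegative unit vector) cannot decrease the objective, since all inner products are nonnegative. The paper merely packages this through an intermediate relaxed program with constraint $\|p_j\|\le 1$, whereas you lift feasible points directly; the underlying argument is the same.
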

\begin{proof}
It suffices to prove $W(c, \beta, \mathcal{D}, P) \le G(\beta, \mathcal{D}, P)$ for any finite $c\ge 1$. We construct an intermediate optimization problem, where we replace the equality constraint of \eqref{eq: G} with an inequality. 
\begin{equation}\label{eq: I}
\begin{aligned}
I(\beta, \mathcal{D}, P) := & \max_{p_1, \ldots, p_P} \mathbb{E} \Big[\max_{j\in[P]} \langle u, {p}_j\rangle \Big]\\
&\textrm{s.t. } \|{p}_j\| \le 1,\forall j\in[P].
\end{aligned} 
\end{equation}
Using this, we prove $W(c, \beta, \mathcal{D}, P) \le G(\beta, \mathcal{D}, P)$ by two steps in the following two paragraphs.
\paragraph{Step 1: $W(c, \beta, \mathcal{D}, P)\le I(\beta, \mathcal{D}, P)$.} Since $\beta\in(0, 1)$ and $c\ge 1$, we have 
\[\beta^{\frac{1}{c}}\EE\Bigg[\frac{\mathbbm{1}\big [j=\arg\max_{j'\in [P]} \langle u, {p}_{j'}\rangle\big ]}{\big|\arg\max_{j'\in [P]} \langle u, {p}_{j'}\rangle\big|}\Bigg]^{\frac{1}{c}}\le 1,\] which means a feasible solution of \eqref{eq: F} is also a feasible solution of \eqref{eq: I}. This, coupled with the fact that the objectives of \eqref{eq: F} and \eqref{eq: I} are the same, implies that $W(c, \beta, \mathcal{D}, P)\le I(\beta, \mathcal{D}, P)$ as desired.

\paragraph{Step 2: $I(\beta, \mathcal{D}, P)\le G(\beta, \mathcal{D}, P)$.}
Since the objectives of \eqref{eq: G} and \eqref{eq: I} are the same, it suffices to show that there exists an optimal solution $(p^*_1, \ldots, p^*_P)$ to $I(\beta, \mathcal{D}, P)$ that is in the feasible set of \eqref{eq: G} (i.e., such that 
$\|p^*_j\| = 1$ for all $j \in [P]$). To construct this solution, first let $\bold{p}=\{p_1, \cdots, p_P\}$ be \textit{any} optimal solution of $I(\beta, \mathcal{D}, P)$. We let $S\subseteq [P]$ be the set $\{j\in[P]\mid f_j(\bold{p})>0\}$. We let $\bold{p^*} = \{ p^*_1, \cdots, p^*_P\}$ be the set of vectors defined by $p_j^* = \frac{p_j}{ \|p_{j}\|}$ for $j\in S$; and $p_j^* = \frac{\EE[u]}{ \|\EE[u]\|} $ for $j\notin S$. The vectors $\bold{p^*}$ satisfy the constraints of \eqref{eq: G} and \eqref{eq: I}, and the objective satisfies: 
\[
\mathbb{E} \Big[\max_{j\in[P]} \langle u, p^*_j\rangle \Big] \ge \mathbb{E} \Big[\max_{j\in S} \langle u, p^*_j\rangle \Big] = \mathbb{E} \Big[\max_{j\in S} \langle u, \frac{p_j}{ \|p_{j}\|} \rangle \Big] \ge_{(A)}
     \mathbb{E} \Big[\max_{j\in S} \langle u, {p}_j\rangle \Big] \ge_{(B)} \mathbb{E} \Big[\max_{j\in[P]} \langle u, {p}_j\rangle \Big]. 
\]
where (A) uses the fact that $\|p_j\| \le 1$ for all $j \in [P]$ and (B) uses the fact that $\arg\max_{j\in [P]} \langle u, p_j\rangle \subseteq S$ for each $u\in \text{supp}(\mathcal{D})$. 
This implies that  $\mathbf{p}^*$ is an optimal solution to \eqref{eq: I} and is in the feasible set of \eqref{eq: G}, so 
$G(\beta, \mathcal{D}, P) \ge I(\beta, \mathcal{D}, P)$ as desired.

\paragraph{Concluding the desired bound.} Combining Step 1 and 2, we conclude $W(c, \beta, \mathcal{D}, P) \le G(\beta, \mathcal{D}, P)$ for any $c\ge 1$. Taking a limit of $c$, we conclude the desired statement. 
\end{proof}

The last sublemma computes $G(\beta, D, P)$ for sufficiently large $P$. 
\begin{lemma}
\label{lemma:finitesupport}
Assume the notation of Proposition \ref{prop:limitanalysis}.
If  $\mathcal{D}$ has finite support and $P\ge |\text{supp}(\mathcal{D})|$, then
    \[G(\beta, \mathcal{D}, P) =1.\]
\end{lemma}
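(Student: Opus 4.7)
The plan is to prove the two inequalities $G(\beta, \mathcal{D}, P) \le 1$ and $G(\beta, \mathcal{D}, P) \ge 1$ separately.

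For the upper bound, I would use the Cauchy--Schwarz inequality together with the modeling assumption (stated at the beginning of Section \ref{sec:model}) that every $u \in \operatorname{supp}(\mathcal{D})$ is a unit vector. For any feasible $(p_1,\ldots,p_P)$ in \eqref{eq: G} and any $u \in \operatorname{supp}(\mathcal{D})$, Cauchy--Schwarz gives $\langle u, p_j \rangle \le \|u\|\|p_j\| = 1$ for all $j$, so $\max_j \langle u, p_j \rangle \le 1$, and hence the expectation is at most $1$.

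For the lower bound, I would exhibit a feasible solution achieving objective $1$. Enumerate the support as $\operatorname{supp}(\mathcal{D}) = \{u^{(1)}, \ldots, u^{(k)}\}$ with $k = |\operatorname{supp}(\mathcal{D})| \le P$. Define $p_j := u^{(j)}$ for $1 \le j \le k$ and let $p_j$ be an arbitrary unit vector in $\mathbb{R}^D_{\ge 0}$ for $k < j \le P$. Each $p_j$ is a unit vector, so this is feasible for \eqref{eq: G}. For each user $u^{(i)}$ in the support, we have $\langle u^{(i)}, p_i \rangle = \langle u^{(i)}, u^{(i)} \rangle = 1$, so $\max_{j \in [P]} \langle u^{(i)}, p_j \rangle \ge 1$, and combined with the upper bound this maximum equals $1$. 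Therefore $\mathbb{E}_{u \sim \mathcal{D}}[\max_{j \in [P]} \langle u, p_j \rangle] = 1$.

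There is no real obstacle here: the argument is essentially that with at least as many producers as distinct users, each user can be matched to a producer whose direction exactly aligns with the user. Combining both bounds yields $G(\beta, \mathcal{D}, P) = 1$.
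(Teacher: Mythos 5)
Your proposal is correct and follows essentially the same argument as the paper: the upper bound via Cauchy--Schwarz with unit-norm users, and the lower bound by assigning each support vector $u^{(i)}$ to a dedicated producer $p_i = u^{(i)}$ (your handling of the leftover producers when $P > |\text{supp}(\mathcal{D})|$ is in fact slightly more careful than the paper's statement ``$p_j = u_j$ for $j \in [P]$'').
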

\begin{proof}
Let the support of $\mathcal{D}$ be $\left\{u_1, \ldots, u_N\right\}$ (which satisfy $\|u_1\| = \ldots = \|u_N\| = 1$ by assumption). 

First, we prove the lower bound on $G(\beta, \mathcal{D}, P) \ge 1$.  The vectors $p_j = u_j$ for $j \in [P]$ satisfy the constraints of \eqref{eq: G} and are thus a feasible solution. This implies that $G(\beta, \mathcal{D}, P) \ge 1$. 

Next, we prove the upper bound $G(\beta, \mathcal{D}, P) \le 1$. This comes from the fact that the objective is always at most $\langle u, p_j \rangle \le \|u\| \cdot \|p_j\| = 1$ by Cauchy-Schwarz.

Combining these two bounds, we conclude the desired statement.
\end{proof}

\begin{lemma}\label{lemma:Pge1}
    Assume the notation of Proposition \ref{prop:limitanalysis}.
If  $\mathcal{D}$ has finite support and $P\ge |\text{supp}(\mathcal{D})|$, then  \[G(\beta, \mathcal{D}, P) \ge \|\EE_{\mathcal{D}}[u]\|.\]
\end{lemma}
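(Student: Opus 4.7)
The plan is to exhibit an explicit feasible point of the optimization \eqref{eq: G} whose objective value equals $\|\EE_{\mathcal{D}}[u]\|$. The natural candidate, in the spirit of \Cref{prop:bound1producer}, is to have every producer's content vector point in the mean user direction.

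First I would handle the trivial case $\EE_{\mathcal{D}}[u] = \vec{0}$, where the claim $G(\beta,\mathcal{D},P) \ge 0$ holds automatically since the maximum in the objective of \eqref{eq: G} is nonnegative (one can take any $p_j$ in the nonnegative orthant and use $u \in \mathbb{R}^D_{\ge 0}$). Assume henceforth $\EE_{\mathcal{D}}[u] \neq \vec{0}$, and set
\[
p_j \;=\; \frac{\EE_{\mathcal{D}}[u]}{\|\EE_{\mathcal{D}}[u]\|}, \qquad \forall\, j \in [P].
\]
Since $\EE_{\mathcal{D}}[u]$ is a nonnegative combination of vectors in $\mathbb{R}^D_{\ge 0}$, each $p_j$ lies in $\mathbb{R}^D_{\ge 0}$, and by construction $\|p_j\| = 1$, so $(p_1, \ldots, p_P)$ is feasible for \eqref{eq: G}.

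The second step is to evaluate the objective at this feasible point. Because all $p_j$ are identical, $\max_{j \in [P]} \langle u, p_j\rangle = \langle u, p_1\rangle$ for every $u$, so by linearity of expectation and of the inner product,
\[
\EE_{u \sim \mathcal{D}}\Big[\max_{j\in[P]} \langle u, p_j\rangle\Big]
\;=\; \Big\langle \EE_{\mathcal{D}}[u],\, \tfrac{\EE_{\mathcal{D}}[u]}{\|\EE_{\mathcal{D}}[u]\|}\Big\rangle
\;=\; \|\EE_{\mathcal{D}}[u]\|.
\]
Since $G(\beta,\mathcal{D},P)$ is the maximum over feasible tuples, we conclude $G(\beta,\mathcal{D},P) \ge \|\EE_{\mathcal{D}}[u]\|$.

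I do not anticipate any real obstacle: the argument is a one-line construction plus a direct computation, and does not depend on the finite-support hypothesis or on $P \ge |\text{supp}(\mathcal{D})|$ (this lemma is genuinely the ``any $P \ge 1$'' clause of \Cref{prop:limitanalysis}, complementing \Cref{lemma:finitesupport}). The only mild care needed is the $\EE_{\mathcal{D}}[u] = \vec{0}$ degenerate case, handled above.
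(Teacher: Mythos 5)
Your proposal is correct and uses exactly the construction in the paper's proof: take $p_j = \EE_{\mathcal{D}}[u]/\|\EE_{\mathcal{D}}[u]\|$ for all $j$, check feasibility for \eqref{eq: G}, and evaluate the objective. The only extra care you take, the case $\EE_{\mathcal{D}}[u]=\vec{0}$, is in fact vacuous since all users are unit vectors in $\mathbb{R}^D_{\ge 0}$, and you are right that neither the finite-support hypothesis nor $P\ge|\text{supp}(\mathcal{D})|$ is needed here.
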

\begin{proof}
    It suffices to construct a feasible solution to \eqref{eq: G} which obtains an objective of $\|\EE_{\mathcal{D}}[u]\|$. We can take $p_j = \EE_{\mathcal{D}}[u] / \|\EE_{\mathcal{D}}[u]\|$ for all $j \in [P]$. 
\end{proof}

Now, we prove Proposition \ref{prop:limitanalysis} using these lemmas. 
\begin{proof}[Proof of \Cref{prop:limitanalysis}]
The first part follows from Lemma \ref{lem: limFgeG} combined with Lemma \ref{lemma:upperbound}. The second part follows from the first part combined with    Lemma \ref{lemma:finitesupport}. The last part follows from the first part combined with \Cref{lemma:Pge1}.
\end{proof}

\subsection{Proof of Proposition \ref{Prop: upperboundwelfare} and Corollary \ref{cor:optimality}}\label{app: upperboundwelfare}

We first prove Proposition \ref{Prop: upperboundwelfare}.
\Propupperboundwelfare*
\begin{proof}[Proof of \Cref{Prop: upperboundwelfare}]
    By Cauchy-Schwarz Inequality and the upper bound on the producer effort in \Cref{lemma:naive-upper-bound}, we have 
    \[\EE\left[\langle u^t, p_{A^t}^t \rangle\right]\le \EE\left[\|u^t\|\cdot \|p_{A^t}^t  \|\right] =\EE[\|p_{A^t}^t  \|]\le \left(\frac{\beta}{1-\beta}\right)^{\frac{1}{c}}.\]
\end{proof}

Using Proposition \ref{Prop: upperboundwelfare} and Proposition \ref{prop:limitanalysis}, we prove Corollary \ref{cor:optimality}. 
\coroptimality*
\begin{proof}[Proof of Corollary \ref{cor:optimality}]
 By Proposition \ref{prop:limitanalysis}, the welfare $\lim_{c \rightarrow \infty} W(c, \beta, \mathcal{D}, P)$ is equal to $1$. Moreover, by Proposition \ref{Prop: upperboundwelfare}, any algorithm achieves welfare at most 
 \[\lim_{c \rightarrow \infty} \left(\frac{\beta}{1-\beta}\right)^{\frac{1}{c}} = 1\]
 welfare. This proves the desired bound. 
\end{proof}

\subsection{Proof of Proposition \ref{prop:2users}}\label{app:2users} 
\propusers*

We prove Proposition \ref{prop:2users}.
\begin{proof}

First, we prove the result for $P = 2$. Let $p_1^* $ and $ p_2^*$ be vectors that maximize \eqref{eq: F}:
\begin{equation*}
\begin{aligned}
&W(c, \beta, \mathcal{D}, P) :=  \max_{p_1, p_2} \mathbb{E} \Big[\max_{j\in[2]} \langle u, {p}_j\rangle \Big]\\
&\textrm{s.t. } \|{p}_j\|\le \beta^{\frac{1}{c}}\EE\Bigg[\frac{\mathbbm{1}\big [j=\arg\max_{j'\in [2]} \langle u, {p}_{j'}\rangle\big ]}{\big|\arg\max_{j'\in [2]} \langle u, {p}_{j'}\rangle\big|}\Bigg]^{\frac{1}{c}},\forall j\in[2].
\end{aligned} 
\end{equation*}

First, we show that either (A) ${p}^*_1 = u_1 \cdot (\beta/2)^{1/c}$ and  ${p}^*_2 = u_2 \cdot (\beta/2)^{1/c}$ (or vice versa) or (B) ${p}^*_1 = \frac{\mathbb{E}[u]}{\|\mathbb{E}[u]\|_2} \cdot \beta^{1/c}$ and ${p}^*_2 = 0$ (or vice versa). We split into two cases: (1) $\arg\max_{j\in[2]} \langle u_1, p_j^*\rangle\neq \arg\max_{j\in[2]} \langle u_2, p_j^*\rangle$ and (2) $\arg\max_{j\in[2]} \langle u_1, p_j^*\rangle = \arg\max_{j\in[2]} \langle u_2, p_j^*\rangle$. 

\paragraph{Case 1: $\arg\max_{j\in[2]} \langle u_1, p_j^*\rangle\neq \arg\max_{j\in[2]} \langle u_2, p_j^*\rangle$.} We assume WLOG that $1 = \arg\max_{j\in[2]} \langle u_1, p_j^*\rangle$ and $2 = \arg\max_{j\in[2]} \langle u_2, p_j^*\rangle$. The constraint in \eqref{eq: F} can be written as: $\|p_j^*\|\le (\beta/2)^{1/c}$ for $j\in[2]$.
This means that: 
\[\mathbb{E} \Big[\max_{j\in[2]} \langle u, {p}_j\rangle \Big]
= \frac{1}{2} \left(\langle u_1, {p}_1^*\rangle + \langle u_2, {p}_2^*\rangle \right)
\le \frac{1}{2}\left(\|p_1^*\| +\|{p}_2^*\|\right)\le (\beta/2)^{1/c}.\]
Equality holds if and only if $p_j^*$ is in the direction of $u_j$ and $\|p_j^*\| = (\beta/2)^{1/c}$ for $j\in[2]$: that is $p_1^*=u_1\cdot (\beta/2)^{1/c}$ and $p_2^*=u_2\cdot (\beta/2)^{1/c}$. Since $p_1^* $ and $ p_2^*$ maximize \eqref{eq: F}, this means that (A) holds. 
\paragraph{Case 2: $\arg\max_{j\in[2]} \langle u_1, p_j^*\rangle = \arg\max_{j\in[2]} \langle u_2, p_j^*\rangle$.} We assume WLOG that $1\in\arg\max_{j\in[2]} \langle u, p_j^*\rangle$ for both $u_1$ and $u_2$. Then \[\mathbb{E} \Big[\max_{j\in[2]} \langle u, {p}_j\rangle \Big] = \EE[\langle u, p_1^*\rangle]\le \|p_1^*\| \cdot \|\EE[u]\|_2,\] where the equality holds if and only if $p_1^*$ is in the direction of $\EE[u]$.
The constraints can be written as $||p_2^*|| =0$ and 
\[\|{p}_1^*\|\le \beta^{\frac{1}{c}}\EE\Bigg[\frac{\mathbbm{1}\big [j=\arg\max_{j'\in [2]} \langle u, {p}_{j'}^*\rangle\big ]}{\big|\arg\max_{j'\in [2]} \langle u, {p}^*_{j'}\rangle\big|}\Bigg]^{\frac{1}{c}}\le \beta^{\frac{1}{c}},\] where the equality holds if and only if $\big|\arg\max_{j'\in [2]} \langle u, {p}^*_{j'}\rangle\big|=1$. 
Since $p_1^* $ and $ p_2^*$ maximize \eqref{eq: F}, this means that (B) holds.

Thus, it suffices to compare the welfare at criteria set (A) and at criteria set (B). We see the welfare at (A) is equal to $(\beta/2)^{1/c}$, whereas the welfare at the second criteria set is equal to $\beta^{1/c} \|\mathbb{E}[u]\|_2 = \beta^{1/c} \cdot \cos(\theta/2)$. Thus, the first set of criteria are optimal if and only if:
\[ (\beta/2)^{1/c} \ge \beta^{1/c} \cdot \cos(\theta/2),\]
which can be rewritten as:
\[ \cos(\theta/2) \le 2^{-1/c}\]
as desired. 

We now prove the result for any $P \ge 2$ by showing that the optimal welfare for any $P\ge 2$ and $P=2$ are equal (i.e., $W(c, \beta, \mathcal{D}, P)= W(c, \beta,\mathcal{D}, 2)$). For each $P \ge 2$, let $\bold{p}^*(P)= (p_1^*, p_2^*, \ldots, p_P^*)$ be any optimal solution of \eqref{eq: F}. WLOG, for each $k\in[2]$, we assume producer $k$ maximizes the user $k$'s utility among all producers. First, we show that $W(c, \beta, \mathcal{D}, P)\le W(c, \beta,\mathcal{D}, 2)$. Let $\bar{\bold{p}} = (p_1^*, p_2^*)$. It is easy to see that $\bar{\bold{p}}$ is a feasible solution to the optimization program \eqref{eq: F} with $P = 2$. By definition, \[W(c, \beta, \mathcal{D}, P)=\mathbb{E} \Big[\max_{j\in[P]} \langle u, {p}_j^*\rangle \Big] = \mathbb{E} \Big[\max_{j\in[2]} \langle u, {p}_j^*\rangle \Big] = \mathbb{E} \Big[\max_{j\in[2]} \langle u, \bar{p}_j\rangle \Big]\le W(c, \beta, \mathcal{D}, 2).\] Next, we show that $W(c, \beta, \mathcal{D}, 2)\le W(c, \beta,\mathcal{D}, P)$. We construct $\tilde{\bold{p}}=(\tilde{p}_1, \ldots, \tilde{p}_P)$ by extending $\bold{p}^*(2)=(p_1^*, p_2^*)$ to $(p_1^*, p_2^*, 0, \ldots, 0) :=\tilde{\bold{p}}$. It is easy to see that $\tilde{\bold{p}}$ satisfies the constraint in \eqref{eq: F} for $P$ producers. By definition, \[W(c, \beta, \mathcal{D}, 2)=\mathbb{E} \Big[\max_{j\in[2]} \langle u, {p}_j^*\rangle \Big] = \mathbb{E} \Big[\max_{j\in[P]} \langle u, \tilde{p}_j\rangle \Big] \le W(c, \beta,\mathcal{D}, P).\] Combining these two inequalities, we have $W(c, \beta, \mathcal{D}, P)= W(c, \beta,\mathcal{D}, 2)$ and conclude our result holds for any $P\ge 2$.

\end{proof}

\section{Changing the learning rate schedule}\label{subsec:part2learningrate}

Given that the upper bounds in Theorem \ref{thm:upperboundfullinfo} increase with the learning rate, this would suggest that increasing the learning rate $\eta$ may directly improve the equilibrium content quality, without having to incorporate punishment. However, we show that increasing $\eta$ can lead to issues with the {existence of symmetric pure strategy equilibria}, which complicates the analysis of these algorithms. 

We show in Proposition \ref{prop:c1changeLR} that even for the simplest case of $c = 1$ and $D = 1$, either all producers produce content with quality level zero at equilibrium, or
symmetric pure strategy Nash equilibria do not exist (all equilibria are mixed or asymmetric). 
\begin{proposition}
\label{prop:c1changeLR}
Suppose there are $P>2$ producers, and that the dimension $D = 1$ and cost function exponent $c = 1$. Suppose that the platform runs \texttt{LinHedge} with fixed learning rate schedule $\eta_t = \eta>0$ for $1 \le t \le T$. Then, either there is a unique symmetric pure strategy Nash equilibrium at $\mathbf{p} = [\vec{0}, \ldots, \vec{0}]$, or no symmetric pure strategy Nash equilibrium exists. 
\end{proposition}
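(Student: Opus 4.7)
The plan is to reduce the NE check to a one-dimensional analysis of each time coordinate. Fix a candidate symmetric pure strategy profile $\mathbf{p} = (p^1, \ldots, p^T)$ and a producer $j$. For any $s$, let $g(x)$ be producer $j$'s expected utility as a function of $p_j^s = x$, with the remaining coordinates $p_j^\tau = p^\tau$ ($\tau \ne s$) and every other producer playing $\mathbf{p}$. A necessary condition for $\mathbf{p}$ to be a symmetric PSE is that $x = p^s$ globally maximizes $g$ on $[0, \infty)$ for every $s$.

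Because $D = 1$ and $\|u\| = 1$, each arriving user equals $u^t = 1$, so the common factor $e^{\eta \sum_{\tau < t} p^\tau}$ cancels in the \texttt{LinHedge} numerator and denominator. Thus for $t \ge s+1$,
\[
\mathbb{P}[A^t = j \mid \mathcal{H}^{t-1}, u^t] = (1-\gamma)\, h(x) + \frac{\gamma}{P}, \qquad h(x) := \frac{e^{\eta(x - p^s)}}{e^{\eta(x - p^s)} + (P-1)},
\]
while for $t \le s$ the probability equals $1/P$ and is independent of $x$. Since $c = 1$ makes the cost linear in $x$,
\[
g(x) = C + A\, h(x) - \beta^s x, \qquad A := (1-\gamma)\sum_{t=s+1}^{T} \beta^t,
\]
with $C$ independent of $x$. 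A direct computation gives
\[
h''(x) = \eta^2 \,\frac{(P-1)\, e^{\eta(x - p^s)}\bigl[(P-1) - e^{\eta(x - p^s)}\bigr]}{\bigl(e^{\eta(x - p^s)} + (P-1)\bigr)^3},
\]
so $h''(p^s) > 0$ exactly when $P - 1 > 1$, i.e., when $P > 2$. If $p^s > 0$, then $x = p^s$ is an interior maximizer of $g$ and must satisfy $g''(p^s) \le 0$; but $g''(p^s) = A\, h''(p^s) > 0$, a contradiction (note $A > 0$ for $s < T$, and for $s = T$ the function $g$ is strictly decreasing and forces $p^T = 0$ trivially). Hence every $p^s = 0$, so the only candidate symmetric PSE is $\mathbf{p} = \vec{0}$.

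Finally, whether $\vec{0}$ is itself an NE reduces to whether $\phi(x) := A\, h(x) - \beta^s x$ (with $p^s = 0$) attains its maximum on $[0, \infty)$ at $x = 0$ for every $s$ — a condition that depends on $\eta, \beta, \gamma, P, T$. If it holds at every $s$, then $\vec{0}$ is the unique symmetric PSE; otherwise no symmetric PSE exists, producing the stated dichotomy. The main obstacle is verifying the cancellation that reduces $g$ to a single sigmoid in $\eta(x - p^s)$ uniformly across all future rounds $t > s$, since that collapse is what makes $g''(p^s)$ easy to sign; once that reduction is in hand, the sigmoid is convex precisely in the region that contains $x = p^s$ when $P > 2$, and the rest follows from the standard interior second-order necessary condition.
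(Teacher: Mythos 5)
Your proof is correct and reaches the stated dichotomy by a route that is related to, but cleaner than, the paper's. The paper also works with per-coordinate deviations at a symmetric profile, but it computes the first- and second-order partial derivatives of the utility at the symmetric point and then splits into three cases according to whether the fixed $\eta$ is below, above, or equal to a threshold $\eta^{\text{threshold}} = \frac{P^2(1-\beta)}{(1-\gamma)(P-1)\beta(1-\beta^{T-1})}$: for $\eta<\eta^{\text{threshold}}$ the first-order condition rules out any nonzero symmetric profile (leaving $\vec 0$ as a boundary candidate), for $\eta>\eta^{\text{threshold}}$ even $\vec 0$ fails, and only in the knife-edge case $\eta=\eta^{\text{threshold}}$ does the paper invoke the positive second derivative, which is the only place $P>2$ is used there. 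You instead exploit the symmetric-opponent cancellation to collapse every round $t>s$ into the single sigmoid $h(x)=e^{\eta(x-p^s)}/\bigl(e^{\eta(x-p^s)}+(P-1)\bigr)$ and then use only the second-order necessary condition: since $h''(p^s)=\eta^2(P-1)(P-2)/P^3>0$ whenever $P>2$, independently of $\eta$, no coordinate $p^s>0$ with $s<T$ can be an interior maximizer of the one-dimensional deviation problem, and $p^T=0$ is forced by the cost. This avoids the case split on $\eta$ entirely and yields the dichotomy directly (any symmetric PSE must be $\vec 0$, so either $\vec 0$ is an equilibrium or none exists); what it does not give, and the paper's finer analysis does, is information about when each branch occurs (e.g., that $\vec 0$ cannot be an equilibrium once $\eta\ge\eta^{\text{threshold}}$). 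One small caveat: in your last paragraph, whether $\vec 0$ is an NE does not literally ``reduce to'' the per-coordinate condition on $\phi$ — single-coordinate optimality is necessary, but sufficiency would require handling multi-coordinate deviations, whose effects on the softmax interact through the cumulative sums. This is immaterial here, since the proposition only asserts the dichotomy, which follows from $\vec 0$ being the unique candidate.
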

\noindent Proposition \ref{prop:c1changeLR} implies that no value of $\eta$ will induce a symmetric pure strategy equilibrium with nonzero quality level.

The intuition for why symmetric pure strategy equilibria stop existing for large $\eta$ is that the distribution of the selected producer $A^t$ at time $t$ approaches a point mass at $\arg\max_{j'} \left(\sum_{\tau=1}^{t-1} p_{j'}^{\tau} \right)$. The game is thus discontinuous and symmetric pure strategy equilibria are no longer guaranteed to exist. This bears resemblance to the lack of existence of pure strategy equilibrium in the static version of the content producer game \citep{JGS22}.

In light of \cref{prop:c1changeLR}, one needs to move outside of the class of symmetric pure strategy equilibria in order to analyze the effectiveness of increasing the learning rate $\eta$, which we expect would complicate the analysis. We leave a more in-depth analysis of this regime to future work. 

We prove Proposition \ref{prop:c1changeLR}.
\begin{proof}[Proof of Proposition \ref{prop:c1changeLR}]
 
We first calculate the utility function of the producer for LinHedge with fixed learning rate, $D=1$ and $c=1$ setting. Recalling \cref{sec:model}, we have $\|p_j^t\|^c=p_j^t$ since $p_j^t\in \mathbb{R}_{\ge 0}^D$ and user vector $u^t=1$ is fixed, and thus
\begin{align}
    u_j(\mathbf{p}_j \mid \mathbf{p}_{-j}) &= \mathbb{E}\left[\sum_{t=1}^T \beta^{t-1} \left(\mathbbm{1}[A^t=j] - \|p_j^t\|^{c} \right)\right]\\
    &=\EE \left[\sum_{t=1}^T \beta^{t-1} \left(\PP\left[A^t=j\mid \mathcal{H}_\text{full}^{t-1},u^t\right]- p_j^t \right)\right]\\
    &=\sum_{t=1}^T \beta^{t-1} \left((1-\gamma) \frac{e^{\eta \sum_{\tau=0}^{t-1} p^{\tau}_{j}}}{\sum_{j' \in [P]} e^{\eta \sum_{\tau=0}^{t-1} p^{\tau}_{j'} }} + \frac{\gamma}{P}-p^t_j\right).
\end{align}
We will characterize the symmetric pure-strategy equilibria by analyzing the first-order and second-order conditions of the utility function $u_j(\mathbf{p}_j \mid \mathbf{p}_{-j})$. 

Before stating the first-order and second-order conditions for $u$, we first analyze the first-order and second-order derivatives of the following functions. For each $t \in [T]$, let $f^t(\mathbf{p}_j|\mathbf{p}_{-j})$ be a function of $\mathbf{p}_j$ and $\mathbf{p}_{-j}$, defined as follows: 
\begin{align}
    f^t(\mathbf{p}_j|\mathbf{p}_{-j}) &:= (1-\gamma) \frac{e^{\eta \sum_{\tau=0}^{t-1} p^{\tau}_{j}}}{\sum_{j' \in [P]} e^{\eta \sum_{\tau=0}^{t-1} p^{\tau}_{j'} }}.
\end{align}
The first-order partial derivative of function $f^t(\mathbf{p}_j|\mathbf{p}_{-j})$ with respect to $p_j^s, \forall s, t\in[T]$ can be computed as follows. For $t\le s$, we have $\frac{\partial f^t}{\partial p_j^s}(\mathbf{p}_j|\mathbf{p}_{-j}) = 0$, and for $t>s$, we have
\begin{align}
    \frac{\partial f^t}{\partial p_j^s}(\mathbf{p}_j|\mathbf{p}_{-j}) = (1-\gamma) \eta \frac{e^{\eta \sum_{\tau=0}^{t-1} p^{\tau}_{j} } \sum_{j'\neq j} e^{\eta \sum_{\tau=0}^{t-1}  p^{\tau}_{j'} }}{\left(e^{\eta \sum_{\tau=0}^{t-1} p^{\tau}_{j} }+\sum_{j'\neq j} e^{\eta \sum_{\tau=0}^{t-1} p^{\tau}_{j'}}\right)^2}.
\end{align}
The second-order partial derivative of the function $f^t(\mathbf{p}_j|\mathbf{p}_{-j})$ with respect to $p_j^s, \forall s, t\in[T]$ are as follows. For $t\le s$, $\frac{\partial^2 f^t}{(\partial p_j^s)^2}(\mathbf{p}_j|\mathbf{p}_{-j}) = 0$, and for $t>s$, 
\begin{align}
    \frac{\partial^2 f^t}{(\partial p_j^s)^2}(\mathbf{p}_j|\mathbf{p}_{-j}) = (1-\gamma) \eta^2 \frac{e^{\eta \sum_{\tau=0}^{t-1} p^{\tau}_{j} } \sum_{j'\neq j} e^{\eta \sum_{\tau=0}^{t-1}  p^{\tau}_{j'} } \left(\sum_{j'\neq j} e^{\eta \sum_{\tau=0}^{t-1}  p^{\tau}_{j'} } - e^{\eta \sum_{\tau=0}^{t-1} p^{\tau}_{j} }\right)}{\left(e^{\eta \sum_{\tau=0}^{t-1} p^{\tau}_{j} }+\sum_{j'\neq j} e^{\eta \sum_{\tau=0}^{t-1} p^{\tau}_{j'}}\right)^3}.
\end{align}
At a symmetric solution $\mathbf{p}^*$, we have $(p^\tau_j)^*=(p^\tau_{j'})^*$ for any $0\le \tau\le t-1$ and thus $e^{\eta \sum_{\tau=0}^{t-1}  (p^{\tau}_{j})^* }=\frac{1}{P}\sum_{j'\neq j} e^{\eta \sum_{\tau=0}^{t-1} (p^{\tau}_{j'})^* }$. Plugging in $\mathbf{p}^*$, we calculate the partial derivatives of the function $f^t$ with respect to $p_j^s$ at a symmetric solution for any $t>s$:
\begin{align}
    \frac{\partial f^t}{\partial p_j^s}(\mathbf{p}^*_j|\mathbf{p}^*_{-j}) &= (1-\gamma) \eta\frac{P-1}{P^2}\\
    \frac{\partial^2 f^t}{(\partial p_j^s)^2}(\mathbf{p}^*_j|\mathbf{p}^*_{-j}) &= (1-\gamma) \eta^2 \frac{(P-1)(P-2)}{P^3},
\end{align}
which turn out to be constants.

Using the first-order and second-order derivatives for $f^t$, 
we can calculate the partial derivatives of the function $ u$ with respect to $p_j^s$ at symmetric pure Nash equilibria (PNE) $p^*$ for all $s\in[T]$:
\begin{align}
    \frac{\partial u}{\partial p_j^s}(\mathbf{p}_j^*|\mathbf{p}_{-j}^*) 
    &=\sum_{t=1}^T\beta^{t-1} \frac{\partial f}{\partial p_j^t}(\mathbf{p}_j^*\mid \mathbf{p}_{-j}^*) - \beta^{s-1}\\
    &=\sum_{t=s+1}^T \beta^{t-1} (1-\gamma)\eta \frac{P-1}{P^2} - \beta^{s-1}\\
    &=\beta^{s-1}\left( (1-\gamma)\eta \frac{P-1}{P^2}\frac{\beta(1-\beta^{T-s})}{1-\beta} -1\right)
\end{align}
\begin{align}
    \frac{\partial^2 u}{(\partial p_j^s)^2}(\mathbf{p}_j^*|\mathbf{p}_{-j}^*) &= \sum_{t=1}^{T} \beta^{t-1} \frac{\partial^2 f^t}{(\partial p_j^s)^2}(\mathbf{p}^*_j|\mathbf{p}^*_{-j})\\
    &=(1-\gamma)\eta^2 \frac{(P-1)(P-2)}{P^3} \frac{\beta^s(1-\beta^{T-s})}{1-\beta} 
\end{align}

These first-order and second-order derivatives enable us to characterize the symmetric pure strategy equilibria.  
Recall that $\eta^{\text{threshold}} = \frac{P^2(1-\beta)}{(1-\gamma)(P-1)\beta(1-\beta^{T-1})}$. We split into three cases: $\eta < \eta^{\text{threshold}}$, $\eta > \eta^{\text{threshold}}$, and $\eta = \eta^{\text{threshold}}$.  

For $\eta<\eta^{\text{threshold}}$, we have $\frac{\partial u}{\partial p_j^s}(\mathbf{p_j}^*|\mathbf{p}^*_{-j})\le \frac{\partial u}{\partial p_j^1}(\mathbf{p_j}^*|\mathbf{p}^*_{-j})< 0$ for all $s\in [T]$. This means that if $\mathbf{p}^* \neq 0$, then $\mathbf{p}^*$ is not a local optimal of 
\[\max_{\mathbf{p} \in \mathbb{R}^T_{\ge 0}} u(\mathbf{p} | \mathbf{p}^{*}_{-j})\]
and thus is not a global optimal either. This means that for any $\mathbf{p}^* \neq 0$, the symmetric solution where all producers choose $\mathbf{p}^*$ is not an equilibrium. The only remaining case is $\mathbf{p}^* = \vec{0}$ where the boundary condition occurs. This creates the possibility of a symmetric pure strategy equilibrium at $\vec{0}$. 

For $\eta>\eta^{\text{threshold}}$, we have $\frac{\partial u}{\partial p_j^1}(\mathbf{p_j}^*|\mathbf{p}^*_{-j})> 0$. For any $\mathbf{p}^*$, this means that 
then $\mathbf{p}^*$ is not a local optimal of 
\[\max_{\mathbf{p} \in \mathbb{R}^T_{\ge 0}} u(\mathbf{p} | \mathbf{p}^{*}_{-j})\]
and thus is not a global optimal either. This means that for any $\mathbf{p}^*$, the symmetric solution where all producers choose $\mathbf{p}^*$ is not an equilibrium. 

For $\eta=\eta^{\text{threshold}}$, we have $\frac{\partial u}{\partial p_j^1}(\mathbf{p_j}^*|\mathbf{p}^*_{-j}) = 0$. We thus turn to the second-order derivative and see that $\frac{\partial^2 u}{(\partial p_j^1)^2}(\mathbf{p}_j^*|\mathbf{p}_{-j}^*)= (1-\gamma)\eta^2 \frac{(P-1)(P-2)}{P^3} \frac{\beta^s(1-\beta^{T-1})}{1-\beta} >0$. The second-order derivative means that $\mathbf{p}^*$ is not a local maximum of \[\max_{\mathbf{p} \in \mathbb{R}^T_{\ge 0}} u(\mathbf{p} | \mathbf{p}^{*}_{-j}).\] This again means that for any $\mathbf{p}^*$, the symmetric solution where all producers choose $\mathbf{p}^*$ is not an equilibrium. 
\end{proof}


\end{document}